\documentclass[a4paper, 10pt]{article}
\usepackage[margin=.95in,dvips]{geometry}
\usepackage[bottom]{footmisc}
\emergencystretch=1em
\addtolength{\skip\footins}{2pc plus 3pt}
\usepackage[utf8]{inputenc}
\usepackage{lmodern}
\usepackage{indentfirst}
\usepackage[affil-it]{authblk}
\usepackage{tocloft}
\setlength{\cftbeforetoctitleskip}{0pt}
\setlength{\cftbeforesecskip}{5pt}
\usepackage{titlesec}
 \titleformat{\subparagraph}[runin]{\normalfont}{\thesubparagraph}{0pt}{\underline}[.]
 \titleformat{\paragraph}[hang]{\normalfont\bfseries}{\theparagraph}{0pt}{}
\usepackage{titletoc}


\usepackage[textsize=scriptsize]{todonotes} 
\setlength{\marginparwidth}{1in}
\makeatletter
\renewcommand{\todo}[2][]{\tikzexternaldisable\@todo[#1]{#2}\tikzexternalenable}
\makeatother

\usepackage{standalone}
\usepackage{subfiles}

\usepackage[hidelinks,bookmarksopen,bookmarksdepth=3]{hyperref}
\usepackage{enumitem}
\setlist[enumerate]{itemsep=2.0pt plus 1.0 pt minus 0.5pt, topsep=4.0pt plus 2.0 pt minus 1.0pt}
\setlist[itemize]{itemsep=2.0pt plus 1.0 pt minus 0.5pt, topsep=4.0pt plus 2.0 pt minus 1.0pt}
\usepackage{tikz}
\usetikzlibrary{external}
\tikzexternalize
\usepackage{calc}
\usepackage{graphicx}
\usepackage{subcaption}
\captionsetup[figure]{labelfont=bf,justification=centering}
\usepackage{epsfig}
\usepackage{color, soul}

\usepackage{marginnote}

\usepackage{color,soulutf8}

\usepackage{amsmath}
\usepackage{amsthm, slashed}
\usepackage[capitalize]{cleveref}
\usepackage{amssymb}
\usepackage{array}
\usepackage{xfrac}
\usepackage{bbm}
\usepackage{bm}
\usepackage{marvosym}
\usepackage{revsymb} 
\usepackage{mathtools}
\usepackage{mathrsfs}
\usepackage{upgreek}
\usepackage{accents}
\usepackage{textcomp}
\usepackage{slashed} 
\usepackage{harmony} 
\usepackage{wasysym} 
\wasyfamily
\DeclareFontShape{U}{wasy}{b}{n}{ <-10> ssub * wasy/m/n
 <10> <10.95> <12> <14.4> <17.28> <20.74> <24.88>wasyb10 }{}
\DeclareMathAlphabet\mathbfcal{OMS}{cmsy}{b}{n}
\newcommand\numberthis{\addtocounter{equation}{1}\tag{\theequation}}

\renewcommand{\Re}{\operatorname{Re}}
\renewcommand{\Im}{\operatorname{Im}}
\DeclareMathOperator{\sign}{sign}
\allowdisplaybreaks

\numberwithin{equation}{section}

\newtheorem{bigtheorem}{Theorem}

\setcounter{introtheorem}{-1}

\newtheorem{theorem}{Theorem}[section]
\newtheorem*{theorem*}{Theorem}

\newtheorem*{conjecture*}{Conjecture}
\newtheorem{corollary}[theorem]{Corollary}
\newtheorem*{corollary*}{Corollary}

\newtheorem{definition}[theorem]{Definition}
\newtheorem{proposition}[theorem]{Proposition}
\newtheorem{lemma}[theorem]{Lemma}
\newtheorem{remark}[theorem]{Remark}

\newtheoremstyle{mystyle}
  {}
  {}
  {\itshape}
  {}
  {\bfseries}
  {.}
  { }
  {\thmname{#1}\thmnumber{ #2}\thmnote{ (#3)}}

\theoremstyle{mystyle}

\usepackage{filecontents}
\begin{filecontents}{unpub.bib}
@article{TeixeiradaCosta2019,
label={TdC20},
abstract = {We prove that there are no exponentially growing modes nor modes on the real axis for the Teukolsky equation on Kerr black hole spacetimes, both in the extremal and subextremal case. We also give a quantitative refinement of mode stability. As an immediate application, we show that the transmission and reflection coefficients of the scattering problem are bounded, independently of the specific angular momentum of the black hole, in any compact set of real frequencies excluding zero frequency and the superradiant threshold. While in the subextremal setting these results were known previously, the extremal case is more involved and has remained an open problem. Ours are the first results outside axisymmetry and could serve as a preliminary step towards understanding boundedness, scattering and decay properties of general solutions to the Teukolsky equation on extremal Kerr black holes.},
archivePrefix = {arXiv},
arxivId = {1910.02854},
author = {{Teixeira da Costa}, Rita},
doi = {10.1007/s00220-020-03796-z},
eprint = {1910.02854},
file = {:Users/rita/Dropbox/References/Mendeley/Teixeira da Costa - 2020 - Mode Stability for the Teukolsky Equation on Extremal and Subextremal Kerr Spacetimes.pdf:pdf},
issn = {0010-3616},
journal = {Commun. Math. Phys.},
month = {aug},
number = {1},
pages = {705--781},
title = {{Mode Stability for the Teukolsky Equation on Extremal and Subextremal Kerr Spacetimes}},
url = {http://arxiv.org/abs/1910.02854 http://link.springer.com/10.1007/s00220-020-03796-z},
volume = {378},
year = {2020}

}
@unpublished{SRTdC2021,
label ={SRTdC21},
author = {Shlapentokh-Rothman, Yakov and {Teixeira da Costa}, Rita},
mendeley-groups = {Kerr/Spin-2},
title = {{Boundedness and decay for the Teukolsky equation on Kerr in the full subextremal range $|a|<M$: physical space analysis}},
}
@article{SRTdC2020,
label={SRTdC20},
abstract = {This paper is the first of a series regarding the Teukolsky equation of spin $\pm 1$ and spin $\pm 2$ on Kerr backgrounds in the full subextremal range of parameters $|a|<M$. In the present paper, we study fixed frequency solutions of the transformed system of equations introduced by Dafermos, Holzegel and Rodnianski, obtaining estimates which are uniform in the separation parameters. A corollary of our result, to be laid out in the second paper of the series, is that solutions of the Teukolsky equation on subextremal Kerr arising from regular initial data remain bounded and decay in time. This is a key step in establishing the full linear stability of Kerr under electromagnetic and gravitational perturbations. Our estimates can also be applied to understanding more delicate features of the Teukolsky equation, such as their scattering properties.},
archivePrefix = {arXiv},
arxivId = {2007.07211},
author = {Shlapentokh-Rothman, Yakov and {Teixeira da Costa}, Rita},
eprint = {2007.07211},
mendeley-groups = {Notable sol of EVE/Kerr/Exterior stability/Spin-1,Notable sol of EVE/Kerr/Exterior stability/Spin-2},
month = {jul},
title = {{Boundedness and decay for the Teukolsky equation on Kerr in the full subextremal range $|a|<M$: frequency space analysis}},
url = {http://arxiv.org/abs/2007.07211},
year = {2020}
}
@article{CasalsTdC2021,
label={CTdC21},
abstract = {The Teukolsky Master Equation describes the dynamics of massless fields with spin on a Kerr black hole. Under separation of variables, spin-reversal for this equation is accomplished through the so-called Teukolsky--Starobinsky identities. These identities are associated to the so-called Teukolsky--Starobinsky constants, which are spin-dependent. We collect some properties of the Teukolsky--Starobinsky constants and dispel some myths present in the literature. We show that, contrary to popular belief, these constants can be negative for spin larger than 2. Such fields thus exhibit a novel form of energy amplification which occurs for non-superradiant frequencies.},
archivePrefix = {arXiv},
arxivId = {2102.06734},
author = {Casals, Marc and {Teixeira da Costa}, Rita},
doi = {10.1088/1361-6382/ac11a8},
eprint = {2102.06734},
issn = {0264-9381},
journal = {Class. Quantum Gravity},
month = {aug},
number = {16},
pages = {165016},
title = {{The Teukolsky–Starobinsky constants: facts and fictions}},
url = {http://arxiv.org/abs/2102.06734 https://iopscience.iop.org/article/10.1088/1361-6382/ac11a8},
volume = {38},
year = {2021}
}
\end{filecontents}

%


\newcommand{\mc}[1]{\mathcal{#1}}
\newcommand{\mr}[1]{\mathrm{#1}}

\newcommand{\p}{\partial}
\newcommand{\lp}{\left}
\newcommand{\rp}{\right}


\newcommand{\swei}[2]{#1^{[{#2}]}}

\newcommand{\smlambda}[2]{#1_{m,\lambdabar}^{[{#2}],\,a,\omega}}
\newcommand{\smlambdaXi}[2]{#1_{m,\lambdabar}^{[{#2}],\,\Xi,a,\omega}}

\usepackage{comment}


\hyphenation{Teu-kol-sky}


\title{\textbf{Hidden spectral symmetries and mode stability}\\ \textbf{ of subextremal Kerr(-de Sitter) black holes}}

\author[1,2]{{\Large Marc Casals}}
\author[3,4]{{\Large  Rita \mbox{Teixeira da Costa}\vspace{0.4cm}}}

\affil[1]{\small  Centro Brasileiro de Pesquisas Físicas (CBPF),
Rio de Janeiro,
CEP 22290-180, Brazil  }
\affil[2]{School of Mathematics and Statistics, University College Dublin, Belfield, Dublin 4, Ireland \protect \\
{\small\tt{mcasals@cbpf.br}, \tt{marc.casals@ucd.ie}}\vspace{0.2cm}\ }

\affil[3]{\small 
Princeton Gravity Initiative \& Department of Mathematics, Princeton University, NJ 08540, United States}
\affil[4]{\small 
University of Cambridge, Center for Mathematical Sciences, Cambridge CB3 0WA, United Kingdom\protect \\
{\small\tt{rita.t.costa@princeton.edu}, \tt{rita.t.costa@dpmms.cam.ac.uk}}}

\date{\today}

\begin{document}
\maketitle

\vspace{-0.5cm}

\begin{abstract}
We uncover hidden spectral symmetries of the Teukolsky equation in Kerr(-de Sitter) black holes, recently conjectured by Aminov, Grassi and Hatsuda (Ann. Henri Poincaré, and Gen.\ Relativ.\ Grav., 53(10):93, 2021) in the zero cosmological constant case. Using these symmetries, we provide a new, simpler proof of mode stability for subextremal Kerr black holes. We also present a partial mode stability result for Kerr-de Sitter black holes.
\end{abstract}

\section{Introduction}

In General Relativity, a vacuum spacetime is a $(1+3)$-dimensional Lorentzian manifold solving the Einstein equations
\begin{equation}
\mr{Ric}(g)=\Lambda g\,, \label{eq:Einstein-cosmological-constant}
\end{equation}
where $g$ is the Lorentzian metric and $\Lambda$ is the cosmological constant. Here, we will focus especially on the $\Lambda\geq 0$ case. Of paramount importance are the Kerr and Kerr-de Sitter,  black hole families of solutions to \eqref{eq:Einstein-cosmological-constant} with $\Lambda=0$ and $\Lambda>0$, respectively. These are parametrized by their mass $M>0$ and a specific angular momentum $a\in\mathbb{R}$ which is constrained in terms of $M$ and $\Lambda$; for instance, in the case $\Lambda=0$, Kerr black holes verify the bound $|a|\leq M$.

As Kerr(-de Sitter) black holes are stationary spacetimes, they correspond to \textit{equilibrium states} for \eqref{eq:Einstein-cosmological-constant}, and one would like to determine whether they are \textit{stable} or \textit{unstable} equilibria \cite{Regge1957}. A great deal of progress on the problem has been achieved for the spherically symmetric non-rotating ($a=0$) subfamily and perturbations thereof. In the $\Lambda>0$ case, Hintz and Vasy showed a full nonlinear stability statement when the black hole parameters are such that $|a|\ll M,\Lambda$. In the more nuanced $\Lambda=0$ setting, a complete picture of the nonlinear stability of the $a=0$ subfamily was only very recently established by Dafermos, Holzegel, Rodnianski and Taylor in \cite{Dafermos2021}; see also \cite{Klainerman2021} for some progress in the direction of an extension to $|a|\ll M$ black holes. 

Outside these special classes, stability of Kerr(-de Sitter) black holes has remained an open question. Nevertheless, the previous works lay out a clear roadmap for investigating it. The key step in the program is to understand the so-called Teukolsky equation with $s=\pm 2$, as it describes the dynamics of some \textit{gauge-invariant} curvature components in the linearized Einstein equations around Kerr(-de Sitter) black holes. The Teukolsky equation was first obtained in the $\Lambda=0$ case by Teukolsky \cite{Teukolsky1973}, and later for $\Lambda>0$ in \cite{Khanal1983}. For $\Lambda\geq 0$, writing $\Xi=1+a^2\Lambda/3$,  this equation takes the form
\begin{equation}
\begin{split}
\Box_g\upalpha^{[s]} &+\frac{s}{\rho^2\Xi^2}\frac{d\Delta}{dr}\p_r\upalpha^{[s]} +\frac{2s}{\rho^2\Xi}\lp[\Xi\frac{a}{2\Delta}\frac{d\Delta}{dr}+i\frac{\cos\theta}{\sin^2\theta}-\frac{ia^2\Lambda}{3\Delta_\theta}\cos\theta\rp]\p_\phi\upalpha^{[s]} 
\\
&+\frac{2s}{\rho^2}\lp[\frac{(r^2+a^2)}{2\Delta}\frac{d\Delta}{dr}-2r -ia\frac{\cos\theta}{\Delta_\theta}\rp]\p_t\upalpha^{[s]} \\
&+\frac{s}{\rho^2\Xi^2}\lp[1-\frac{a^2\Lambda}{3}-\frac{6(3+2s)r^2}{\Lambda}-\Xi^2 s\frac{\cot^2\theta}{\Delta_\theta}\rp]\upalpha^{[s]} =\frac{2\Lambda}{3}\upalpha^{[s]}\,, 
\end{split}\label{eq:Teukolsky-equation-intro}
\end{equation}
in Boyer-Lindquist-type coordinates, where  $\Delta$ is a function of $r$ given by 
\begin{equation}
\Delta=(r^2+a^2)\lp(1-\frac{\Lambda r^2}{3}\rp)-2Mr\,; \label{eq:Delta-intro}
\end{equation}
see already Sections~\ref{sec:geometry} and \ref{sec:geometry-KdS} for the definitions of $\Delta_\theta$ and $\rho$. In \eqref{eq:Teukolsky-equation-intro}, $\Box_g$ is the covariant wave operator on the fixed Kerr(-de Sitter) metric. The parameter $s$ is called a spin-weight taking values in $\frac12\mathbb{Z}$. Aside from the important case $s=\pm 2$ concerning gravitational perturbations, \eqref{eq:Teukolsky-equation-intro} gives the dynamics of some gauge-invariant electromagnetic components in the linearized Maxwell equations if $s=\pm 1$, and describes perturbations by Dirac fields if $s=\pm \frac12$ and by conformal scalar fields if $s=0$. Note that a conformal scalar field is massless in the $\Lambda=0$ setting, but has a specific Klein--Gordon mass in the $\Lambda\neq 0$ case, see Section~\ref{sec:radial-ODE-KdS} for further insight regarding our choice of mass.

If Kerr-(de Sitter) black holes are \textit{nonlinearly stable}, the most basic statement we can hope to prove for \eqref{eq:Teukolsky-equation-intro} is that it is \textit{modally stable}, i.e.\ that there are no separable solutions to \eqref{eq:Teukolsky-equation-intro} which are exponentially growing or bounded but non-decaying in time. By separable solution we mean a solution of the form
\begin{align} \label{eq:separable-ansatz-intro}
\swei{\upalpha}{s}(t,r,\theta,\phi)=e^{-i\omega t}e^{im\phi}{S}(\theta) {\upalpha}(r)\,,
\end{align}
where $S$ and $\upalpha$ satisfy, respectively, an angular ODE and a radial ODE with suitable boundary conditions; note that $\swei{\upalpha}{s}$ is  exponentially growing if $\Im\omega>0$ and bounded, non-decaying in time if $\omega\in\mathbb{R}$.  Our motivation for studying solutions as in \eqref{eq:separable-ansatz-intro} comes from  Carter's result \cite{Carter1968} that $\Box_g$, and hence \eqref{eq:Teukolsky-equation-intro}, is separable.

The only systematic way of establishing mode stability for a PDE is to find a conserved \textit{coercive} energy. In a stationary spacetime such as Kerr(-de Sitter) we have an obvious candidate for such an energy: the conserved quantity associated to the stationary Killing field. This energy is coercive for non-rotating, i.e.\ $a=0$, black holes, thus mode stability follows at once in such spacetimes. Perturbative arguments, relying on the celebrated redshift effect of Dafermos and Rodnianski \cite{Dafermos2009}, allow us to extend the $a=0$ mode stability result to the larger class of black holes which are very slowly rotating, i.e.\ where $|a|\ll M$ \cite{Dafermos2010} and, if $\Lambda>0$, $|a|\ll M,\Lambda$ \cite{Dyatlov2011a}. 

In the general $a\neq 0$ case, this approach breaks down completely. The conserved energy associated to the Killing field is generally \textit{non-coercive}. In the cases $s=0, \pm 1,\pm 2$ the non-coercivity goes by the name of {superradiance}. From the point of view of the black hole geometry, superradiance is a consequence of the fact that rotating black holes have {ergoregions} where the stationary field becomes spacelike. Under the separable ansatz \eqref{eq:separable-ansatz-intro}, superradiance is captured by a simple condition on the frequency parameters $\omega$ and $m$. For instance, take $\omega\in\mathbb{R}$: if $\Lambda=0$, the superradiant condition reads
\begin{equation}
m\neq 0\,, \quad 0<\frac{\omega}{m}<\frac{a}{r_+^2+a^2} \,,\label{eq:superradiance-intro}
\end{equation}
where $r_+$ is the largest root of \eqref{eq:Delta-intro}; if $\Lambda>0$, it reads
\begin{align}
m\neq 0\,, \quad \frac{a}{r_2^2+a^2}< \frac{\omega}{m}< \frac{a}{r_1^2+a^2}\,,\label{eq:superradiance-KdS-intro}
\end{align}
where $r_2,r_1$ are, respectively, the largest and second largest roots of \eqref{eq:Delta-intro}. Furthermore, the stabilizing effect of redshift is generally not strong enough to overcome superradiance. 

Superradiance, therefore, emerges as an important obstacle in establishing mode stability and, indeed, black hole stability for general back hole parameters: there are several examples where superradiance leads to mode \textit{instability}. Massive scalar fields on Kerr can produce a black hole bomb \cite{Shlapentokh-Rothman2015a}, and even milder modifications of the scalar potential can lead to non-decaying modes \cite{Moschidis2017b} in Kerr. For Kerr's $\Lambda<0$ cousin, Kerr-Anti de Sitter, superradiance may be even more damaging to its stability\footnote{It is important to note that there may be other factors at play when it comes to stability problems for solutions to \eqref{eq:Einstein-cosmological-constant} with $\Lambda<0$. Indeed, even the trivial solution, Anti-de Sitter space, has been shown to be unstable under additional coupling to several matter models in spherical symmetry \cite{Moschidis2017a,Moschidis2018}.}: massive and massless scalar fields also admit exponentially growing modes \cite{Dold2017} (see also \cite{Cardoso2004}) on black holes which lie below the Hawking--Reall \cite{Hawking2000} bound.

In light of all these mode \textit{instabilities}, it is remarkable that, in the $\Lambda=0$ case, mode \textit{stability} holds for the \textit{entire} Kerr black hole family, even in the endpoint case $|a|=M$.  This is mostly to the credit of the pioneering work of Whiting \cite{Whiting1989}. In 1989, Whiting proved mode stability for $\Im\omega>0$ and $|a|<M$ by demonstrating that such mode solutions \eqref{eq:separable-ansatz-intro} to \eqref{eq:Teukolsky-equation-intro} can be \textit{injectively} mapped into mode solutions of a scalar wave equation on a new spacetime in which the energy associated to the stationary Killing field is coercive. Whiting's map consists of taking appropriate integral and differential transformations of the radial and angular functions, respectively, in \eqref{eq:separable-ansatz-intro}. It turns out, as Shlapentokh-Rothman showed in 2015 \cite{Shlapentokh-Rothman2015}, that Whiting's integral radial transformation suffices to prove mode stability in the upper half-plane and even on the real axis for $|a|<M$, see also a different extension to the real axis in \cite{Andersson2017}. However, as the transformation is very sensible to the change in the nature of singularities in the radial ODE, it breaks down at the endpoint case $|a|=M$. There, Whiting's method requires a very different integral transformation, which was only very recently found by the second author \cite{TeixeiradaCosta2019}. We emphasize that while Whiting's approach to mode stability may be bypassed when considering $|a|\ll M$ black holes, by the reasons described above, it was absolutely crucial to the characterization of solutions to the Teukolsky equation \eqref{eq:Teukolsky-equation} in the \textit{full subextremal range} of parameters $|a|<M$ in the works \cite{Dafermos2016b,SRTdC2020,SRTdC2021}, as no other approach to mode stability for general $a$ was known.

Turning to the Kerr-de Sitter $\Lambda>0$ case, the picture is much less complete, and it remains an open problem to determine whether \eqref{eq:Teukolsky-equation-intro} is modally stable for general black hole parameters. The lack of a mode stability statement is one of the reasons why Hintz and Vasy's proof \cite{Hintz2016} of nonlinear stability of the Kerr-de Sitter family cannot be extended past the very slowly rotating $|a|\ll M,\Lambda$ setting. In fact, in the linear setting of the scalar wave equation, recent work of  Petersen and Vasy \cite{Petersen2021a} has singled out mode stability as the only obstruction to showing decay in the full subextremal range. This state of affairs is also somewhat surprising: the $\Lambda>0$ case in spherical symmetry and perturbations thereof is much better behaved than the $\Lambda=0$ case (compare, for instance, \cite{Dafermos2010} and \cite{Dyatlov2011}), so naively one would expect mode stability to be easier to show in the former case than in the latter. Yet, to employ Whiting's method in Kerr-de Sitter one requires, much like in the $|a|=M, \Lambda=0$ case, a new radial integral transformation, as the nature of the singular points of the radial ODE is very different from the $\Lambda=0$ case. Attempts at finding such a transformation have been unsuccessful, see \cite{Umetsu2000} for a discussion, and no other mechanism of establishing mode stability has been put forth.

The goal of this paper is precisely to revisit mode stability for Kerr(-de Sitter) black holes. In the $\Lambda=0$ setting, we provide a new proof of the classical mode stability result
\begin{bigtheorem} \label{thm:mode-stability-Kerr-intro} Fix $M>0$ and $|a|<M$, and let $s\in\frac12\mathbb{Z}$. Then there are no non-trivial mode solutions  \eqref{eq:separable-ansatz-intro} to \eqref{eq:Teukolsky-equation-intro} for $\omega\neq 0$ with $\Im\omega\geq 0$.
\end{bigtheorem}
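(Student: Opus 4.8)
The plan is to separate variables, reduce the claim to a radial ODE boundary-value problem, and then use the hidden spectral symmetry to replace the (generally indefinite) Killing energy by a manifestly sign-definite one, much as Whiting's transformation does classically. First I would substitute the ansatz \eqref{eq:separable-ansatz-intro} into \eqref{eq:Teukolsky-equation-intro} with $\Lambda=0$. This yields the spin-weighted spheroidal angular ODE for $S(\theta)$, which selects the separation constant $\lambda=\lambda_{ml}^{[s]}(a\omega)$ as the (generally complex, since $a\omega\notin\mathbb R$ when $\Im\omega>0$) eigenvalue determined by regularity at the poles $\theta=0,\pi$; and the radial Teukolsky ODE for $\upalpha(r)$, a confluent-Heun-type equation with regular singular points at the two roots $r_\pm$ of $\Delta$ and an irregular singular point at $r=\infty$. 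A nontrivial mode is exactly an $\upalpha$ that is simultaneously ingoing at $r_+$ and outgoing at infinity. For $\Im\omega>0$ both conditions amount to exponential decay, so $\upalpha$ lies in a weighted $L^2((r_+,\infty))$ space; for $\omega\in\mathbb R\setminus\{0\}$ they become the corresponding radiation conditions.

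The heart of the argument is the spectral symmetry established earlier in the paper. I would use it to transport the radial problem with data $(s,m,a,\omega,\lambda)$ to an equivalent one whose natural conserved current is positive definite, thereby removing the superradiant obstruction \eqref{eq:superradiance-intro} at the level of the transformed equation. Concretely, I would write the Wronskian-type bilinear identity for the transformed radial ODE: multiplying by the conjugate of the transformed solution and integrating over $(r_+,\infty)$, the boundary terms vanish for $\Im\omega>0$ by the decay above, leaving an integral of a purely ``kinetic plus potential'' integrand equal to zero. The content of the symmetry should be precisely that this integrand is sign-definite, forcing $\upalpha\equiv0$ and hence $\swei{\upalpha}{s}\equiv0$.

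To cover the remaining case $\omega\in\mathbb R\setminus\{0\}$, I would treat the real axis separately, since there the boundary terms no longer vanish. I would run the same transformed identity while tracking the fluxes at $r_+$ and at infinity, and show that the radiation conditions force these fluxes into a configuration incompatible with the definiteness of the bulk integrand unless $\upalpha\equiv0$; equivalently, that the Wronskian of the two boundary-adapted solutions cannot vanish for real $\omega\neq0$. A cleaner alternative is a continuity argument, excluding real modes as limits of the (already ruled-out) upper-half-plane modes, using continuous dependence of the transformed solutions on $\omega$ up to the real axis away from $\omega=0$.

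The step I expect to be the main obstacle is verifying that the symmetry genuinely yields coercivity for \emph{every} admissible triple $(s,m,\lambda)$ and all $|a|<M$: one must check that the transformation maps decaying/radiating solutions to decaying/radiating solutions and introduces no spurious singularity inside $(r_+,\infty)$, and then control the sign of the transformed potential uniformly in the frequency parameters. The real-axis flux bookkeeping is delicate precisely as $\omega\to0$, which is consistent with the exclusion of $\omega=0$ in the statement. By comparison, the separation of variables and the formal energy identity are routine.
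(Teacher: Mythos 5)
Your plan coincides with the paper's own proof of Theorem~\ref{thm:mode-stability-Kerr}: the hidden spectral symmetry (the $m_2\leftrightarrow m_3$ exchange of Corollary~\ref{lemma:hidden-symmetry}) replaces the Teukolsky radial problem by the Whiting-type transformed ODE \eqref{eq:tilded-ODE-R}, whose Killing-energy current is coercive, giving triviality directly when $\Im\omega>0$ and, for $\omega\in\mathbb{R}\setminus\{0\}$, via vanishing of the two now same-signed boundary fluxes followed by unique continuation. The only ingredient you should make explicit is that coercivity in the upper half-plane uses not just the structure of the transformed potential but also the angular-eigenvalue sign property $\Im\lp(\overline{\omega}\,\lambdabar\rp)\leq 0$ of Lemma~\ref{lemma:angular-eigenvalues}, which is exactly why the paper proves the stronger Theorem~\ref{thm:mode-stability-Kerr} under that hypothesis on $\lambdabar$.
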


Our proof makes use of previously unknown symmetries of the point spectrum of the radial Teukolsky equation. Such symmetries were conjectured to exist by Aminov, Grassi and Hatsuda \cite{Aminov2020,Hatsuda2020} by comparing the Teukolsky equation with quantization conditions for some supersymmetric gauge theories \cite{Ito2017}, see also \cite{Bianchi2021,Bonelli2021}. To establish their conjecture, we rely on a Jaffé expansion for the radial Teukolsky  ODE, see \cite[Part B]{Ronveaux1995}, a method usually attributed in the black hole community to Leaver's seminal work on quasinormal modes \cite{Leaver1985}.  We also sketch an alternative proof via the so-called MST method of Mano, Suzuki and Tagasuki \cite{Mano1996}, see also \cite{Sasaki2003}. Finally, we emphasize that our results hold for $\Im\omega\geq 0$, and we refer the reader to the previous references and the more recent \cite{Gajic2019,Gajic2020} for results concerning the case $\Im\omega<0$.

In the $\Lambda>0$ or Kerr-de Sitter setting, we show that symmetries analogous to those for $\Lambda=0$ hold for the point spectrum of the radial Teukolsky ODE. These are once again inspired by the supersymmetric gauge theories of \cite{Ito2017}, though to the best of our knowledge have not been conjectured or shown earlier. To prove their existence, rather than a Jaffé expansion, we rely on an expansion in hypergeometric polynomials which, despite being well-known in the classical texts on special ODEs, is to our knowledge new in the General Relativity literature. An alternative proof based on a variant of the MST method introduced in \cite{Suzuki1999,Suzuki2000} is also sketched briefly. As for Kerr, making use of these novel symmetries, we are able to establish a partial mode stability result for Kerr-de Sitter:
\begin{bigtheorem} \label{thm:partial-mode-stability-KdS-intro}
Fix $\Lambda>0$,  $M>0$ and $|a|<3/\Lambda$ so that \eqref{eq:Delta-intro} has four distinct real roots, labeled $r_3<r_0<r_1<r_2$, and let $s\in\frac12\mathbb{Z}$. Then there are no non-trivial mode solutions \eqref{eq:separable-ansatz-intro} to \eqref{eq:Teukolsky-equation-intro} with $\omega$ such that 
\begin{gather*}
\omega\in\mathbb{R}\text{~~and~~} m=0 \text{~~or~~} \frac{\omega}{m} \not\in \lp(\frac{2a}{a^2+3/\Lambda-(r_0+r_1)^2}, \frac{2a}{a^2+3/\Lambda-(r_0+r_2)^2}\rp)
\end{gather*}
nor with $\omega$ such that
\begin{gather*}
\Im\omega>0\text{~~and~~}|\omega|\not\in |m|\lp(0, \frac{2a}{a^2+3/\Lambda-(r_0+r_2)^2}\rp)\,.
\end{gather*}
If $|s|=\frac12,\frac32$, in fact there are no non-trivial mode solutions \eqref{eq:separable-ansatz-intro} to \eqref{eq:Teukolsky-equation-intro} for any $\omega\in\mathbb{R}$.
\end{bigtheorem}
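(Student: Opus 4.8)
The plan is to pass, via the separation \eqref{eq:separable-ansatz-intro}, to the radial ODE associated with \eqref{eq:Teukolsky-equation-intro} on the physical domain $(r_1,r_2)$ delimited by the event and cosmological horizons, and to drive the argument with the conserved current attached to the stationary Killing field $\partial_t$. Introducing a tortoise coordinate $r_\ast$ with $r_\ast\to-\infty$ as $r\to r_1$ and $r_\ast\to+\infty$ as $r\to r_2$, I would recast the equation in Schrödinger form and single out the genuine mode solution by imposing ingoing data at $r_1$ and outgoing data at $r_2$. The near-horizon behavior is $u\sim e^{\pm i(\omega-m\Omega_j)r_\ast}$, so the microlocal current $J=\Im(\bar u\,\partial_{r_\ast}u)$ has horizon fluxes that depend on the horizon frequencies $\omega-m\Omega_1$ and $\omega-m\Omega_2$, where $\Omega_1>\Omega_2$ are the two horizon angular velocities; for $s\neq 0$ the first-order and $s$-dependent terms in \eqref{eq:Teukolsky-equation-intro} must first be folded into $J$ so that this clean dependence survives.

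The direct energy estimate then splits into two regimes. When $\Im\omega>0$, the exponents selected at both horizons decay, so all boundary terms vanish; pairing the equation with $\bar u$, integrating by parts over $(r_1,r_2)$, and taking imaginary parts leaves $\Im\omega$ times a bulk integral that is strictly positive once $|\omega|$ is large relative to $|m|$, forcing $u\equiv 0$. When $\Im\omega=0$ the current $J$ is exactly conserved, so equating its values at the two horizons gives $(\omega-m\Omega_2)|A_2|^2+(\omega-m\Omega_1)|A_1|^2=0$; for $\omega/m$ outside the superradiant interval $(\Omega_2,\Omega_1)$ both terms share a common sign, forcing $A_1=A_2=0$ and hence, by ODE uniqueness, a trivial mode. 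The case $m=0$ is immediate since both coefficients reduce to $\omega\neq0$.

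Both halves of this estimate control only the naive window with endpoints $\Omega_2,\Omega_1$ built from $r_1,r_2$, which is strictly larger than the window stated in the theorem; closing this gap is where the hidden spectral symmetry enters. I would establish that symmetry separately by expanding the mode solution in hypergeometric polynomials and reading off an involution of the parameter space that maps a mode at $(\omega,m)$ to a mode at transformed data in which the Cauchy horizon $r_0$ replaces the event horizon $r_1$ as the inner singular point governing the flux. Transporting the non-superradiant conclusion along this involution excludes modes on a strictly larger frequency set, and a short computation with the Vieta relations for the quartic \eqref{eq:Delta-intro}---notably $r_0+r_1+r_2+r_3=0$---recasts the sharpened endpoints precisely as $\tfrac{2a}{a^2+3/\Lambda-(r_0+r_1)^2}$ and $\tfrac{2a}{a^2+3/\Lambda-(r_0+r_2)^2}$, matching both displayed conditions.

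Finally, for the half-integer spins $|s|=\tfrac12,\tfrac32$ the relevant conserved current is positive-definite for every real frequency---these fermionic fields do not superradiate---so the flux-balance identity already excludes modes for all $\omega\in\mathbb{R}$, with no recourse to the symmetry; this yields the unconditional real-axis statement. I expect the main obstacle to be establishing the spectral symmetry rigorously: one must prove convergence of the hypergeometric-polynomial expansion and verify that the induced involution preserves the actual boundary conditions defining a mode, not merely the formal radial equation, since a spurious correspondence would void the extension carried out above. A secondary difficulty is the bookkeeping for $s\neq 0$, where the complex potential and first-order terms must be absorbed into $J$ without spoiling the signs of the horizon fluxes on which the real-axis argument rests.
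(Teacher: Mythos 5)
Your proposal is correct and follows essentially the same route as the paper's own proof: a two-regime energy/flux estimate on the radial ODE yielding the naive superradiant window $(\upomega_2,\upomega_1)$, a hidden spectral symmetry established via hypergeometric-polynomial expansions whose transformed problem mixes Cauchy-horizon quantities into the horizon fluxes and thereby sharpens the window, Vieta's relations to recast the endpoints, and positivity of the flux identity for $|s|=\frac12,\frac32$ on the real axis. The obstacles you flag---convergence of the expansion and preservation of the boundary conditions under the mass swap---are exactly what the paper's point-spectrum proposition and its corollary resolve (via a minimal-solution analysis of the three-term recursion), so your plan matches the paper's argument in both structure and substance.
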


Note that no smallness assumptions are made on the black hole specific angular momentum $a$ nor on how close $\omega$ is to the endpoints of the superradiant regime, given in \eqref{eq:superradiance-KdS-intro} for real $\omega$. We also remark that the $\Lambda\to 0$ limit of the proof of Theorem~\ref{thm:partial-mode-stability-KdS-intro} yields precisely Theorem~\ref{thm:mode-stability-Kerr-intro}.

Finally, let us comment on the distinction between cases $|s|=\frac12,\frac32$ and the rest. Similarly to Kerr, if $|s|=\frac12,\frac32$ then the conservation law associated to the stationary Killing is coercive for $\omega\in\mathbb{R}$, and mode stability then follows. In fact, superradiance does not occur for half-integer $s$ in general: the only obstacle to coercivity of the conservation law in this setting is the possible negativity of the so-called Teukolsky--Starobinsky constants for $|s|>2$, which, see our previous work \cite{CasalsTdC2021},  may occur outside the superradiant set \eqref{eq:superradiance-KdS-intro}. Thus, the case $s\in\mathbb{Z}$ is where Theorem~\ref{thm:partial-mode-stability-KdS-intro} is most useful: it rules out \textit{some} of the modes in the superradiant range \eqref{eq:superradiance-intro}, see Figure~\ref{fig:partial-mode-stability-KdS}.

\begin{figure}[htbp]
\centering
\includegraphics[scale=0.8]{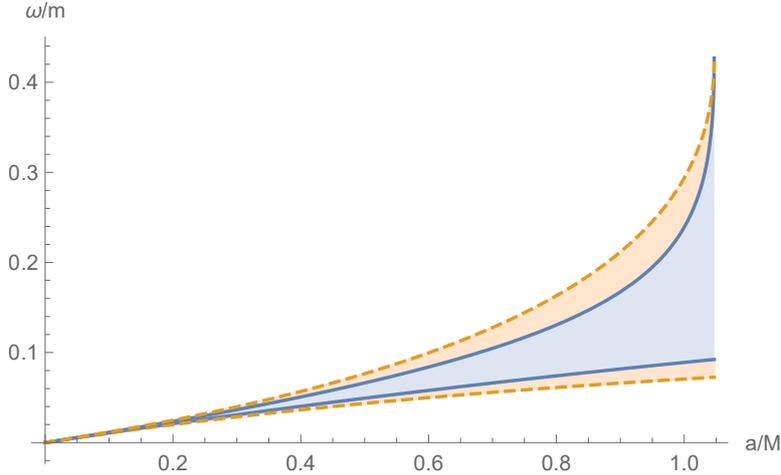}
\caption{The non-superradiant (blank) and superradiant (shaded) regions of separation parameters  for a subextremal Kerr-de Sitter black hole with $\Lambda M^2=1/9$. Theorem~\ref{thm:partial-mode-stability-KdS} rules out the existence of superradiant modes which lie in the orange shaded region between the dashed and full lines. 
}
\label{fig:partial-mode-stability-KdS}
\end{figure}

The remainder of this paper is organized into two sections. Section~\ref{sec:kerr} addresses the case $\Lambda=0$ and gives a new proof of Theorem~\ref{thm:mode-stability-Kerr-intro}. Section~\ref{sec:kds} addresses the case $\Lambda>0$ and contains the proof of  Theorem~\ref{thm:partial-mode-stability-KdS-intro}.

\medskip

\noindent \textbf{Acknowledgments.}  M.C.\ acknowledges partial financial support by CNPq (Brazil), process number 314824/2020-0.
R.TdC.\ acknowledges support from EPSRC (United Kingdom) grant EP/L016516/1 and from NSF (United States) award DMS-2103173, and thanks André Guerra for his enthusiasm for this project and many useful discussions, as well as Mihalis Dafermos for useful suggestions and Igor Rodnianski for insightful comments. Both authors thank one of the anonymous referees for a very careful reading of this manuscript, and many useful suggestions.

\medskip

\noindent \textbf{Data availability statement.} Data sharing not applicable to this article as no datasets were generated or analyzed during the current study.


\section{Subextremal Kerr black holes}
\label{sec:kerr}

\subsection{Geometry of the exterior}
\label{sec:geometry}

In this section, we recall, for the benefit of the reader, some of the basic geometric properties of subextremal Kerr black holes, see for instance \cite{Chandrasekhar} or \cite[Section 5.1]{Dafermos2008} for more details. Fix $M>0$, $|a|<M$, and let
\begin{align*}
r_\pm := M\pm \sqrt{M^2-a^2}\,.
\end{align*}
The subextremal Kerr black hole exterior is a manifold covered globally (modulo the usual degeneration of polar coordinates) by so-called Boyer--Lindquist coordinates $(t,r,\theta,\phi)\in \mathbb{R}\times (r_+,\infty)\times \mathbb{S}^2$ \cite{Boyer1967}, and endowed with the Lorentzian metric
\begin{align*}
g=-\frac{\Delta}{\rho^2}(dt-a\sin^2\theta d\phi)^2+\frac{\rho^2}{\Delta}dr^2+\rho^2d\theta^2+\frac{\sin^2\theta}{\rho^2}\lp(adt-(r^2+a^2)d\phi\rp)^2\,,
\end{align*}
where we have
\begin{align*}
\rho^2:=r^2+a^2\cos^2\theta\,, \qquad\Delta:=r^2-2Mr+a^2=(r-r_+)(r-r_-)\,.
\end{align*}

Finally, we will also find it convenient to work with a rescaling of the Boyer-Lindquist $r$, the tortoise coordinate $r^*=r^*(r)$ defined by
\begin{align*}
\frac{dr^*}{dr}=\frac{r^2+a^2}{\Delta}\,, \quad r^*(3M)=0\,.
\end{align*}
We remark that, throughout this section, we take $'$ to denote a derivative with respect to $r^*$.


\subsection{The Teukolsky equation and its separability}
\label{sec:Teukolsky-separable}

Fix $M>0$, $|a|\leq M$ and $s\in\frac12 \mathbb{Z}$. The Teukolsky equation \cite{Teukolsky1973} is
\begin{equation}
\begin{split}
\Box_g\upalpha^{[s]} &+\frac{2s}{\rho^2}(r-M)\p_r\upalpha^{[s]} +\frac{2s}{\rho^2}\lp[\frac{a(r-M)}{\Delta}+i\frac{\cos\theta}{\sin^2\theta}\rp]\p_\phi\upalpha^{[s]} 
\\
&+\frac{2s}{\rho^2}\lp(\frac{(r-M)(r^2+a^2)}{\Delta}-2r -ia\cos\theta\rp)\p_t\upalpha^{[s]} +\frac{1}{\rho^2}\lp(s-s^2\cot^2\theta\rp)\upalpha^{[s]}  =0\,,
\end{split}\label{eq:Teukolsky-equation}
\end{equation}
in Boyer--Lindquist coordinates, writing $\Box_{g}$ for the covariant wave operator on the Kerr metric $g$. Here, $\swei{\upalpha}{s}$   is a smooth, $s$-spin weighted function on the subextremal Kerr black hole exterior, see \cite[Section 2.2.1]{Dafermos2017} for a precise definition. 

As Teukolsky noted in his seminal paper \cite{Teukolsky1973}, by analogy with the wave equation case \cite{Carter1968}, the Teukolsky equation \eqref{eq:Teukolsky-equation} is separable, i.e.\ it admits separable solutions: 
\begin{align} \label{eq:separable-ansatz}
\swei{\upalpha}{s}(t,r,\theta,\phi)=e^{-i\omega t}e^{im\phi}{S}^{[s],\,a\omega}_{m,\lambdabar}(\theta) \Delta^{-\frac{s+1}{2}}{R}^{[s],\,a, \omega}_{m,\lambdabar}(r)\,,
\end{align}
for $\omega\in\mathbb{C}$, $m-s\in\mathbb{Z}$ and a separation constant $\lambdabar$. Plugging \eqref{eq:separable-ansatz} into \eqref{eq:Teukolsky-equation}, we find that ${S}^{[s],\,a\omega}_{m,\lambdabar}$ and ${R}^{[s],\,a, \omega}_{m,\lambdabar}$ each satisfy ODEs, which are introduced in the next two subsections.


\subsubsection{The angular ODE and its eigenvalues}

Let $s\in\frac12\mathbb{Z}$ be fixed. Consider \eqref{eq:separable-ansatz} and replace $a\omega$ by a parameter $\nu\in\mathbb{C}$. The angular ODE verified by ${S}^{[s],\,\nu}_{m,\lambdabar}$ is 
\begin{gather}
\begin{split}
\frac{1}{\sin\theta}\frac{d}{d\theta}\lp(\sin\theta\frac{d}{d\theta}\rp)S_{m,\lambdabar}^{[s],\,\nu}(\theta)
- \lp(\frac{(m+s\cos\theta)^2}{\sin^2\theta}-\nu^2\cos^2\theta+2\nu s \cos\theta\rp)S_{m,\lambdabar}^{[s],\,\nu}(\theta)+\lambdabar S_{m,\lambdabar}^{[s],\,\nu}(\theta)=0 \,.
\end{split} \label{eq:angular-ode}
\end{gather}
We are interested in solutions of \eqref{eq:angular-ode} with boundary conditions which ensure that, when $\nu$ is taken to be $a\omega$, \eqref{eq:separable-ansatz} is a smooth $s$-spin weighted function on the subextremal Kerr exterior. We quote from \cite[Proposition 2.1]{TeixeiradaCosta2019} a characterization of such solutions, based on the classical references \cite{Meixner1954}, \cite[Section III]{Hartle1974} and \cite[Pages 72--74]{Stewart1975}.

\begin{lemma}[Smooth spin-weighted solutions of the angular ODE] \label{lemma:angular-eigenvalues}
Fix $s\in\frac12\mathbb{Z}$, let $m-s\in\mathbb{Z}$, and assume $\nu\in\mathbb{C}$. Consider the angular ODE \eqref{eq:angular-ode} with the boundary condition that $e^{im\phi}S_{m,\lambdabar}^{[s],\,\nu}$ is a non-trivial smooth $s$-spin-weighted function on $\mathbb{S}^2$, see the precise definition in \cite[Definition 2.2]{TeixeiradaCosta2019}. 

\textbf{The case $\nu\in\mathbb{R}$}. For each $\nu\in\mathbb{R}$, there are countably many such solutions to \eqref{eq:angular-ode} each corresponding to a real value of $\lambdabar$. We index the solutions and eigenvalues by a discrete $l$: $S_{ml}^{[s],\,\nu}$ solves \eqref{eq:angular-ode} with eigenvalue $\lambdabar=\bm\uplambda_{ml}^{[s],\,\nu}$ and induces a complete orthonormal basis, $\{e^{im\phi}S_{ml}^{[s],\,\nu}\}_{ml}$, of the space of smooth $s$-spin-weighted functions on $\mathbb{S}^2$ endowed with $L^2(\sin\theta d\theta)$ norm. The index $l$ is chosen so that $l-\max\{|s|,|m|\}\in\mathbb{Z}_{\geq 0}$, and so that $\bm\uplambda_{ml}^{[s],\,0}=l(l+1)-s^2$ for $\nu =0$ and $\bm\uplambda_{ml}^{[s],\,\nu}$ varies smoothly with $\nu$.  The eigenvalues also have the property that $\bm\uplambda_{ml}^{[s],\,\nu}=\bm\uplambda_{ml}^{[-s],\,\nu}$.

\textbf{The case $\nu\in\mathbb{C}\backslash\mathbb{R}$}. Fix some $\nu_0\in\mathbb{R}$. The corresponding eigenvalue $\bm\uplambda_{ml}^{[s],\,\nu_0}\in \mathbb{R}$ can be analytically continued to $\nu\in\mathbb{C}$ except for finitely many branch points (with no finite accumulation point), located away from the real axis, and branch cuts emanating from these. We define $\bm\uplambda_{ml\nu_0}^{[s],\,\nu}$, for $\nu_0\in\mathbb{R}$, as a global multivalued complex function of $\nu$ such that $\bm\uplambda_{ml\nu_0}^{[s],\,\nu_0}=\bm\uplambda_{ml}^{[s],\,\nu_0}$ and $S_{ml\nu_0}^{[s],\,\nu}$ as a solution to \eqref{eq:angular-ode} with $\lambdabar=\bm\uplambda_{ml\nu_0}^{[s],\,\nu}$. The eigenvalues are independent of $\sign s$ and satisfy
\begin{align}\label{eq:angular-eigenvalues-upper-half-plane}
\Im\nu>0\implies \Im\lp( \overline{\nu}\,\bm\uplambda_{ml\nu_0}^{[s],\,\nu}\rp)<0\,.
\end{align}
\end{lemma}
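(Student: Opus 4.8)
The plan is to read \eqref{eq:angular-eigenvalues-upper-half-plane} off a quadratic-form (Rayleigh-quotient) identity for the analytically continued eigenpair. Fix $\nu$ with $\Im\nu>0$ lying off the branch cuts, and let $S=S_{ml\nu_0}^{[s],\,\nu}$ be the associated smooth $s$-spin-weighted solution of \eqref{eq:angular-ode} with $\lambdabar=\bm\uplambda_{ml\nu_0}^{[s],\,\nu}$. First I would multiply \eqref{eq:angular-ode} by $\overline{S}\sin\theta$, integrate over $\theta\in(0,\pi)$, and integrate the top-order term by parts. Because $S$ is smooth as a spin-weighted function on $\mathbb{S}^2$, the boundary terms $[\,\overline{S}\sin\theta\, S'\,]_0^\pi$ vanish, leaving
\begin{equation*}
\lambdabar\,N = A - \nu^2 B + 2\nu s\, C,
\end{equation*}
where $N=\int_0^\pi|S|^2\sin\theta\,d\theta>0$,
\begin{equation*}
A=\int_0^\pi\!\lp(\sin\theta\,|S'|^2+\frac{(m+s\cos\theta)^2}{\sin\theta}|S|^2\rp)d\theta\geq 0,\qquad B=\int_0^\pi\cos^2\theta\,|S|^2\sin\theta\,d\theta>0,
\end{equation*}
and $C=\int_0^\pi\cos\theta\,|S|^2\sin\theta\,d\theta\in\mathbb{R}$. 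The key structural observation is that $A$, $B$, $C$, $N$ are all real, and that $B>0$ for any nontrivial $S$ since $\cos^2\theta\sin\theta>0$ on a dense subset of $(0,\pi)$.

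Next I would extract the sign by multiplying through by $\overline{\nu}$ and taking imaginary parts. Using $\overline{\nu}\,\nu^2=|\nu|^2\nu$ and $\overline{\nu}\,\nu=|\nu|^2$, the cross term $2\overline{\nu}\nu s\,C=2|\nu|^2 s\,C$ is real and drops out, while $\Im(\overline{\nu})=-\Im\nu$, so that
\begin{equation*}
N\,\Im\lp(\overline{\nu}\,\bm\uplambda_{ml\nu_0}^{[s],\,\nu}\rp)=-\,\Im\nu\,\lp(A+|\nu|^2 B\rp).
\end{equation*}
Since $\Im\nu>0$ forces $\nu\neq 0$, hence $|\nu|^2>0$, and since $A\geq 0$, $B>0$, $N>0$, the right-hand side is strictly negative, which is precisely \eqref{eq:angular-eigenvalues-upper-half-plane}. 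I would emphasise that the inequality is strict even in the degenerate case $A=0$ (e.g.\ the would-be constant mode for $m=s=0$), because the term $|\nu|^2 B$ is already strictly positive.

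The only genuine obstacle is justifying the computation for \emph{complex} $\nu$, as opposed to the real case where $S$ is an honest eigenfunction of a self-adjoint operator. Here the operator is non-self-adjoint, but no self-adjointness is needed: the manipulation is just the quadratic form $\langle S,\cdot\,\rangle$ applied to the ODE. What must be checked is that the continuation $S_{ml\nu_0}^{[s],\,\nu}$ furnished by Lemma~\ref{lemma:angular-eigenvalues} is, off the branch cuts, a genuine smooth spin-weighted solution, so that $S$ and $S'$ are square-integrable against $\sin\theta\,d\theta$ and the boundary terms vanish. This holds because the indicial exponents of \eqref{eq:angular-ode} at the regular singular points $\theta=0,\pi$ depend only on $m$ and $s$, and not on $\nu$, so the regularity defining the smooth branch persists under continuation in $\nu$. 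Finally, since $\bm\uplambda_{ml\nu_0}^{[s],\,\nu}$ is multivalued, I would note that the argument applies verbatim to each branch — each carrying its own eigenfunction $S$ — so \eqref{eq:angular-eigenvalues-upper-half-plane} holds for the global multivalued function.
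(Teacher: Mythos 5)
Your proof is correct and takes essentially the same route as the paper: the paper defers the structural parts of the lemma (existence, countability, analytic continuation) to \cite[Proposition 2.1]{TeixeiradaCosta2019} and establishes \eqref{eq:angular-eigenvalues-upper-half-plane} by exactly your computation — multiplying the angular ODE by $\sin\theta\,\overline{\nu S}$, integrating by parts, and taking imaginary parts — carried out in the Kerr--de Sitter setting in Lemma~\ref{lemma:angular-eigenvalues-KdS}, of which your identity is the $\Xi\to 1$, $\Delta_\theta\to 1$ specialization. The only cosmetic difference is the order of operations (you form the quadratic form first and multiply by $\overline{\nu}$ afterwards), and your strictness argument via $|\nu|^2 B>0$ matches the paper's observation that the right-hand side can vanish only for trivial $S$.
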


The reader will find a proof of \eqref{eq:angular-eigenvalues-upper-half-plane} in the positive cosmological setting in Lemma~\ref{lemma:angular-eigenvalues-KdS} below.

\begin{remark} We note that here and throughout the section, $\lambdabar$ denotes a complex number with no restrictions whereas $\bm\uplambda_{ml}^{[s],\,\nu}$ and $\bm\uplambda_{ml\nu_0}^{[s],\,\nu}$ denote one of the eigenvalues identified in Lemma~\ref{lemma:angular-eigenvalues}.
\end{remark}


\subsubsection{The radial ODE and its boundary conditions}

Fix $M>0$. For $|a|\leq M$, $s\in\frac12\mathbb{Z}$, $m-s\in\mathbb{Z}$, $\omega\in \mathbb{C}$ and $\lambdabar\in\mathbb{C}$, the radial ODE verified by $\smlambda{R}{ s}(r)$ in \eqref{eq:separable-ansatz} is, for $r\in(r_+,\infty)$,
\begin{align*}
\Delta\frac{d^2}{dr^2} &\smlambda{R}{ s}(r) +\frac{[\omega(r^2+a^2)-am-is(r-M)]^2}{\Delta}\smlambda{R}{ s}(r) \\
&+\lp(\frac{M^2-a^2}{\Delta}+4is\omega r -\lambdabar-a^2\omega^2 +2am\omega\rp)\smlambda{R}{ s}(r)=0\,,\numberthis \label{eq:radial-ODE-alpha}
\end{align*}
Let us introduce the notation
\begin{gather}
\eta:=  i\frac{\omega-m\upomega_-}{2\kappa_-}\,,  \qquad\xi:= -i\frac{\omega-m\upomega_+}{2\kappa_+}\,, \qquad \upomega_\pm:= \frac{a}{2Mr_\pm}\,,\qquad \kappa_\pm =\frac{r_+-r_-}{4Mr_\pm}\,.
\end{gather}
For $j=\pm$, the quantities $ \upomega_j$ and $\kappa_j$ are, respectively, the angular velocity and the surface gravity of the horizon at $r=r_j$. As before, we are interested in studying \eqref{eq:radial-ODE-alpha} under boundary conditions which ensure that \eqref{eq:separable-ansatz} can arise from suitably regular initial data for the Teukolsky equation~\eqref{eq:Teukolsky-equation}. Based on the classical theory of regular and irregular singularities for ODEs, see \cite[Chapters 5 and 7]{Olver1973}, we will consider the following boundary conditions:

\begin{definition} \label{def:bdry-conditions-Kerr} Assume $\omega\neq 0$. We say a solution, $\smlambda{R}{s}(r)$, to \eqref{eq:radial-ODE-alpha} is 
\begin{itemize}
\item ingoing at $\mc{H}^+$ if the following are smooth at $r=r_+$:
\begin{itemize}
\item $\smlambda{R}{ s}(r)(r-r_+)^{-\xi+\frac{s-1}{2}}$ if either $\Re\omega\neq m\upomega_+$ or $s\leq 0$, and
\item $\smlambda{R}{ s}(r)(r-r_+)^{-\frac{s+1}{2}}$ if  $\Re\omega=m\upomega_+$ and $s\geq 0$;
\end{itemize}
\item outgoing at $\mc{I}^+$ if $\smlambda{R}{ s}(r)e^{-i\omega r}r^{-2iM\omega+s}$  admits an asymptotic series as $r\to \infty$ in powers of $r^{-1}$.
\end{itemize}
\end{definition}


\subsubsection{Precise definition of mode solution}

We are finally ready to define mode solution precisely:
\begin{definition} Fix $M>0$ and $|a|<M$. Take $s\in\frac12\mathbb{Z}$, $m-s\in\mathbb{Z}$ and $\omega\in\mathbb{C}\backslash\{0\}$ with $\Im \omega\geq 0$. Let $\swei{\upalpha}{s}(t,r,\theta,\phi)$ be a solution to \eqref{eq:Teukolsky-equation} which is given by \eqref{eq:separable-ansatz}. We say that $\swei{\upalpha}{s}(t,r,\theta,\phi)$ is a mode solution if
\begin{enumerate}
\item $\lambdabar=\bm\lambda_{ml\nu_0}^{[s],\,\nu}$ for some $l\in\mathbb{Z}_{\geq\max\{|m|,|s|\}}$ and $\nu_0\in\mathbb{R}$, making $e^{im\phi}S_{m,\lambdabar}^{[s],\,a\omega}$  a non-trivial smooth $s$-spin-weighted function on $\mathbb{S}^2$ such that $S_{m,\lambdabar}^{[s],\,a\omega}$ solves the angular ODE \eqref{eq:angular-ode};
\item ${R}^{[s],\,a, \omega}_{m,\lambdabar}(r)$ solves the radial ODE \eqref{eq:radial-ODE-alpha} with ingoing boundary conditions at $\mc{H}^+$ and outgoing boundary conditions at $\mc{I}^+$.
\end{enumerate}
\end{definition}

\begin{remark} \label{rmk:why-mode-solution-def} Let $\swei{\upalpha}{s}(t,r,\theta,\phi)$ be given by \eqref{eq:separable-ansatz} where $S^{[s],(a\omega)}_{m\lambdabar}$ and $\lambdabar$  are one of the functions and eigenvalues identified in Lemma~\ref{lemma:angular-eigenvalues}. If $\smlambda{R}{s}$ is ingoing at $\mc{H}^+$, then $\Delta^{s}\swei{\upalpha}{s}$ is regular at $r=r_+$ \cite[Lemma 2.7]{TeixeiradaCosta2019}. If, furthermore, $\smlambda{R}{s}$ is outgoing at $\mc{I}^+$, then $\swei{\upalpha}{s}$ has finite (weighted, see e.g.\ \cite{Dafermos2017}) energy on suitable spacelike hypersurfaces, see \cite[Appendix D]{Shlapentokh-Rothman2015}. Hence, mode solutions are solutions to \eqref{eq:Teukolsky-equation}  which are separable, regular and finite energy with respect to the algebraically special frame in which the Teukolsky equation is derived.
\end{remark}


\subsection{Some hidden spectral symmetries}
\label{sec:hidden-symmetries}

Consider the radial ODE
\begin{align}
\begin{split}
&\lp[z(z-1)\frac{d^2}{dz^2}-p^2z(z-1)-m_3\, p\, (2z-1)+\lp(E+\frac14\rp)-\frac{m_1 m_2}{z}-\frac{[(m_1+m_2)^2-1]}{4z(z-1)}\rp]y(z)=0\,,
\end{split} \label{eq:Aminov-ODE}
\end{align}
where $m_1,m_2,m_3,E,p\in \mathbb{C}$. It is easy to see that \eqref{eq:radial-ODE-alpha} may be cast in this form:

\begin{lemma}[{\cite[Section 4]{Aminov2020}}] \label{lemma:SQCD-Kerr} The radial ODE \eqref{eq:radial-ODE-alpha} may be rewritten as \eqref{eq:Aminov-ODE} if we choose $z:=\frac{r-r_-}{r_+-r_-}$ and we identify 
\begin{gather}\label{eq:SQCD-masses}
\begin{gathered}
m_1 := s-\xi-\eta=s+ 2iM\omega\,, \qquad m_2:= \eta-\xi=i\frac{2M^2\omega-am}{\sqrt{M^2-a^2}}=\frac{i}{2}\lp(\frac{\omega-m\upomega_+}{\kappa_+}+\frac{\omega-m\upomega_-}{\kappa_-}\rp)\,, \\ m_3:= -s-\xi-\eta=-s+2iM\omega\,, \\
p:= 2i\omega\sqrt{M^2-a^2}\,, \qquad E:= -\lambdabar-a^2\omega^2-s^2+8M^2\omega^2-\frac14\,.
\end{gathered}
\end{gather}
Furthermore, the boundary conditions in Definition~\ref{def:bdry-conditions-Kerr} can be recast in terms of the new parameters, noting that 
\begin{gather*}
-\eta_1+\frac{s-1}{2}=\frac12(m_1+m_2-1)\,, \quad -\frac{s+1}{2}=-\frac12(m_1+m_2+1)\,.
\end{gather*}
Furthermore, we can recast the ODE boundary conditions in terms of the new parameters: $\smlambda{R}{s}$ is ingoing at $\mc{H}^+$ if the functions $\smlambda{R}{s}(r)(r-r_+)^{\frac12(m_1+m_2-1)}$ for $\Re\omega\neq m\upomega_1$, and $\smlambda{R}{s}(r)(r-r_+)^{-\frac12(m_1+m_2+1)}$ otherwhise,  are smooth at $r=r_+$ and outgoing at $\mc{I}^+$ if $\smlambda{R}{s}(r)e^{-p r/(r_+-r_-)}r^{-m_3}$ admits an asymptotic series as $r\to \infty$ in powers of $r^{-1}$.

The choices of $m_1$ and $m_2$ are not canonical: these boundary conditions and the ODE~\eqref{eq:Aminov-ODE} are invariant by an exchange of $m_1$ and $m_2$.
\end{lemma}
\begin{proof}
We note the identities
\begin{align*}
&\omega(r^2+a^2)-am-is(r-M)\\
&\quad= \omega\Delta+\frac{2Mr_+\omega-am}{2}+\frac{2Mr_-\omega-am}{2}+(r-M)(2M\omega-is) \\
&\quad=\omega\Delta-i\lp[\frac12(\eta-\xi)(r_+-r_-)+\frac12(2Mi\omega+s)(r_+-r_-)+(r-r_+)(2Mi\omega+s)\rp]\,,\\
&[\omega(r^2+a^2)-am-is(r-M)]^2 \\
&\quad=\omega^2\Delta^2 -\frac{(\eta-\xi+2Mi\omega+s)^2(r_+-r_-)^2}{4}-(\eta-\xi)(2Mi\omega+s)(r-r_+)(r_+-r_-)\\
&\quad\qquad-\Delta\lp[(2Mi\omega+s)^2+i\omega (\eta-\xi)(r_+-r_-)+2i\omega(2Mi\omega-s+2s)(r-M)\rp]\\
&\quad=\omega^2\Delta^2- 2i\omega(2Mi\omega-s)\Delta(r-M) -\frac{(\eta-\xi+2Mi\omega+s)^2(r_+-r_-)^2}{4}\\
&\quad\qquad -(\eta-\xi)(2Mi\omega+s)(r-r_+)(r_+-r_-)-\Delta\lp[2am\omega-8M^2\omega^2+s^2+4s i\omega r \rp]\,.
\end{align*}
The potential in \eqref{eq:radial-ODE-alpha} can thus be written as
\begin{align*}
&\frac{[\omega(r^2+a^2)-am-is(r-M)]^2+M^2-a^2}{\Delta}+4is\omega r -\lambdabar-a^2\omega^2 +2am\omega\\
&\quad=\omega^2\Delta- 2i\omega(2Mi\omega-s)(r-M) -\frac{[(\eta-\xi+2Mi\omega+s)^2-1](r_+-r_-)^2}{4\Delta} \\
&\quad\qquad-\frac{(\eta-\xi)(2Mi\omega+s)(r-r_+)(r_+-r_-)}{\Delta}-\lambdabar-s^2+8M^2\omega^2-a^2\omega^2\,,
\end{align*}
from where we may now read off the appropriate values of $m_1,m_2,m_3,p,E$.
\end{proof}

\begin{remark}[Connection with SQCD] In \cite{Aminov2020} (see also \cite{Hatsuda2020},  Aminov, Grassi and Hatsuda realized that an analogy can be drawn between the Teukolsky equation on subextremal Kerr black holes and the $SU(2)$ Seiberg--Witten theory with three fundamental hypermultiplets in supersymmetric quantum chromodynamics (SQCD) \cite[Section 3]{Ito2017}: the latter provides a template ODE, \eqref{eq:Aminov-ODE}, which can be matched to the Teukolsky radial ODE \eqref{eq:radial-ODE-alpha} as in Lemma~\ref{lemma:SQCD-Kerr}. We emphasize that the connection to SQCD serves merely as a motivation to write \eqref{eq:Aminov-ODE} and \eqref{eq:SQCD-masses}, and the theory plays no role in the proof of Theorem~\ref{thm:mode-stability-Kerr-intro}.
\end{remark}

Next, we give a characterization of the point spectrum of \eqref{eq:Aminov-ODE}, in the space of solutions with suitable boundary conditions, by a Jaffé expansion \cite{Jaffe1934}, also known in the black hole community as Leaver's method \cite{Leaver1985}.

\begin{proposition}[Point spectrum] \label{prop:Jaffe-expansion} 
Let $E\in\mathbb{C}$ and $p\in\mathbb{C}\backslash\{0\}$ with $\Re p\leq 0$. Consider a triple $(m_{i_1},m_{i_2},m_{i_3})$, where $i_1$, $i_2$ and $i_3$ are distinct natural numbers between 1 and 3, of complex numbers verifying three conditions: $\Re(m_{i_1}+m_{i_2})\leq 0$ if $\Im(m_{i_1}+m_{i_2})=0$, and $\Re(m_{i_1}+m_{i_3})\leq 0$ with $m_{i_1}+m_{i_3}\neq 0$. 

Let ${R}_{(m_{i_1},m_{i_2},m_{i_3})}^{E,p}$ be the unique, up to rescaling, solution of the differential equation (c.f.\ \eqref{eq:Aminov-ODE})
\begin{align*}
\begin{split}
&\lp[z(z-1)\frac{d^2}{dz^2}-p^2z(z-1)-m_{i_3}p(2z-1)+E+\frac14-\frac{m_{i_1} m_{i_2}}{z}-\frac{[(m_{i_1}+m_{i_2})^2-1]}{4z(z-1)}\rp]R(z)=0\,,
\end{split} 
\end{align*}
with boundary conditions
\begin{itemize}
\item ${R}(z)(z-1)^{\frac12(m_{i_1}+m_{i_2}-1)}$ is smooth at $z=1$,
\item ${R}(z)e^{-p z}z^{-m_{i_3}}$ admits an asymptotic series as $z\to \infty$ in powers of $z^{-1}$.
\end{itemize}
Then ${R}_{(m_{i_1},m_{i_2},m_{i_3})}^{E,p}$ is nontrivial if and only if the continued fraction condition
\begin{align}
0= A_0^{(0)}+\frac{-A_0^{(+1)}A_1^{(-1)}}{ A_1^{(0)}+\frac{-A_1^{(+1)}A_2^{(-1)}}{ A_2^{(0)}+\frac{-A_2^{(+1)}A_3^{(-1)}}{\cdots} }} \label{eq:Jaffe-expansion-continued-fraction}
\end{align}
holds, where the coefficients $A^{(0)}_n$ and $A_n^{(\pm 1)}$ satisfy
\begin{align}
\begin{split}
A_{n-1}^{(+1)}A_{n}^{(-1)}&=n\lp\{[n-\sigma_1(\bm{m})] [ n(n-\sigma_1(\bm{m})) +\sigma_2(\bm{m})]+\sigma_3(\bm{m})\rp\}\,,\\
A_n^{(0)}&= E+\frac14+2n(p-n)+ (2n+1-p)\sigma_1(\bm{m})-\sigma_2(\bm{m})\,,
\end{split} \label{eq:Jaffe-expansion-recursion-coefficients}
\end{align}
and where $\sigma_{1}(\bm{m}):= m_1+m_2+m_3$, $\sigma_{2}(\bm{m}):=m_1 m_2+m_1 m_3+m_2m_3$ and $\sigma_{3}(\bm{m}):=m_1 m_2 m_3$ are symmetric polynomials in $\bm{m}=(m_{i_1},m_{i_2},m_{i_3})$.
\end{proposition}

\begin{proof} From the  classical theory of ODEs, see \cite[Chapter 7]{Olver1973}, the second boundary condition is verified if and only if\footnote{It is worth pausing here to note the difference between the cases $\Re p>0$, which is not included in our statement, and $\Re p\leq 0$, which is. In both cases, an analysis of the irregular singularity $z=\infty$ shows that the two linearly independent boundary behaviors are asymptotically proportional to $e^{\pm p z}z^{\pm m_{i_3}}$ as $z\to \infty$; we select the upper sign for the statement. In the case $\Re p\leq  0$, this corresponds to oscillation (if equality holds) or exponential decay as $z\to \infty$, and it can be easily distinguished from the linearly independent behavior which corresponds to oscillation in phase opposition (if equality holds) or exponential growth. However, in the case $\Re p>0$, our preferred boundary condition corresponds to exponential \textit{growth} as $z\to \infty$, which is not easily distinguished from the exponential decay characterizing the linearly independent behavior: both $e^{-p z}z^{- m_{i_3}}\cdot e^{p z}z^{ m_{i_3}}=1$ and $e^{-p z}z^{- m_{i_3}}\cdot e^{-p z}z^{- m_{i_3}}$  have limits as $z\to \infty$. See \cite{Gajic2019,Gajic2020} for a more complete discussion, and proposal for a resolution, of this issue for the case $\Re p>0$.}  we have that  ${R}_{(m_{i_1},m_{i_2},m_{i_3})}^{E,p}(z)e^{-p z}z^{-m_{i_3}}$ has a limit as $z\to \infty$.  We thus conclude that $g$ defined through
\begin{align}
{R}_{(m_{i_1},m_{i_2},m_{i_3})}^{E,p}(z)&=e^{p(z-1)}z^{\frac{1}{2}(m_{i_1}+m_{i_2}+2m_{i_3}-1)}(z-1)^{-\frac{1}{2}(m_{i_1}+m_{i_2}-1)} {g}(z)\,, \label{eq:Jaffe-rescaling}
\end{align}
is the \textit{unique} (up to rescaling) nontrivial solution  to the ODE
\begin{align} \label{eq:Jaffe-ODE}
\begin{split}
&\lp[z(z-1)\frac{d^2}{dz^2}+\lp(B_1z(z-1)+B_2(z-1)+B_3\rp)\frac{d}{dz}+B_4\frac{z-1}{z}+B_5\rp]{g}(z)=0\,,
\end{split}
\end{align} 
with the boundary conditions that $g$ is smooth at $z=1$ and has a limit as $(z-1)/z\to 1$. Here,  the coefficients $B_1,\dots, B_5$ are given by 
\begin{gather} \label{eq:Jaffe-ODE-coefficients}
\begin{gathered}
B_1=2p\,, \quad 
B_2=2m_{i_3}\,, \quad
B_3=1-m_{i_1}-m_{i_2}\,, \quad
B_4=(m_{i_1}+m_{i_3}-1)(m_{i_3}+m_{i_2}-1)\,, \\
B_5=E+\frac14-(m_{i_1}m_{i_2}+m_{i_1}m_{i_3}+m_{i_2}m_{i_3})- (-1 + m_{i_1} + m_{i_2} + m_{i_3}) (p-1)\,.
\end{gathered}
\end{gather}

In light of the strong rigidity in terms of holomorphicity of $g$ around $z=1$ afforded by the classical theory of ODEs, see \cite[Chapter 5]{Olver1973}, it is natural to consider a series expansion in known special functions with such properties at $z=1$. Following \cite[Part B, Chapter 3.2]{Ronveaux1995}, we consider the ansatz for a solution to \eqref{eq:Jaffe-ODE} given by
\begin{align}
\label{eq:Jaffe}
y(z) &= \sum_{n=0}^\infty b_{n}f_n(z)\,, \qquad f_n(z):=\lp(\frac{z-1}{z}\rp)^n\,, 
\end{align}
for some coefficients $\{b_n\}_{n\geq 0}$. Notice that, as long as $y$ can be shown to converge in a neighborhood of $z=1$,  this ansatz is without loss of generality: by uniqueness of solutions to ODEs with prescribed boundary conditions, any solution  of \eqref{eq:Jaffe-ODE} which is regular at $z=1$ must be  a multiple of the ansatz \eqref{eq:Jaffe},  so  $g=Cy$ for some $C\in\mathbb{C}\backslash\{0\}$ in the domain where  $y$ is defined.

Let us turn to determining this domain. From the identities 
\begin{gather*}
\frac{d}{dz}f_n = n\lp(f_{n+1}-2f_n+f_{n-1}\rp)\,, \qquad (z-1)\frac{d}{dz}f_n= n(f_{n}-f_{n+1})\,, \qquad z(z-1)\frac{d}{dz}f_n=nf_n\,,
\end{gather*}
we deduce that the coefficients $\{b_n\}_{n\geq 0}$ in \eqref{eq:Jaffe} satisfy a three-term recurrence formula 
\begin{align}\label{eq:rec rln Kerr}
A_0^{(+1)} b_{1}+A_0^{(0)}b_{0}=0\,;\qquad
A_n^{(+1)}b_{n+1}+A_n^{(0)}b_{n}+A_n^{(-1)}b_{n-1} = 0\,, \quad n\geq 1\,,
\end{align}
with respect to the recurrence coefficients
\begin{gather*}
A_{n+1}^{(-1)}=n(n+1)+(B_3-B_2)n+B_4\,,\qquad
A_{n-1}^{(+1)}=(n-1)n +B_3 n,  \\
A_n^{(0)}=-2n^2 +(B_1+B_2-2B_3)n+B_5\,.
\end{gather*}
The case $b_n\equiv 0$ is trivial. If $b_n\not \equiv 0$, a large-$n$ asymptotic analysis of \eqref{eq:rec rln Kerr} shows that there are  two linearly independent behaviors: writing $C_0:=\sqrt{-B_1}=\sqrt{-2p}$, we have that either
\begin{gather}
\frac{b_{n+1}}{b_n}=1+ \frac{C_0}{n^{1/2}} +O\left(n^{-1}\right)\,,\label{eq:non-minimal-solution-Kerr}
\end{gather}
or 
\begin{align}
\frac{b_{n+1}}{b_n}
=1- \frac{C_0}{n^{1/2}} +O\left(n^{-1}\right)\,. \label{eq:minimal-solution-Kerr}
\end{align}
In either case, $\lim_{n\to\infty}|{b_{n+1}}/{b_n}|=1$, and hence \eqref{eq:Jaffe} converges, and defines a holomorphic function, at least for $|(z-1)/z|<1$. Hence, we deduce that $g\equiv y$ up to rescaling for $|(z-1)/z|<1$.

At this point, let us note that the continued fraction \eqref{eq:Jaffe-expansion-continued-fraction} naturally enters the problem because it is a \textit{formal} solvability condition for \eqref{eq:rec rln Kerr}:
\begin{align*}
\eqref{eq:rec rln Kerr}, \quad b_n\not\equiv 0 &\iff
\begin{dcases}
\frac{b_{n}}{b_{n-1}}=\frac{-A_n^{(-1)}}{A_n^{(0)}+A_n^{(+1)}\frac{b_{n+1}}{b_n}}
=\frac{-A_n^{(-1)}}{A_n^{(0)}+\frac{-A_n^{(+1)}A_{n+1}^{(-1)}}{A_{n+1}^{(0)}+A_{n+1}^{(+1)}\frac{b_{n+2}}{b_{n+1}}}}, &n\geq 1
\\
\frac{b_1}{b_0}=-\frac{A_0^{(0)}}{A_0^{(+1)}}
\end{dcases}\\
&\implies -\frac{A_0^{(0)}}{A_0^{(+1)}}= \frac{-A_1^{(-1)}}{A_1^{(0)}+\frac{-A_1^{(+1)}A_{2}^{(-1)}}{A_{2}^{(0)}+\frac{-A_{2}^{(+1)}A_{3}^{(+1)}}{\cdots}}} \iff \eqref{eq:Jaffe-expansion-continued-fraction}\,,
\end{align*}
which holds \textit{rigorously} as long as  the right hand side of \eqref{eq:Jaffe-expansion-continued-fraction} converges. By \cite[Theorem 1.1]{Gautschi1967}, convergence occurs if and only if the recurrence relation \eqref{eq:rec rln Kerr} possesses a so-called ``minimal solution''. By our assumptions on $p$ we have $\Re C_0>0$, and hence by comparing \eqref{eq:non-minimal-solution-Kerr} and \eqref{eq:minimal-solution-Kerr}, we find that the latter defines a minimal solution to \eqref{eq:rec rln Kerr}. 

We are now ready to conclude. If $g\not\equiv 0$, then $g=y$, up to rescaling, in $|(z-1)/z|<1$, and so $b_n\not\equiv 0$; that \eqref{eq:Jaffe-expansion-continued-fraction} holds then follows directly from \cite[Theorem 1.1]{Gautschi1967}. Conversely, if \eqref{eq:Jaffe-expansion-continued-fraction} holds, then \cite[Theorem 1.1]{Gautschi1967} tells us we can choose $b_n\not\equiv 0$ to satisfy the minimality condition \eqref{eq:minimal-solution-Kerr}. With this choice, our ansatz $y$ in \eqref{eq:Jaffe} not only is well-defined in  $|(z-1)/z|<1$ but also 
has a finite limit, given by $\sum_{n=0}^\infty b_n<\infty$, as $|(z-1)/z|\to 1$. We deduce that $g= y\not\equiv 0$, up to rescaling, in the entire $|(z-1)/z|\leq 1$ domain.
\end{proof}

From the point of view of the Teukolsky equation, Proposition~\ref{prop:Jaffe-expansion} is the statement that its point spectrum, under suitable boundary conditions, is invariant under exchanges $m_i\leftrightarrow m_j$. The precise statement we will use is:

\begin{corollary}[Hidden spectral symmetries] \label{lemma:hidden-symmetry}
Let $M>0$, $|a|<M$, $s\in\frac12\mathbb{Z}$, $m-s\in\mathbb{Z}$, $\lambdabar\in\mathbb{C}$ and $\omega\neq 0$ satisfy $\Im\omega\geq 0$. If $\Re\omega\neq m\upomega_+$, the following are equivalent:
\begin{enumerate}[label=(\roman*)]
\item  (original case and $m_1\leftrightarrow m_2$ symmetry) there is a nontrivial solution to the radial ODE~\eqref{eq:radial-ODE-alpha}  which is ingoing at $\mc{H}^+$  and outgoing at $\mc{I}^+$, i.e.\ which satisfies the boundary conditions
\begin{align*}
&\smlambda{\tilde R}{s}(r)(r-r_+)^{-\xi+\frac{s-1}{2}} \text{~smooth at $r=r_+$}\,, \\ 
&\smlambda{\tilde R}{s}(r)e^{-i\omega r}r^{-2iM\omega +s} \text{~admits an asymptotic series as $r\to \infty$ in powers of $r^{-1}$}\,;
\end{align*}
\item \label{it:case-m1-to-m3}($m_1\leftrightarrow m_3$ symmetry) there is a nontrivial solution to a radial ODE equal to \eqref{eq:radial-ODE-alpha} but where $s$ is replaced by $-s$, which satisfies the boundary conditions
\begin{align*}
&\smlambda{ R}{s}(r)(r-r_+)^{-\xi-\frac{s+1}{2}} \text{~smooth at $r=r_+$}\,, \\ 
&\smlambda{ R}{s}(r)e^{-i\omega r}r^{-2iM\omega-s}\text{~admits an asymptotic series as $r\to \infty$ in powers of $r^{-1}$}\,;
\end{align*}
\item  \label{it:case-m2-to-m3}($m_2\leftrightarrow m_3$ symmetry) there is a nontrivial solution to the radial ODE
\begin{equation}\label{eq:tilded-ODE-R} 
\begin{split}
\Delta\frac{d^2}{dr^2}\smlambda{\tilde R}{s}&-\lp[\lambdabar +2am\omega\frac{r-r_-}{r_+-r_-}
+\omega^2\lp(\Delta +2M^2\frac{r-r_-}{r_+-r_-}-a^2+4M^2\frac{r-r_-}{r-r_+}\rp)
\rp]\smlambda{\tilde R}{s}\\
&+\Delta\lp(\frac{r^2+a^2}{\Delta}\rp)^{\frac12}\frac{d}{dr}\lp[\frac{\Delta}{r^2+a^2}\frac{d}{dr}\lp(\frac{r^2+a^2}{\Delta}\rp)^{\frac12}\rp]\smlambda{\tilde R}{s}\\
&-\lp[\frac{a^2\Delta+2Mr(r^2-a^2)}{(r^2+a^2)^2}+\frac{r-r_+}{r-r_-}s^2\rp]\smlambda{\tilde R}{s}=0
\end{split}
\end{equation}
satisfying the boundary conditions
\begin{equation}\label{eq:case-m2-to-m3-bdry-conditions}
\begin{split}
&\smlambda{\tilde R}{s}(r)(r-r_+)^{2iM\omega-\frac12} \text{~smooth at $r=r_+$}\,, \\ 
&\smlambda{\tilde R}{s}(r)e^{-i\omega r}r^{\xi-\eta} \text{~admits an asymptotic series as $r\to \infty$ in powers of $r^{-1}$}\,.
\end{split}
\end{equation}
\end{enumerate}
If $\Re\omega=m\upomega_+$, then (i) and (iii) are equivalent if $s\leq 0$ and, if $s\geq 0$, (ii) and (iii) are equivalent.
\end{corollary}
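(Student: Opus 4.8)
The plan is to route everything through Lemma~\ref{lemma:SQCD-Kerr} and Proposition~\ref{prop:Jaffe-expansion}: their combined payoff is that the continued-fraction condition \eqref{eq:Jaffe-expansion-continued-fraction} governing nontriviality is built purely from the symmetric functions $\sigma_1,\sigma_2,\sigma_3$ of $(m_1,m_2,m_3)$, so it is blind to which mass occupies the distinguished ``infinity slot'' $m_3$ of \eqref{eq:Aminov-ODE}. First I would recognize each of (i)--(iii), after the substitution $z=(r-r_-)/(r_+-r_-)$, as the boundary value problem defining $R^{E,\Lambda_3}_{(m_i,m_j,m_k)}$ in Notation~\ref{notation:Kerr} for a suitable ordering of the masses \eqref{eq:SQCD-masses}. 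Matching the Frobenius exponent at $r_+$, governed by the regular-slot sum $\tfrac12(m_i+m_j-1)$, and the power law at $\mc{I}^+$, governed by $m_k$ through $\Lambda_3 z\sim i\omega r$, identifies (i) with infinity mass $m_3=-s+2iM\omega$, (ii) with infinity mass $m_1=s+2iM\omega$, and (iii) with infinity mass $m_2=\eta-\xi$. Because $m_1\leftrightarrow m_3$ is exactly $s\mapsto-s$, the ODE in (ii) is \eqref{eq:radial-ODE-alpha} with the spin reversed, while \eqref{eq:tilded-ODE-R} is \eqref{eq:Aminov-ODE} for the ordering $(m_1,m_3,m_2)$ written back in $r$; checking these two ODE identifications is a direct substitution reversing Lemma~\ref{lemma:SQCD-Kerr}. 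Since only $s^2$ enters $E$ and the angular eigenvalue is independent of $\sign s$ by Lemma~\ref{lemma:angular-eigenvalues}, the quantities $\Lambda_3,E$ and hence the whole condition \eqref{eq:Jaffe-expansion-continued-fraction} are common to all three problems.

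Granting these identifications, the three nontriviality statements are each equivalent to \eqref{eq:Jaffe-expansion-continued-fraction} by Proposition~\ref{prop:Jaffe-expansion}, hence to one another --- but only provided the hypotheses of Notation~\ref{notation:Kerr} hold for every ordering invoked, and verifying these is the real content. The condition on $\Lambda_3=2i\omega\sqrt{M^2-a^2}$ is free, as $\Re\Lambda_3=-2\sqrt{M^2-a^2}\,\Im\omega\le0$ and $\Lambda_3\neq0$. For the mass hypotheses I would record $m_1+m_2=s-2\xi$, $m_2+m_3=-s-2\xi$, $m_1+m_3=4iM\omega$, with $\Re\xi=\Im\omega/(2\kappa_+)\ge0$ and $\Im(m_1+m_2)=\Im(m_2+m_3)=(\Re\omega-m\upomega_+)/\kappa_+$. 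In the generic regime $\Re\omega\neq m\upomega_+$ the sums $m_1+m_2$ and $m_2+m_3$ have nonzero imaginary part, so the conditional hypothesis ``$\Re(\cdot)\le0$ when $\Im(\cdot)=0$'' on the regular-slot sum is vacuous, while $m_1+m_3=4iM\omega$ furnishes, for the second hypothesis, a sum with nonpositive real part and nonzero value. Crucially, \eqref{eq:Aminov-ODE} is symmetric in its first two mass slots, which frees me to choose which regular mass is paired with the infinity mass when testing that hypothesis: in every ordering the tested sum can be taken to be either $4iM\omega$ or one carrying the favorable sign of $s$ --- e.g.\ in (iii) one uses $m_2+m_3=-s-2\xi$ for $s\ge0$ and $m_1+m_2=s-2\xi$ for $s\le0$. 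This delivers the full equivalence (i)$\iff$(ii)$\iff$(iii) away from the threshold.

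The main obstacle, and the source of the $\sign s$ dichotomy, is the superradiant threshold $\Re\omega=m\upomega_+$: there $\xi$ is real, the sums $m_1+m_2=s-2\xi$ and $m_2+m_3=-s-2\xi$ become real, and the conditional hypothesis switches on, forcing $s\le2\xi$ for the ordering behind (i) and $s\ge-2\xi$ for the ordering behind (ii). Since $\xi\ge0$, these hold robustly for all admissible $\Im\omega$ precisely when $s\le0$ and $s\ge0$, respectively. The ordering behind (iii), by contrast, has purely imaginary regular-slot sum $m_1+m_3=4iM\omega$ and remains admissible, so (iii) acts as the hub, equivalent to (i) when $s\le0$ and to (ii) when $s\ge0$ --- exactly the stated restricted equivalence. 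The most delicate bookkeeping is reconciling the boundary-condition conventions at the threshold: the ingoing condition written in (i), $(r-r_+)^{-\xi+(s-1)/2}$, is the generic form and coincides with the defined notion of ``ingoing at $\mc{H}^+$'' precisely when $\Re\omega\neq m\upomega_+$ or $s\le0$, and similarly (ii) carries the generic ingoing form for spin $-s$, valid when $-s\le0$; I would check that in each regime where (i) or (ii) is invoked its stated condition indeed matches Notation~\ref{notation:Kerr}, and that the two Frobenius exponents at $r_+$ stay distinct there so that no logarithmic solutions intervene.
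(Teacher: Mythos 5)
Your proposal is correct and follows exactly the paper's own route: the paper's proof of this corollary is precisely the one-line observation that, after the substitution $z(r_+-r_-)=r-r_-$, each of (i)--(iii) is the boundary value problem of Notation~\ref{notation:Kerr} for a permutation of the masses \eqref{eq:SQCD-masses} (infinity slot $m_3$, $m_1$, $m_2$ respectively), so the permutation-invariant continued-fraction criterion of Proposition~\ref{prop:Jaffe-expansion} renders them equivalent. Your additional verification of the admissibility hypotheses of Notation~\ref{notation:Kerr} for each ordering --- using the regular-slot swap freedom, and deriving the $\sign s$ dichotomy at $\Re\omega=m\upomega_+$ from the reality of $m_1+m_2=s-2\xi$ and $m_2+m_3=-s-2\xi$ there --- simply spells out what the paper's one-sentence proof leaves implicit.
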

\begin{proof} The conclusion follows from Proposition~\ref{prop:Jaffe-expansion} after setting $z(r_+-r_-)=r-r_-$ and rewriting the ODE \eqref{eq:tilded-ODE-R} and the boundary conditions in terms of $(m_1,m_2,m_3,E,p)$.
\end{proof}

A few remarks concerning the above symmetries are in order.

\begin{remark}[Spin-reversal symmetry] \label{rmk:spin-reversal-symmetry} Note that the exchange $m_1\leftrightarrow m_3$ corresponds to the map $s\mapsto -s$. (This is still the case if $\lambdabar$ is taken to be  one of the angular eigenvalues given by Lemma~\ref{lemma:angular-eigenvalues}, as the latter are independent of the sign of $s$.) Thus,  Corollary~\ref{lemma:hidden-symmetry}\ref{it:case-m1-to-m3}, reflects the fact that a mode solution with spin $-s$ exists if and only if a mode solution with respect to the same frequency and black hole parameters exists with spin $+s$. The latter is a well-known statement in the literature which follows from the Teukolsky--Starobinsky identities introduced in \cite{Teukolsky1974,Starobinsky1974}, see e.g.\ \cite{Whiting1989}, and we refer the reader to \cite[Lemma 2.19]{TeixeiradaCosta2019} for an explicit proof of the implication.
\end{remark}

\begin{remark}[Whiting's transformation] Let $\smlambda{\tilde{R}}{s}$ be a solution to \eqref{eq:tilded-ODE-R}; setting 
\begin{equation}
\tilde{u}=(r^2+a^2)^{1/2}/\Delta^{1/2}\smlambda{\tilde{R}}{s}\,, \label{eq:Whiting-tilde-u}
\end{equation} 
we find that the ODE for the $\tilde{u}$ quantity is 
$\tilde{u}''+\tilde{V}\tilde{u}=0$, where we recall  that prime denotes a derivative with respect to the tortoise coordinate $r^*$ and where
\begin{align}
\begin{split}
\tilde{V}&=\frac{\Delta}{(r^2+a^2)^2}\lp[-\lambdabar -2am\omega\frac{2(r-M)}{r_+-r_-}
+\omega^2\lp(4M^2\frac{2(r-M)}{r-r_+}-a^2+4Mr_-\frac{2(r-M)}{r_+-r_-}\rp)
\rp]\\
&\qquad-\frac{\Delta}{(r^2+a^2)^2}\lp[\frac{a^2\Delta+2Mr(r^2-a^2)}{(r^2+a^2)^2}+\frac{r-r_+}{r-r_-}s^2\rp]+\frac{\Delta}{(r^2+a^2)^2}\omega^2\lp(\Delta+4M(r-M)\rp)\,, \label{eq:Whiting-ODE-potential}
\end{split}
\end{align}
as noted by Hatsuda \cite{Hatsuda2020}. Under the assumption $|a|<M$, it has been known  for some time that this ODE has the same (empty) point spectrum as \eqref{eq:radial-ODE-alpha} for $\Im\omega\geq 0$ and $\omega\neq 0$: indeed, this is exactly the ODE obtained by Whiting's (injective) radial transformation in the seminal mode stability paper \cite{Whiting1989} for $\Im\omega>0$ which is also used to show mode stability for $\omega\in\mathbb{C}$ \cite{Shlapentokh-Rothman2015}, see also \cite{Andersson2017} and \cite[Section 3.2]{TeixeiradaCosta2019}.
\end{remark}

\begin{remark}[The extremal limit] Lemma~\ref{lemma:SQCD-Kerr} clearly fails to hold for $|a|=M$, as the variable $z$ and the parameter $m_2$ are not well-defined in this case. The latter degeneration is remedied if we exchange $m_2\leftrightarrow m_3$, as from \eqref{eq:SQCD-masses} we obtain
\begin{gather*}
m_1 =s+ 2iM\omega\,, \qquad p\, m_2=-2\omega(2M^2\omega-am)\,, \qquad 
m_3=-s+2iM\omega\,, \\ p^2=0\,, \qquad E= -\lambdabar-s^2+7M^2\omega^2-\frac14\,.
\end{gather*}
The former degeneration is also cured if we take $z=\frac{z-\tilde r_+}{\tilde r_+-M}$ for some $\tilde r_+>r_+=r_-=M$. 

Let $\smlambda{\tilde{R}}{s}$ be a solution to \eqref{eq:Aminov-ODE} under these conditions and let $'$ denote a derivative with respect to a modified $r^*$ variable defined by
\begin{align*}
\frac{dr^*}{dr}= \frac{r^2+\tilde{r}_+r_-}{(r-r_-)(r-\tilde{r}_+)}
\end{align*}
and an initial condition. Then, the rescaling 
\begin{equation}
\tilde{u}=\frac{(r^2+\tilde{r}_+r_-)^{1/2}}{[(r-r_-)(r-\tilde{r}_+)]^{1/2}}\smlambda{\tilde{R}}{s} \label{eq:extremal-transf-tilde-u}
\end{equation} 
solves the ODE given by $u''+\tilde{V} \tilde{u}=0$, where
\begin{align}
\begin{split}
\tilde{V}&=\frac{(r-\tilde r_+)(r-M)}{(r^2+M\tilde r_+)^2}\lp[-\lambdabar -2am\omega\frac{2r-\tilde r_+-M}{\tilde r_+-M}\rp]\\
&\qquad+\frac{(r-\tilde r_+)(r-M)}{(r^2+M\tilde r_+)^2}\omega^2\lp(4M^2\frac{2r-\tilde r_+-M}{r-\tilde r_+}-M^2+4M^2\frac{2r-\tilde r_+-M}{\tilde r_+-M}\rp)\\
&\qquad-\frac{(r-\tilde r_+)(r-M)}{(r^2+M\tilde r_+)^2}\lp[\frac{M \tilde r_+(r-\tilde r_+)(r-M)+(\tilde r_++M)r(r^2-M\tilde r_+)}{(r^2+M\tilde r_+)^2}+\frac{r-\tilde r_+}{r-M}s^2\rp]\,, \label{eq:extremal-transf-ODE-potential}
\end{split}
\end{align}
as noted by Hatsuda  \cite{Hatsuda2020}. The second author has shown in earlier work that this ODE has the same (empty) point spectrum as \eqref{eq:radial-ODE-alpha} with $|a|=M$ when $\Im\omega\geq 0$ and $\omega\neq 0,m\upomega_+$ by means of an (injective) radial transformation \cite[Section 3.1]{TeixeiradaCosta2019}. Thus, we fully expect that an analogue of Proposition~\ref{prop:Jaffe-expansion} is possible in the doubly confluent Heun, or extremal $|a|=M$ Kerr, case. The additional difficulty one must contend with in the proof is that swapping certain masses changes the nature of the singularities in the ODE: for instance, \eqref{eq:extremal-transf-ODE-potential} is of reduced confluent Heun type whereas \eqref{eq:radial-ODE-alpha}, in the case $|a|=M$, is of doubly confluent Heun type. We refer the reader to \cite{TeixeiradaCosta2019} or the book \cite{Ronveaux1995} for a precise definition of reduced and doubly confluent Heun ODEs and further properties of such equations.
\end{remark}


\subsection{Proof of mode stability}
\label{sec:proof}

In this section, we give a proof of Theorem~\ref{thm:mode-stability-Kerr-intro}, alternative  to those of the literature \cite{Whiting1989,Shlapentokh-Rothman2015, Andersson2017,TeixeiradaCosta2019}, which relies on the hidden symmetries uncovered in Proposition~\ref{prop:Jaffe-expansion} and Corollary~\ref{lemma:hidden-symmetry}. To be precise, we prove
\begin{theorem}\label{thm:mode-stability-Kerr} Fix $M>0$, $|a|< M$, $s\in\frac12\mathbb{Z}$, $m-s\in\mathbb{Z}$ and $(\omega,\lambdabar)$ such that we have either 
$$\Im\omega>0\text{~and~}\Im(\lambdabar\overline{\omega})\leq 0\quad\text{or}\quad\omega\in\mathbb{R}\backslash\{0\}\text{~and~}\lambdabar\in\mathbb{R}\,.$$
If $\smlambda{R}{s}(r)$ is a solution to \eqref{eq:radial-ODE-alpha} with respect to these parameters which is ingoing at $\mc H^+$ and outgoing at $\mc I^+$, then $\smlambda{R}{s}\equiv 0$.
\end{theorem}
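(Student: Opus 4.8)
The plan is to assume, for contradiction, that a nontrivial ingoing/outgoing solution $\smlambda{R}{s}$ to \eqref{eq:radial-ODE-alpha} exists, and to transport it through the hidden symmetry of Corollary~\ref{lemma:hidden-symmetry} into a setting where a coercive energy becomes available. Concretely, since $\smlambda{R}{s}$ realizes item (i) of Corollary~\ref{lemma:hidden-symmetry}, item (iii) furnishes a nontrivial solution $\smlambda{\tilde R}{s}$ of the ``$m_2\leftrightarrow m_3$'' ODE \eqref{eq:tilded-ODE-R} with the boundary conditions \eqref{eq:case-m2-to-m3-bdry-conditions}; this uses $\Re\omega\neq m\upomega_+$, while the borderline frequency $\Re\omega=m\upomega_+$ is absorbed by combining the partial equivalences in Corollary~\ref{lemma:hidden-symmetry} with the spin-reversal symmetry of Remark~\ref{rmk:spin-reversal-symmetry}. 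I would then pass to the Whiting variable $\tilde u=(r^2+a^2)^{1/2}\Delta^{-1/2}\smlambda{\tilde R}{s}$ of \eqref{eq:Whiting-tilde-u}, which satisfies the Schr\"odinger-form equation $\tilde u''+\tilde V\tilde u=0$ with potential \eqref{eq:Whiting-ODE-potential}, prime denoting $\d/\d r^*$. The decisive structural gain is that the horizon exponent of $\tilde u$ is now governed by $2iM\omega$ rather than by $\xi\propto(\omega-m\upomega_+)$: \emph{superradiance has been removed}, and $\tilde V$ depends on $\lambdabar$ only through the single real-analytic multiple $-\Delta(r^2+a^2)^{-2}\lambdabar$.

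From \eqref{eq:case-m2-to-m3-bdry-conditions} I would read off the endpoint asymptotics $\tilde u\sim c\,(r-r_+)^{-2iM\omega}\sim c\,e^{-4iM\kappa_+\omega r^*}$ as $r^*\to-\infty$ (using $2Mr_+=r_+^2+a^2$, so $r-r_+\sim e^{2\kappa_+ r^*}$) and $\tilde u\sim c'\,e^{i\omega r}r^{m_2}$ as $r^*\to+\infty$. In the case $\Im\omega>0$ both behaviors decay exponentially, so $\tilde u,\tilde u'$ vanish at $r^*=\pm\infty$. Multiplying $\tilde u''+\tilde V\tilde u=0$ by $\overline\omega\,\overline{\tilde u}$, integrating over $r^*\in\mathbb R$, integrating by parts (the boundary term drops), and taking imaginary parts yields
\begin{equation*}
\Im\omega\int_{-\infty}^\infty |\tilde u'|^2\,\d r^* + \int_{-\infty}^\infty \Im(\overline\omega\,\tilde V)\,|\tilde u|^2\,\d r^* = 0\,.
\end{equation*}
A direct computation from \eqref{eq:Whiting-ODE-potential} gives $\Im(\overline\omega\,\tilde V)=\tfrac{\Delta}{(r^2+a^2)^2}\big[-\Im(\overline\omega\lambdabar)+|\omega|^2\Im\omega\,Q(r)+\Im\omega\,P(r)\big]$, where the superradiant term $-2am\omega\tfrac{r-r_-}{r_+-r_-}$ drops out because $\overline\omega\omega$ is real, and where $Q(r):=\Delta+2M^2\tfrac{r-r_-}{r_+-r_-}-a^2+4M^2\tfrac{r-r_-}{r-r_+}$ and $P(r):=\tfrac{a^2\Delta+2Mr(r^2-a^2)}{(r^2+a^2)^2}+\tfrac{r-r_+}{r-r_-}s^2$ are the real bracketed factors in \eqref{eq:Whiting-ODE-potential}, both positive for $r>r_+$ (using $|a|<M$, whence $2M^2-a^2>0$ and $r^2>a^2$). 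Since $\Im\omega>0$ and $\Im(\lambdabar\overline\omega)\leq 0$ by hypothesis, every term is nonnegative, and the strict positivity of $\Im(\overline\omega\tilde V)$ forces $\tilde u\equiv 0$.

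In the case $\omega\in\mathbb R\setminus\{0\}$, $\lambdabar\in\mathbb R$, the potential $\tilde V$ is real, so $\overline{\tilde u}$ solves the same equation and the Wronskian $W=\overline{\tilde u}\tilde u'-\tilde u\overline{\tilde u}'$ is exactly constant in $r^*$. Evaluating $W$ at the two ends via the asymptotics above gives $W=2i\omega|c'|^2$ at infinity and $W=-8iM\kappa_+\omega|c|^2$ at the horizon; crucially these fluxes carry \emph{opposite} signs relative to $i\omega$, which is precisely the non-superradiant character of the transformed problem. Equating them and dividing by $i\omega\neq 0$ forces $|c'|^2=-4M\kappa_+|c|^2$, whence $c=c'=0$; since the outgoing condition makes $\tilde u$ a multiple of the one-dimensional normalized outgoing solution with amplitude $c'$, this gives $\tilde u\equiv 0$. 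In both cases $\tilde u\equiv 0$ implies $\smlambda{\tilde R}{s}\equiv 0$, and running Corollary~\ref{lemma:hidden-symmetry} backwards yields $\smlambda{R}{s}\equiv 0$.

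The main obstacle is the sign input in the energy step: the argument succeeds \emph{only} because Whiting's transformation \eqref{eq:Whiting-tilde-u}, accessed here through the $m_2\leftrightarrow m_3$ spectral symmetry, converts the non-coercive (superradiant) original problem into one whose conserved current has favorable signs. Verifying $\Im(\overline\omega\tilde V)\geq 0$ and the opposite-sign boundary fluxes therefore reduces to checking positivity of $Q(r),P(r)$ for $r>r_+$ together with the hypotheses on $\omega$ and $\lambdabar$; this is where subextremality $|a|<M$ enters decisively and is the true crux relative to the bookkeeping reductions of the earlier steps. A secondary technical point is justifying the vanishing of boundary terms and the convergence of the integrals from the endpoint asymptotics, and the careful treatment of the borderline frequency $\Re\omega=m\upomega_+$.
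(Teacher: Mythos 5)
Your proposal is correct and follows essentially the same route as the paper: transfer through the $m_2\leftrightarrow m_3$ symmetry of Corollary~\ref{lemma:hidden-symmetry} to the transformed ODE \eqref{eq:tilded-ODE-R}, pass to the Whiting variable \eqref{eq:Whiting-tilde-u}, and run the conserved-current argument there (your Wronskian $W$ is, for real $\omega$, the paper's current $Q^T[\tilde u]$ up to the factor $\omega/2i$), concluding by sign-definiteness of $\Im(\overline{\omega}\tilde V)$ when $\Im\omega>0$ and by vanishing boundary amplitudes plus unique continuation when $\omega\in\mathbb{R}\backslash\{0\}$. Your explicit positivity checks of the bracketed factors in \eqref{eq:Whiting-ODE-potential} and your handling of the borderline frequency $\Re\omega=m\upomega_+$ via the spin-reversal symmetry of Remark~\ref{rmk:spin-reversal-symmetry} merely spell out details the paper leaves implicit.
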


\begin{remark} Note that, by Lemma~\ref{lemma:angular-eigenvalues}, it is clear that the conditions on $\lambdabar$ which we impose here are verified by the angular eigenvalues associated to mode solutions. Hence, Theorem~\ref{thm:mode-stability-Kerr} implies, and is stronger than, Theorem~\ref{thm:mode-stability-Kerr-intro}.
\end{remark}

\begin{proof}[Proof of Theorem~\ref{thm:mode-stability-Kerr}] Take $|a|<M$, let $\smlambda{\tilde{R}}{s}$ be a solution to \eqref{eq:tilded-ODE-R} with boundary conditions \eqref{eq:case-m2-to-m3-bdry-conditions} and define $\tilde{u}$ by \eqref{eq:Whiting-tilde-u}. Now consider the microlocal energy current associated to the stationary Killing field on Kerr:
\begin{align}
Q^T[\tilde u]:= \Im \lp(\overline{\omega \tilde u }\tilde u'\rp)\Rightarrow -(Q^T[u])'=\Im\omega |\tilde u'|^2+\Im\lp(\overline{\omega} \tilde V\rp)|\tilde{u}|^2\,, \label{eq:QT-mode-stability}
\end{align}
where prime again denotes a derivative with respect to $r^*$. Note that  \eqref{eq:Whiting-ODE-potential} is real if $\omega$ is real, that the coefficient of $\lambdabar$ in \eqref{eq:Whiting-ODE-potential} and the $(\omega,m,\lambdabar)$-independent part of $\tilde{V}$ are non-positive. 
Thus, from \eqref{eq:QT-mode-stability} and our assumptions, we obtain
\begin{gather}\label{eq:mode-stability-intermediate}
\begin{gathered}
\text{if~}\Im\omega>0\,,\quad 0\geq \Im\omega \int_{-\infty}^\infty \lp( |\tilde{u}'|^2+|\omega|^2\frac{\Delta}{(r^2+a^2)^2}|\tilde{u}|^2\rp)dr^*\,;\\
\text{if~}\omega\in\mathbb{R}\backslash\{0\}\,,\quad \frac{r_+-r_-}{r_+}\omega^2|\tilde{u}(-\infty)|^2+ \omega^2|\tilde{u}(+\infty)|^2= 0\,,
\end{gathered}
\end{gather}

By a direct argument if $\Im\omega>0$ or, if $\omega\in\mathbb{R}\backslash\{0\}$, by unique continuation for solutions to ODEs with the prescribed boundary conditions, see e.g.\ \cite[Lemma 5.1]{Shlapentokh-Rothman2015} and \cite[Lemma 4.1]{TeixeiradaCosta2019}, \eqref{eq:mode-stability-intermediate} implies that $\tilde{u}\equiv 0$ and so $\smlambda{\tilde R}{s}(r)\equiv 0$. By Corollary~\ref{lemma:hidden-symmetry}, using case (i) if $s\leq 0$ and case (ii) if $s\geq 0$, we must also have that $\smlambda{R}{s}(r)\equiv 0$.
\end{proof}


\subsection{Epilogue: mass symmetries within the  MST method}

In this section, we provide an alternative proof of Corollary~\ref{lemma:hidden-symmetry} which is based not on Jaffé expansions of Proposition~\ref{prop:Jaffe-expansion} but on the matching of (confluent) hypergeometric expansions. In the study of quasinormal modes, this method was first introduced by Mano, Suzuki and Tagasuki \cite{Mano1996}, and is thus  known as the MST method. In the exposition below we follow the review \cite{Sasaki2003}.

Let us fix $M>0$, $|a|<M$, $s\in\frac12\mathbb{Z}$, $m-s\in\mathbb{Z}$, $\omega\in\mathbb{C}\backslash\{0\}$ with $\Re\omega\geq 0$ and $\Im\omega\geq 0$, and $\lambdabar\in\mathbb{C}$, assuming additionally that $s\leq 0$ if $\Re\omega=m\upomega_+$. To aid the reader, we write the MST quantities $\epsilon$, $\tau$ and $\kappa$ in \cite{Sasaki2003} in terms of the quantities identified in \eqref{eq:SQCD-masses} and vice-versa:
\begin{gather*}
\epsilon:= 2M\omega=-i\frac{m_1+m_3}{2}\,,\quad 
\tau:= \frac{2M^2\omega-am}{\sqrt{M^2-a^2}}=-im_2\,,\quad 
\kappa:= \frac{\sqrt{M^2-a^2}}{M}=-i\frac{p}{\epsilon}\,,
\\
\epsilon_+:= \frac{\epsilon+\tau}{2}=-\frac{i}{4}\left(m_1+2m_2+m_3\right),
\quad 
m_1=s+i\epsilon\,, \quad 
m_3=-s+i\epsilon\,, \quad s=\frac{m_1-m_3}{2}.
\end{gather*}
Within the MST formalism, existence of a non-trivial solution to \eqref{eq:radial-ODE-alpha} with outgoing boundary conditions at $\mc{I}^+$ and ingoing boundary conditions at $\mc{H}^+$ is, by \cite[Equations (165), (167) and (168)]{Sasaki2003} and the properties of the Gamma function, equivalent to the condition\footnote{When comparing to the references given, the reader may find that the factor ${\Gamma\left(1-m_1-m_2\right)}$ is missing from the denominator of \eqref{eq:MST-qnm-condition}. This is because  
$$1-m_1-m_2=1-s+2i\frac{am-2Mr_+\Re\omega}{r_+-r_-}+\frac{4M r_+}{r_+-r_-}\Im\omega$$
cannot be a negative integer under our assumptions, so that $|\Gamma(1-m_1-m_2)|<\infty$.
} 
\begin{align}
\frac{B}{\Gamma\left(\nu+1+m_3\right)}\lp( \frac{A_\nu C_\nu}{D_\nu}-ie^{-i\pi \nu}\frac{A_{-\nu-1}C_{-\nu-1}}{D_{-\nu-1}}\rp)=0\,, \label{eq:MST-qnm-condition}
\end{align}
where we have used the shorthand notation 
\begin{align*}
A_\nu&:=\left(-2ip\right)^{-\nu}\frac{\Gamma\left(2\nu+2\right)}{\Gamma\left(\nu+1+m_1 \right)\Gamma\left(\nu+1+m_2 \right)\Gamma\left(\nu+1+m_3 \right)}\,,
\\ 
B&:=e^{p}\left(-2ip\right)^{(m_3-m_1)/2}2^{(m_1-m_3)/2}\,,\\
C_\nu&:=\sum_{n=0}^{\infty}
\frac{(-1)^n\Gamma\left(n+2\nu+1 \right)}{n!}\frac{b_n^{\nu}}{\Gamma\left(1+N-m_1\right)\Gamma\left(1+N-m_2\right)\Gamma\left(1+N-m_3\right)}\,,
 \\ 
D_\nu&:=\sum_{n=-\infty}^{0}
\frac{(-1)^n}{(-n)! \left(2\nu+2\right)_n}\frac{b_n^{\nu}}{\Gamma\left(1+N+m_1\right)\Gamma\left(1+N+m_2\right)\Gamma\left(1+N+m_3\right)}\,,
\end{align*}
writing $N:=n+\nu$ and where the coefficients $b_n^\nu$, obtained from the $a_n^\nu$ coefficients of \cite[Equation (120)]{Sasaki2003} through $b_n^{\nu}:=\Gamma\left(1+m_1+N\right)\Gamma\left(1+m_2+N\right)\Gamma\left(1-m_3+N\right)a_n^{\nu}$, satisfy the following recursive relation, c.f.\ \cite[Equations (123, 124)]{Sasaki2003}:
\begin{gather*}
\alpha_n^\nu b_{n+1}^\nu+\beta_n^{\nu}b_n^{\nu}+\gamma_n b_{n-1}^\nu=0\,, \\ 
\alpha_n^\nu:=\frac{p}{(1+N)(3+2N)}\,, \qquad \gamma_n^\nu:=-
\frac{p\left(\prod_{j=1}^3(N^2-m_j^2)\right)
}{N(-1+2N)}\,, \\
\beta_n^{\nu}:=\frac{1}{4}+E+N(N+1)+p\frac{m_1m_2m_3}{N(1+N)}\,.
\end{gather*}
Furthermore, the parameter $\nu$ is the so-called renormalized angular momentum parameter and its value is chosen to ensure that the continued fraction equation
\begin{align}
\frac{\alpha_{n-1}^\nu\gamma_{n}^\nu}{\lp(\beta_n^\nu -\frac{\alpha_n^\nu\gamma_{n+1}^\nu}{\beta_{n+1}^\nu-\cdots}\rp)\lp(\beta_{n-1}^\nu -\frac{\alpha_{n-2}^\nu\gamma_{n-1}^\nu}{\beta_{n-2}^\nu-\cdots}\rp)}=1\label{eq:MST-nu-equation}
\end{align}
holds for an arbitrary choice of $n\in\mathbb{Z}$, c.f.\ \cite[Equation (133)]{Sasaki2003}, and that $1+\nu+m_3$ is not a non-positive integer. Indeed, these two conditions are not mutually exclusive: the condition \eqref{eq:MST-nu-equation} is invariant by translation in $\mathbb{Z}$, i.e.\ if $\nu=\nu_0$ satisfies \eqref{eq:MST-nu-equation} then so does $\nu=\nu_0+k$ for any $k\in\mathbb{Z}$.  Noting that $B\neq 0$, this choice of $\nu$ ensures that \eqref{eq:MST-qnm-condition} is equivalent to 
\begin{align}
\frac{A_\nu C_\nu}{D_\nu}-ie^{-i\pi \nu}\frac{A_{-\nu-1}C_{-\nu-1}}{D_{-\nu-1}}=0\,. \label{eq:MST-qnm-condition-simplified}
\end{align}

Clearly, $\alpha_{n}^\nu$, $\beta_{n}^\nu$ and $\gamma_n^\nu$ are all separately invariant under the map $(m_1,m_2,m_3)\mapsto(m_i,m_j,m_k)$ with $i\neq j\neq k$, and so $b_n^\nu$ and $\nu$ must also be preserved. Consequently, the same is true for $A_\nu$, $C_\nu$ and $D_\nu$. Hence, we conclude  that the condition \eqref{eq:MST-qnm-condition-simplified} is invariant under the map $(m_1,m_2,m_3)\mapsto(m_{i_1},m_{i_2},m_{i_3})$ with $i_1\neq i_2\neq i_3\neq i_1$. By using the fact that \eqref{eq:radial-ODE-alpha} and the boundary conditions are invariant under taking at once $\Re\omega\mapsto-\Re\omega, m\mapsto -m$ and complex conjugation, we arrive at the same conclusion for $\Re\omega\leq 0$. The statement of Corollary~\ref{lemma:hidden-symmetry} then follows easily from exploiting these symmetries.


\section{Subextremal Kerr-de Sitter black holes}
\label{sec:kds}

\subsection{Geometry of the exterior}
\label{sec:geometry-KdS}

In this section, we recall for the benefit of the reader some of the basic geometric properties of subextremal Kerr-de Sitter black holes, see for instance \cite[Section 1]{Dyatlov2011a} for more details. 

Fix $M>0$, $\Lambda>0$ and $|a|<3/\Lambda$ satisfying
\begin{align}
27 M^4 + L^4 (-1 + \Xi) \Xi^4 + 
 L^2 M^2 (-2 + \Xi) (-32 + \Xi (32 + \Xi))<0\, , \label{eq:subextremal-condition-KdS}
\end{align}
where here and throughout the section we use the notation
\begin{equation}
L^2:=\frac{3}{\Lambda}\,, \qquad \Xi:=1+\frac{a^2}{L^2}\,,
\end{equation}
Then, the following quartic function
\begin{align}
\Delta:=(r^2+a^2)\lp(1-\frac{r^2}{L^2}\rp)-2Mr=-\frac{1}{L^2}(r-r_2)(r-r_1)(r-r_0)(r-{r}_3)\,. \label{eq:kerr-dS-delta}
\end{align}
has four distinct real roots, denoted $0<r_0<r_1<r_2$ and $r_3=-r_2-r_1-r_0<0$.

The subextremal Kerr black hole exterior is a manifold covered globally (modulo the usual degeneration of polar coordinates) by so-called Chambers--Moss coordinates $(t,r,\theta,\phi)\in \mathbb{R}\times (r_1,r_2)\times \mathbb{S}^2$ \cite{Chambers1994} and endowed with the Lorentzian metric
\begin{align*}
g=-\frac{\Delta}{\Xi^2\rho^2}(dt-a\sin^2\theta d\phi)^2+\frac{\rho^2}{\Delta}dr^2+\frac{\rho^2}{\Delta_\theta}d\theta^2+\frac{\Delta_\theta\sin^2\theta}{\Xi^2\rho^2}\lp(adt-(r^2+a^2)d\phi\rp)^2\,,
\end{align*}
where we have
\begin{align*}
\rho^2:=r^2+a^2\cos^2\theta\,, \qquad \Delta_\theta=\sin^2\theta + \Xi\cos^2\theta\,.
\end{align*}

Finally, we will also find it convenient to work with a rescaling of the Chambers--Moss $r$, the tortoise coordinate $r^*=r^*(r)$ defined by
\begin{align*}
\frac{dr^*}{dr}=\frac{\Xi(r^2+a^2)}{\Delta}\,,
\end{align*}
and an initial condition. Note that we have
$$\lim_{r\to r_1}r^*(r)=-\infty\,, \qquad \lim_{r\to r_2}r^*(r)=+\infty\,.$$
Furthermore, we remark that in this section derivatives denoted by $'$ will be assumed to be taken with respect to $r^*$, unless otherwise stated.

\subsection{The Teukolsky equation and its separability}
\label{sec:Teukolsky-separable-KdS}

Fix $M>0$, and $|a|,L>0$  subextremal Kerr-de Sitter parameters. For $s\in\frac12 \mathbb{Z}$, the Teukolsky equation in Kerr-de Sitter \cite{Khanal1983} is 
\begin{equation}
\begin{split}
\Box_g\upalpha^{[s]} &+\frac{s}{\rho^2\Xi^2}\frac{d\Delta}{dr}\p_r\upalpha^{[s]} +\frac{2s}{\rho^2\Xi}\lp[\Xi\frac{a}{2\Delta}\frac{d\Delta}{dr}+i\frac{\cos\theta}{\sin^2\theta}-\frac{ia^2}{L^2\Delta_\theta}\cos\theta\rp]\p_\phi\upalpha^{[s]} 
\\
&+\frac{2s}{\rho^2}\lp(\frac{(r^2+a^2)}{2\Delta}\frac{d\Delta}{dr}-2r -ia\frac{\cos\theta}{\Delta_\theta}\rp)\p_t\upalpha^{[s]} \\
&+\frac{s}{\rho^2\Xi^2}\lp[1-\frac{a^2}{L^2}-\frac{2(3+2s)r^2}{L^2}-\Xi^2 s\frac{\cot^2\theta}{\Delta_\theta}\rp]\upalpha^{[s]} =\frac{2\mu}{L^2}\upalpha^{[s]}\,,
\end{split}\label{eq:Teukolsky-equation-KdS}
\end{equation}
in Chambers--Moss coordinates, writing $\Box_{g}$ for the covariant wave operator in the Kerr-de Sitter metric. Here, $\swei{\upalpha}{s}$ is a smooth, $s$-spin weighted function on the exterior, $\mc{R}$, of the Kerr-de Sitter black hole. The parameter $\mu$ must be taken to be 1 if $s\neq 0$, for instance in the especially interesting setting of gravitational perturbations, $s=\pm 2$, of Kerr-de Sitter black holes. In the scalar case $s=0$, $\mu$ represents a dimensionless Klein--Gordon mass, interpolating been the massless wave equation, corresponding to $\mu=0$, and the conformal Klein--Gordon equation, corresponding to $\mu=1$.

By analogy with the wave equation case \cite{Carter1968}, the Teukolsky equation \eqref{eq:Teukolsky-equation-KdS} in Kerr-de Sitter backgrounds is separable, i.e.\ it admits separable solutions: 
\begin{align} \label{eq:separable-ansatz-KdS}
\swei{\upalpha}{s}(t,r,\theta,\phi)=e^{-i\omega t}e^{im\phi}{S}^{[s],\,\Xi,a\omega}_{m,\lambdabar}(\theta)(r-r_3)^{-1}\Delta^{-\frac{s+1}{2}}{R}^{[s],\,\Xi, a, \omega}_{m,\lambdabar}(r)\,,
\end{align}
for $\omega\in\mathbb{C}$, $m-s\in\mathbb{Z}$ and a separation constant $\lambdabar$. Plugging  \eqref{eq:separable-ansatz-KdS} into \eqref{eq:Teukolsky-equation-KdS}, we find that ${S}^{[s],\,\Xi,a\omega}_{m,\lambdabar}$ and ${R}^{[s],\,\Xi,a, \omega}_{m,\lambdabar}$ each satisfy ODEs, which are introduced in the next two subsections.

\subsubsection{The angular ODE and its eigenvalues}

Let $s\in\mathbb{Z}$ be fixed. Consider \eqref{eq:separable-ansatz-KdS} and replace $a\omega$ by a parameter $\nu\in\mathbb{C}$. The angular ODE verified by ${S}^{[s],\,\Xi,\nu}_{m,\lambdabar}$ is
\begin{gather}
\begin{split}
&\frac{1}{\sin\theta}\frac{d}{d\theta}\lp(\Delta_\theta\sin\theta\frac{d}{d\theta}\rp)S_{m,\lambdabar}^{[s],\,\Xi,\nu}(\theta)-2(\Xi-1)\cos^2\theta\lp[(\mu+2s^2)+\frac{\Xi^2}{\Delta_\theta} m\nu\rp]S_{m,\lambdabar}^{[s],\,\nu}(\theta)+\lambdabar S_{m,\lambdabar}^{[s],\,\Xi,\nu}(\theta)\\
&\qquad-\lp[\frac{\lp(\Xi m+s\cos\theta\lp(\Xi-2(\Xi-1)\sin^2\theta\rp)\rp)^2}{\sin^2\theta\Delta_\theta}-\Xi^2\nu^2\frac{\Xi\cos^2\theta}{\Delta_\theta}+2\nu s\cos\theta\frac{\Xi^2}{\Delta_\theta}\rp]S_{m,\lambdabar}^{[s],\,\Xi,\nu}(\theta)=0 \,.
\end{split} \label{eq:angular-ode-KdS}
\end{gather}
We are interested in solutions of \eqref{eq:angular-ode-KdS} with boundary conditions which ensure that, when $\nu$ is taken to be $a\omega$, \eqref{eq:separable-ansatz-KdS} is a smooth $s$-spin weighted function on $\mc{R}$, which can be defined similarly to the Kerr case of the previous section. By analogy with the Kerr case of Lemma~\ref{lemma:angular-eigenvalues}, we have:

\begin{lemma}[Smooth spin-weighted solutions of the angular ODE] \label{lemma:angular-eigenvalues-KdS}
Fix $s\in\frac12\mathbb{Z}$, let $m-s\in\mathbb{Z}$, and assume $\nu\in\mathbb{C}$. Consider the angular ODE \eqref{eq:angular-ode} with the boundary condition that $e^{im\phi}S_{m,\lambdabar}^{[s],\,\Xi,\nu}$ is a non-trivial smooth $s$-spin-weighted function on $\mathbb{S}^2$, i.e.\  $S_{m,\lambdabar}^{[s],\,\Xi,\nu}(\theta)(1+\cos\theta)^{-\frac{|m-s|}{2}}$ is holomorphic in $\theta\in(0,\pi]$, and $S_{m,\lambdabar}^{[s],\,\Xi,\nu}(\theta)(1-\cos\theta)^{-\frac{|m+s|}{2}}$ is holomorphic in $\theta\in[0,\pi)$.

\textbf{The case $\nu\in\mathbb{R}$}. For each $\nu\in\mathbb{R}$, there are countably many such solutions to \eqref{eq:angular-ode-KdS} each corresponding to a real value of $\lambdabar$. We index the solutions and eigenvalues by a discrete $l$: $S_{ml}^{[s],\,\Xi,\nu}$ solves \eqref{eq:angular-ode-KdS} with eigenvalue $\lambdabar=\bm\uplambda_{ml}^{[s],\,\Xi, \nu}$ and induces a complete orthonormal basis, $\{e^{im\phi}S_{ml}^{[s],\,\Xi,\nu}\}_{ml}$, of the space of smooth $s$-spin-weighted functions on $\mathbb{S}^2$  endowed with $L^2(\sin\theta d\theta)$ norm. The index $l$ is chosen so that $l-\max\{|s|,|m|\}\in\mathbb{Z}_{\geq 0}$, and so that $\bm\uplambda_{ml}^{[s],\,\Xi,0}=l(l+1)-s^2$ for $\nu =0$ and $\bm\uplambda_{ml}^{[s],\,\Xi,\nu}$ varies smoothly with $\nu$. The eigenvalues also have the property that $\bm\uplambda_{ml}^{[s],\,\Xi,\nu}=\bm\uplambda_{ml}^{[-s],\,\Xi,\nu}$.

\textbf{The case $\nu\in\mathbb{C}\backslash\mathbb{R}$}. Fix some $\nu_0\in\mathbb{R}$. The corresponding eigenvalue $\bm\uplambda_{ml}^{[s],\,\Xi,\nu_0}\in \mathbb{R}$ can be analytically continued to $\nu\in\mathbb{C}$ except for finitely many branch points (with no finite accumulation point), located away from the real axis, and branch cuts emanating from these. We define $\bm\uplambda_{ml\nu_0}^{[s],\,\Xi,\nu}$, for $\nu_0\in\mathbb{R}$, as a global multivalued complex function of $\nu$ such that $\bm\uplambda_{ml\nu_0}^{[s],\,\Xi, \nu_0}=\bm\uplambda_{ml}^{[s],\,\Xi,\nu_0}$ and $S_{ml\nu_0}^{[s],\,\Xi,\nu}$ as a solution to \eqref{eq:angular-ode-KdS} with $\lambdabar=\bm\uplambda_{ml\nu_0}^{[s],\,\Xi,\nu}$. The eigenvalues are independent of $\sign s$ and satisfy
\begin{align}\label{eq:angular-eigenvalues-upper-half-plane-KdS}
\Im\nu>0\implies \Im\lp( \overline{\nu}\,\bm\uplambda_{ml\nu_0}^{[s],\,\Xi,\nu}\rp)<0\,.
\end{align}
\end{lemma}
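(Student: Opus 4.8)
The plan is to run a Sturm--Liouville energy identity for the angular operator, exploiting that for $\nu$ real \eqref{eq:angular-ode-KdS} is formally self-adjoint with respect to the weight $\sin\theta\,d\theta$, and that the $\nu$-dependence enters only through terms linear and quadratic in $\nu$ with real $\theta$-coefficients. Concretely, I would first rewrite \eqref{eq:angular-ode-KdS} for an eigenpair $(S,\lambdabar)=\big(S^{[s],\,\Xi,\nu}_{ml\nu_0},\bm\uplambda^{[s],\,\Xi,\nu}_{ml\nu_0}\big)$ in the form
$$\frac{1}{\sin\theta}\frac{d}{d\theta}\lp(\Delta_\theta\sin\theta\frac{dS}{d\theta}\rp)-V_0\,S+\nu\,Q_1\,S+\nu^2\,Q_2\,S+\lambdabar\,S=0\,,$$
where $V_0,Q_1,Q_2$ are the real functions of $\theta$ collecting, respectively, the $\nu$-independent, $O(\nu)$ and $O(\nu^2)$ parts of the potential. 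The two facts that make the argument run are the sign properties
$$V_0=2(\Xi-1)\cos^2\theta\,(\mu+2s^2)+\frac{\big(\Xi m+s\cos\theta(\Xi-2(\Xi-1)\sin^2\theta)\big)^2}{\sin^2\theta\,\Delta_\theta}\ge 0\,,\qquad Q_2=\Xi^3\frac{\cos^2\theta}{\Delta_\theta}\ge 0\,,$$
valid since $\Xi\ge1$, $\Delta_\theta>0$ and $\mu+2s^2\ge0$; the precise form of $Q_1$ will be irrelevant beyond the fact that it is real.

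Next I would multiply the equation by $\bar S\sin\theta$, integrate over $\theta\in(0,\pi)$, and integrate the second-order term by parts. Assuming the boundary terms vanish (see below), this yields
$$\lambdabar\,\|S\|^2=\int_0^\pi\Delta_\theta\sin\theta\,|S'|^2\,d\theta+\int_0^\pi V_0\,|S|^2\sin\theta\,d\theta-\nu\int_0^\pi Q_1|S|^2\sin\theta\,d\theta-\nu^2\int_0^\pi Q_2|S|^2\sin\theta\,d\theta\,,$$
with $\|S\|^2=\int_0^\pi|S|^2\sin\theta\,d\theta$. Multiplying by $\bar\nu$ and taking imaginary parts, the first two integrals are real and contribute $-\Im\nu$ times a nonnegative quantity, the $Q_1$ integral drops out because $\bar\nu\nu=|\nu|^2$ is real and the integral is real, and the last term contributes $-|\nu|^2\Im\nu\int_0^\pi Q_2|S|^2\sin\theta\,d\theta$. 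Thus
$$\Im\big(\bar\nu\,\lambdabar\big)\,\|S\|^2=-\Im\nu\lp(\int_0^\pi\Delta_\theta\sin\theta|S'|^2\,d\theta+\int_0^\pi V_0|S|^2\sin\theta\,d\theta+|\nu|^2\int_0^\pi Q_2|S|^2\sin\theta\,d\theta\rp)\,.$$
Since $\Im\nu>0$ and the bracket is strictly positive for nontrivial $S$ --- the term $|\nu|^2\int_0^\pi Q_2|S|^2\sin\theta\,d\theta>0$ alone suffices, as $\nu\neq0$ and $\cos^2\theta/\Delta_\theta>0$ away from the equator --- the inequality \eqref{eq:angular-eigenvalues-upper-half-plane-KdS} follows.

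The main obstacle is the rigorous justification of the integration by parts, namely the vanishing of the boundary terms $\Delta_\theta\sin\theta\,S'\bar S$ at $\theta=0,\pi$ together with the convergence of all the integrals, in particular of $\int_0^\pi V_0|S|^2\sin\theta\,d\theta$, whose integrand carries the $1/\sin^2\theta$ singularity of $V_0$ at the poles. This is exactly controlled by the defining smoothness conditions on spin-weighted functions quoted in the lemma --- that $S(\theta)(1+\cos\theta)^{-|m-s|/2}$ and $S(\theta)(1-\cos\theta)^{-|m+s|/2}$ be holomorphic at $\theta=\pi$ and $\theta=0$ respectively --- which pin down the endpoint behavior of $S$ and $S'$ precisely enough to render the boundary terms zero and the integrals finite. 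For $\nu\in\mathbb{C}\setminus\mathbb{R}$ one further notes that, away from the branch points, the analytically continued eigenfunction $S^{[s],\,\Xi,\nu}_{ml\nu_0}$ still satisfies these endpoint conditions, so the identity above holds verbatim for each such $\nu$ and the pointwise conclusion follows. Finally, I would remark that setting $\Xi=1$ degenerates the above into the Kerr computation, thereby also establishing \eqref{eq:angular-eigenvalues-upper-half-plane} as promised in the previous section.
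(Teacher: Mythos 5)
Your proposal is correct and takes essentially the same approach as the paper: the paper's proof also establishes \eqref{eq:angular-eigenvalues-upper-half-plane-KdS} via the Sturm--Liouville energy identity, multiplying \eqref{eq:angular-ode-KdS} by $\sin\theta\,\overline{\nu S_{m,\lambdabar}^{[s],\,\Xi,\nu}}$, integrating by parts, and taking the imaginary part to obtain exactly your identity, whose right-hand side is strictly positive for non-trivial $S$ when $\Im\nu>0$. The only differences are cosmetic: you multiply by $\overline{S}\sin\theta$ first and by $\overline{\nu}$ afterwards, and you make explicit the vanishing of the boundary terms and the strict-positivity step (via the $Q_2$ term and isolated zeros), points the paper leaves implicit while deferring the remaining structural claims of the lemma, as you effectively do as well, to the adaptation of the Kerr-case arguments.
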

\begin{proof}
Follows by adapting the arguments in \cite[Proposition 2.1]{TeixeiradaCosta2019}. For instance, \eqref{eq:angular-eigenvalues-upper-half-plane-KdS} can be obtained by multiplying the angular ODE~\eqref{eq:angular-ode-KdS} by $\sin\theta\, \overline{\nu S_{m,\lambdabar}^{[s],\,\Xi,\nu}}$, integrating by parts and taking the real part:
\begin{align*}
&\frac{\Im (\lambdabar\,\overline\nu)}{-\Im\nu}\int_0^\pi \lp|S_{m,\lambdabar}^{[s],\,\Xi,\nu}(\theta)\rp|^2  \sin\theta  d\theta \\
&\quad=  \int_0^\pi 
\lp(\Delta_\theta\lp|\frac{dS_{m,\lambdabar}^{[s],\,\Xi,\nu}}{d\theta}\rp|^2+2(\Xi-1)\cos^2\theta(\mu+2s^2)\lp|S_{m,\lambdabar}^{[s],\,\Xi,\nu}(\theta)\rp|^2 \rp)\sin\theta  d\theta\\
&\quad\qquad+ \int_0^\pi \frac{1}{\Delta_\theta} \lp(
\frac{\lp(\Xi m+s\cos\theta\lp(\Xi-2(\Xi-1)\sin^2\theta\rp)\rp)^2}{\sin^2\theta}
+\Xi^3|\nu|^2\cos^2\theta\rp)\lp|S_{m,\lambdabar}^{[s],\,\Xi,\nu}(\theta)\rp|^2  \sin\theta  d\theta\,,
\end{align*}
where the right hand side is clearly non-negative if $\Im\nu>0$. In fact, it is only zero if $S_{m,\lambdabar}^{[s],\,\Xi,\nu}\equiv 0$. As trivial functions are not in the scope of the lemma, we obtain \eqref{eq:angular-eigenvalues-upper-half-plane-KdS}.
\end{proof}

\begin{remark} We note that, as before, $\lambdabar$ denotes a complex number with no restrictions whereas $\bm\uplambda$ denotes one of the eigenvalues identified in Lemma~\ref{lemma:angular-eigenvalues-KdS}.
\end{remark}


\subsubsection{The radial ODE and its boundary conditions}
\label{sec:radial-ODE-KdS}

Our starting point is the radial Teukolsky  ODE. In the literature, this ODE is  usually presented in terms of $\smlambdaXi{\upalpha}{s}:=\Delta^{-\frac{s+1}{2}}(r-r_3)\smlambdaXi{R}{s}$ as
\begin{equation}\label{eq:radial-ODE-alpha-KdS-actual-alpha}
\begin{split}
&\Delta^{-s} \frac{d}{dr}\lp(\Delta^{1+s}\frac{d}{dr}\rp)\smlambda{\upalpha}{s}+\frac{1}{\Delta}\lp[\Xi^2(\omega(r^2+a^2)-am)^2 - is \Xi (\omega(r^2+a^2)-am) \frac{d\Delta}{dr}\rp]\smlambdaXi{\upalpha}{s}\\
&\quad+\lp[4is\omega \Xi r-\frac{2}{L^2}(\mu+3s+2s^2)r^2+s\lp(1-\frac{a^2}{L^2}\rp)-\lambdabar+2\Xi^2 am\omega-a^2\Xi^2\omega^2\rp]\smlambdaXi{\upalpha}{s}=0\,,
\end{split}
\end{equation}
for $r\in(r_1,r_2)$. 
This equation has a regular singularity at  $r=\infty$ and singularities at each of the roots of $\Delta$, which are also regular if the Kerr-de Sitter parameters are subextremal. In this paper, we will consider the radial ODE in terms of the variable $\smlambdaXi{R}{s}$ introduced in \eqref{eq:separable-ansatz-KdS}, where \eqref{eq:radial-ODE-alpha-KdS-actual-alpha} becomes
\begin{align}
\begin{split}
&\Delta\frac{d^2}{dr^2}\smlambda{R}{s}+\frac{2\Delta}{r-r_3}\frac{d}{dr}\smlambda{R}{s}+\frac{1}{\Delta}\lp[\Xi(\omega(r^2+a^2)-am) - \frac12 is \frac{d\Delta}{dr}\rp]^2\smlambdaXi{R}{s} \\
&\qquad+\lp[\frac{M^2-\lp(1-\frac{a^2}{L^2}\rp)a^2}{\Delta}+4is\omega \Xi r-\frac{2(\mu-1+2s^2)}{L^2}r^2-\lambdabar-a^2\omega^2\Xi^2+2am\omega\Xi\rp]\smlambdaXi{R}{s}=0\,.
\end{split} \label{eq:radial-ODE-alpha-KdS}
\end{align}
As noted in \cite{Suzuki1998}, if $\mu=1$, equation \eqref{eq:radial-ODE-alpha-KdS} has singularities only at the roots of $\Delta$: $r=\infty$ is now a regular point of the ODE. Thus, from this point onwards, \textbf{we take $\mu\equiv 1$}.

Let us here introduce the notation
\begin{equation}\label{eq:def-ai}
\eta_j:=(-1)^{j}\frac{i\, (\omega-m\upomega_j)}{2\kappa_j}\,, \quad \upomega_j:= \frac{a}{r_j^2+a^2}\,,\quad \kappa_j:=\frac{1}{2 L^2\Xi  \left(r_j^2+a^2\right)}\left|\prod_{j'=0,j'\neq j}^3\left(r_j-r_{j'}\right)\right|\,,
\end{equation}
for $j\in\{0,1,2,3\}$. Note that we have $\sum_{j=0}^3 \eta_j =0$.  The quantities $ \upomega_j$ and $\kappa_j$ are, respectively, the angular velocity and the surface gravity of the horizon at $r=r_j$.

Based on the classical theory of regular singularities for ODEs, see \cite[Chapter 5]{Olver1973}, we can consider the following boundary conditions for \eqref{eq:radial-ODE-alpha-KdS}:
\begin{definition} \label{def:bdry-conditions-KdS} We say that  a solution, $\smlambdaXi{R}{s}(r)$, to \eqref{eq:radial-ODE-alpha-KdS} is 
\begin{itemize}
\item ingoing at $\mc{H}^+$ if the following are smooth at $r=r_1$:
\begin{itemize}
\item $\smlambdaXi{R}{ s}(r)(r-r_1)^{-\eta_1+\frac{s-1}{2}}$ if either $\Re\omega\neq m\upomega_1$ or $s\leq 0$, and
\item $\smlambdaXi{R}{ s}(r)(r-r_1)^{-\frac{s+1}{2}}$ if $\Re\omega=m\upomega_1$ and $s\geq 0$;
\end{itemize}
\item outgoing at $\mc{H}^+_c$ if the following are smooth at $r=r_2$:
\begin{itemize}
\item $\smlambdaXi{R}{ s}(r)(r-r_2)^{\eta_2-\frac{s+1}{2}}$ if either $\Re\omega\neq m\upomega_2$ or $s\geq 0$, and
\item $\smlambdaXi{R}{ s}(r)(r-r_2)^{\frac{s-1}{2}}$ if $\Re\omega=m\upomega_2$ and $s\leq 0$.
\end{itemize}
\end{itemize}
\end{definition}


\subsubsection{Precise definition of mode solution}

We are finally ready to define mode solution precisely:
\begin{definition} Fix $M>0$, $L>0$ and $|a|<L$ satisfying \eqref{eq:subextremal-condition-KdS}.  Take $s\in\frac12\mathbb{Z}$, $m-s\in\mathbb{Z}$ and $\omega\in\mathbb{C}\backslash\{0\}$ with $\Im \omega\geq 0$. Let $\swei{\upalpha}{s}(t,r,\theta,\phi)$ be a solution to \eqref{eq:Teukolsky-equation-KdS} which is given by \eqref{eq:separable-ansatz-KdS}. We say that $\swei{\upalpha}{s}(t,r,\theta,\phi)$ is a mode solution if
\begin{enumerate}
\item $\lambdabar=\bm\lambda_{ml\nu_0}^{[s],\,\Xi,\nu}$ for some $l\in\mathbb{Z}_{\geq \max\{|m|,|s|\}}$ and $\nu_0\in\mathbb{R}$, making $e^{im\phi}S_{m,\lambdabar}^{[s],\,\Xi,a\omega}$  a non-trivial smooth $s$-spin-weighted function on $\mathbb{S}^2$ such that $S_{m,\lambdabar}^{[s],\,\Xi,a\omega}$ is a normalized solution to the angular ODE \eqref{eq:angular-ode-KdS};
\item ${R}^{[s],\,\Xi, a, \omega}_{m,\lambdabar}(r)$ solves the radial ODE \eqref{eq:radial-ODE-alpha-KdS} with ingoing boundary conditions at $\mc{H}^+$ and outgoing boundary conditions at $\mc{H}^+_c$.
\end{enumerate}
\end{definition}

We note that Remark~\ref{rmk:why-mode-solution-def} still applies here, \textit{mutatis mutandis}.


\subsection{Some hidden spectral symmetries}
\label{sec:hidden-symmetries-KdS}

For $z_2>1$, consider the radial ODE
\begin{align} \label{eq:SQCD-ODE-KdS}
\begin{split}
&z(z-1)(z-z_2)\frac{d^2}{dz^2}y(z)+\frac{(4E-1)z_2+1}{4}y(z)\\
&\qquad -\frac{1}{(z-z_2)}\lp[\frac14z(z-1)((m_3+m_4)^2-1)-m_3m_4 (z-z_2)\lp(z-\frac12\rp)\rp]y(z)\\
&\qquad -\frac{1}{z(z-1)}\lp[m_1m_2\lp(\frac{z}{2}-z_2\rp)(z-1)+\frac14(z-z_2)((m_1+m_2)^2-1)\rp]y(z)=0\,,
\end{split}
\end{align}
where $m_1,m_2,m_3,m_4,E\in\mathbb{C}$. The radial ODE \eqref{eq:radial-ODE-alpha-KdS} may also be cast in this form:

\begin{lemma} \label{lemma:SQCD-rescaling-KdS} The radial ODE~\eqref{eq:radial-ODE-alpha-KdS} may be written as \eqref{eq:SQCD-ODE-KdS} if we let
\begin{align} \label{eq:mobius-transformation}
z:= z_\infty\frac{r-r_0}{r-{r}_3}
\,, \qquad z_\infty:=\frac{r_1-{r}_3}{r_1-r_0}\,,\qquad z_2:=z_\infty\frac{r_2-r_0}{r_2-{r}_3}\,,
\end{align}
and choose the parameters in \eqref{eq:SQCD-ODE-KdS} as follows: 
\begin{gather}\label{eq:SQCD-masses-KdS}
\begin{gathered}
m_1= s-\eta_1-\eta_0\,, \qquad m_2=\eta_0-\eta_1\,, \qquad m_3= -s-{\eta_1}-{\eta_0}\,, \qquad m_4 = \eta_0+\eta_1+2\eta_2\,, \\
\begin{split}
 Ez_2
%
%
&=\frac14(1+z_2)+(\eta_0+ \eta_1)^2z_2-(\eta_1^2+\eta_0\eta_1+\eta_1\eta_2-\eta_0\eta_2)+ \frac{L^2(\lambdabar-2\Xi^2 am\omega+a^2\Xi^2\omega^2)}{(r_2-r_3)(r_1-r_0)}\\
&\qquad+\frac{r_0^2 + r_1^2 + 2 r_0 r_2 + 2 r_2 (r_1 + r_2)}{2(r_2-r_3)(r_1-r_0)}s^2-\frac{r_1(r_2-r_3)+r_3(r_0+r_1)}{(r_1-r_0)(r_2-r_3)}\,,
\end{split}
\end{gathered}
\end{gather}
%
Furthermore, the boundary conditions in Definition~\ref{def:bdry-conditions-KdS} can be recast in terms of the new parameters, noting that 
\begin{gather*}
-\eta_1+\frac{s-1}{2}=\frac12(m_1+m_2-1)\,, \quad -\frac{s+1}{2}=-\frac12(m_1+m_2+1)\,, \\ \eta_2-\frac{s+1}{2}=\frac12(m_3+m_4-1)\,, \quad \frac{s-1}{2}=-\frac12(m_1+m_4+1)\,.
\end{gather*}

The choices of $m_1$ and $m_2$, and of $m_3$ and $m_4$ are not canonical: these boundary conditions and the ODE~\eqref{eq:SQCD-ODE-KdS} are invariant by the exchange of $m_1$ and $m_2$, and by the exchange of $m_3$ and $m_4$. 
\end{lemma}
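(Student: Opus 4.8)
The plan is to prove Lemma~\ref{lemma:SQCD-rescaling-KdS} by direct substitution, exactly as in the proof of Lemma~\ref{lemma:SQCD-Kerr}: I would insert the Möbius change of variables \eqref{eq:mobius-transformation} into the radial ODE \eqref{eq:radial-ODE-alpha-KdS}, simplify, and then read off $m_1,\dots,m_4$ and $E$ by matching the pole structure of the resulting potential against \eqref{eq:SQCD-ODE-KdS}. The one genuinely structural observation, which is what makes the first-derivative term in \eqref{eq:radial-ODE-alpha-KdS} disappear, is that for the Möbius map $z=z_\infty(r-r_0)/(r-r_3)$ one has $\p_r^2 z=-\tfrac{2}{r-r_3}\p_r z$. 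Since the explicit first-order coefficient in \eqref{eq:radial-ODE-alpha-KdS} is $2\Delta/(r-r_3)$, substituting $\p_r^2 R=(\p_r z)^2\,\p_z^2 R+(\p_r^2 z)\,\p_z R$ makes the $\p_z R$ terms cancel precisely, leaving an equation of the pure form $\Delta(\p_r z)^2\,\p_z^2 R+VR=0$. This is, in hindsight, exactly why the factor $(r-r_3)^{-1}$ was built into the separation ansatz \eqref{eq:separable-ansatz-KdS} in the first place.

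Next I would make the coefficients explicit. Using the factorisation \eqref{eq:kerr-dS-delta} together with the images of the roots under the map --- each $(r-r_j)$, $j=0,1,2$, is a constant multiple of the corresponding factor in $\{z,\,z-1,\,z-z_2\}$ times $(r-r_3)$ --- one finds $\Delta(\p_r z)^2=c\,z(z-1)(z-z_2)$ with $c=-(r_2-r_3)(r_1-r_0)/L^2$ a constant. Dividing through by $c$ puts the principal part into the normalised form of \eqref{eq:SQCD-ODE-KdS}. It then remains to expand $V/c$ in partial fractions in $z$, the residual $(r-r_3)$ dependence being eliminated through $r-r_3=z_\infty(r_3-r_0)/(z-z_\infty)$. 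The double poles at $z=0$ and $z=z_2$ fix $(m_1+m_2)^2$ and $(m_3+m_4)^2$, the simple poles fix $m_1m_2$ and $m_3m_4$, and the regular part fixes $E$; reconciling these with the combinations in \eqref{eq:SQCD-masses-KdS} (using $\sum_j\eta_j=0$) recovers the stated values.

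For the boundary conditions I would argue at the level of Frobenius exponents. The event and cosmological horizons $r=r_1,r_2$ map to the regular singular points $z=1,z_2$, and the indicial equation of \eqref{eq:SQCD-ODE-KdS} there has roots $\tfrac12(1\pm(m_1+m_2))$ and $\tfrac12(1\pm(m_3+m_4))$ respectively. The four displayed identities in the lemma are precisely the statement that the exponents selected by Definition~\ref{def:bdry-conditions-KdS} coincide with these $m$-combinations, so nothing is required beyond checking signs and verifying which Frobenius root is the admissible one. The two invariances are then immediate: \eqref{eq:SQCD-ODE-KdS} depends on the pairs $(m_1,m_2)$ and $(m_3,m_4)$ only through the symmetric quantities $m_1m_2,(m_1+m_2)^2$ and $m_3m_4,(m_3+m_4)^2$, while the boundary conditions depend only on $m_1+m_2$ and $m_3+m_4$, so exchanging $m_1\leftrightarrow m_2$ or $m_3\leftrightarrow m_4$ leaves everything unchanged.

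I expect the main obstacle to be purely computational bookkeeping in the coefficient-matching step, and in particular pinning down $E$, whose value in \eqref{eq:SQCD-masses-KdS} entangles $\lambdabar$, the parameters $\eta_j$, and the roots $r_j$ in a way that is far from transparent. Because the map is not affine, the powers of $(r-r_3)$ must be tracked carefully and converted to $z$-dependence, and it is easy to misplace factors of $z_\infty$, $z_\infty-1$, or $(r_2-r_3)$; the normalisation constant $c$ and the constant appearing in $z-z_2$ are the most error-prone. A good consistency check throughout is the limit $L\to\infty$ (that is, $\Lambda\to 0$), under which $r_3\to-\infty$, $z_2\to\infty$, and the entire computation should degenerate to Lemma~\ref{lemma:SQCD-Kerr}.
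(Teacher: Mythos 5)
Your proposal is correct and takes essentially the same approach as the paper, whose own proof merely records the images of the singular points under the M\"{o}bius map, checks $1<z_2<z_\infty$, and defers the remaining ``direct computations'' to the literature (Suzuki et al., Hatsuda) --- precisely the substitution-and-matching you spell out, with your key identities (the cancellation $\partial_r^2 z = -\tfrac{2}{r-r_3}\,\partial_r z$ of the first-order terms, and the factorisation $\Delta(\partial_r z)^2 = -\tfrac{(r_1-r_0)(r_2-r_3)}{L^2}\,z(z-1)(z-z_2)$) both being valid. One cosmetic slip: in your matching paragraph it is the double pole at $z=1$ (not $z=0$) that carries $(m_1+m_2)^2$, the one at $z=0$ carrying $(m_1-m_2)^2$; your later Frobenius-exponent discussion at $z=1$ and $z=z_2$ has this right, so the argument is unaffected.
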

\begin{proof} Under the M\"{o}bius transformation \eqref{eq:mobius-transformation}, the  singularities of \eqref{eq:radial-ODE-alpha-KdS-actual-alpha} transform as follows
\begin{gather*}
r=r_0\Rightarrow z=0\,, \qquad r=r_1\Rightarrow z=1\,,\qquad r={r}_3\Rightarrow z=\infty\,, \qquad
 r={r}_2\Rightarrow z=z_2 \,, \qquad r=\infty \Rightarrow z=z_\infty \,.
\end{gather*}
Note that $z_2<z_\infty$ as $r_3<r_0$, and furthermore  $z_2>1$, since
\begin{align*}
(z_2-1)(r_1-r_0)(r_2-r_3)
=(r_0-r_3)(r_2- r_1)> 0\,.
\end{align*}

The result then follows by direct computations, which have been independently obtained in the literature in \cite{Suzuki1998,Suzuki1999}, see also \cite{Hatsuda2021}.
\end{proof}

\begin{remark} To our knowledge, Lemma~\ref{lemma:SQCD-rescaling-KdS} has not appeared elsewhere in the literature. However, as in the case Lemma~\ref{lemma:SQCD-Kerr} in the Kerr case, the observation in Lemma~\ref{lemma:SQCD-rescaling-KdS} comes from comparing the Teukolsky equation in subextremal Kerr-de Sitter black holes and the $SU(2)$ Seiberg--Witten theory, but now  with four fundamental hypermultiplets, in supersymmetric quantum chromodynamics \cite{Ito2017}. We carry out this comparison in analogy to those drawn throughout Aminov, Grassi and Hatsuda's paper \cite{Aminov2020}, and present here the final outcome as Lemma~\ref{lemma:SQCD-rescaling-KdS}.
\end{remark}

\begin{remark}
Taking the $L\to \infty$ limit of Lemma~\ref{lemma:SQCD-rescaling-KdS} yields Lemma~\ref{lemma:SQCD-Kerr}. For further details about the confluence process which turns Heun equations into confluent Heun equations, we refer the reader to the classical books \cite{Ronveaux1995,Slavyanov2000} and references therein.
\end{remark}

Next, we give a characterization of the point spectrum of \eqref{eq:SQCD-ODE-KdS}, in the space of solutions with suitable boundary conditions. Though other methods such as Jaffé-type expansions are sometimes considered in the literature, see \cite{Yoshida2010}, we find it convenient to replace the Jaffé polynomials of Proposition~\ref{prop:Jaffe-expansion} with hypergeometric polynomials. 

\begin{proposition}[Point spectrum] \label{prop:hypergeometric-expansion} 
Let $E\in\mathbb{C}$. Consider a quadruple $(m_{i_1},m_{i_2},m_{i_3},m_{i_4})$, where $i_1$, $i_2$, $i_3$ and $i_4$ are distinct natural numbers between 1 and 4, of complex numbers verifying three conditions: $\sum_j m_j\neq N$ for $N\in\mathbb{Z}_{\geq 2}$, $\Re(m_{i_1}+m_{i_2}-1)<0$ if $\Im(m_{i_1}+m_{i_2})=0$, and $\Re(m_{i_3}+m_{i_4}-1)<0$ if $\Im(m_{i_3}+m_{i_4})=0$. 

Let ${R}_{(m_{i_1},m_{i_2},m_{i_3},m_{i_4})}^{E}$ be the unique, up to rescaling, solution of the ODE, c.f.\ \eqref{eq:SQCD-ODE-KdS},
\begin{align*}
&z(z-1)(z-z_2)\frac{d^2}{dz^2}{R}(z)+\frac{(4E-1)z_2+1}{4}{R}(z)\\
&\qquad -\frac{1}{(z-z_2)}\lp[\frac14z(z-1)((m_{i_3}+m_{i_4})^2-1)-m_{i_3}m_{i_4} (z-z_2)\lp(z-\frac12\rp)\rp]{R}(z)\\
&\qquad -\frac{1}{z(z-1)}\lp[m_{i_1}m_{i_2}\lp(\frac{z}{2}-z_2\rp)(z-1)+\frac14(z-z_2)((m_{i_1}+m_{i_2})^2-1)\rp]{R}(z)=0\,,
\end{align*}
with boundary conditions
\begin{itemize}
\item ${R}(z)(z-1)^{\frac12(m_{i_1}+m_{i_2}-1)}$ is smooth at $z=1$,
\item ${R}(z)(z-z_2)^{\frac12(m_{i_3}+m_{i_4}-1)}$ is smooth  at $z=z_2$.
\end{itemize}
Then ${R}_{(m_{i_1},m_{i_2},m_{i_3},m_{i_4})}^{E}$ is nontrivial if and only if the continued fraction condition
\begin{align}
0= A_0^{(0)}+\frac{-A_0^{(+1)}A_1^{(-1)}}{ A_1^{(0)}+\frac{-A_1^{(+1)}A_2^{(-1)}}{ A_2^{(0)}+\frac{-A_2^{(+1)}A_3^{(-1)}}{\cdots} }} \label{eq:hypergeometric-expansion-continued-fraction}
\end{align}
holds, where the coefficients $A^{(0)}_n$ and $A_n^{(\pm 1)}$ satisfy
\begin{align}
\begin{split}
A_{n-1}^{(+1)} A_{n}^{(-1)} &=\frac{n^2(n-\sigma_1(\bm{m}))^2 \lp[\sigma_1(\bm{m}) \sigma_3(\bm{m}) - 4 \sigma_4(\bm{m}) + \lp(\sigma_2(\bm{m}) +  (n-\sigma_1(\bm{m}))n\rp)^2\rp]}{(\sigma_1(\bm{m})-2n)^2[(\sigma_1(\bm{m})-2n)^2-1]} \\
&\qquad + \frac{n(n-\sigma_1(\bm{m}))\lp[\sigma_1(\bm{m}) \lp(\sigma_2(\bm{m}) \sigma_3(\bm{m}) - \sigma_1(\bm{m}) \sigma_4(\bm{m})\rp)-\sigma_3(\bm{m})^2\rp]}{(\sigma_1(\bm{m})-2n)^2[(\sigma_1(\bm{m})-2n)^2-1]}\,,\\
A_n^{(0)}&= \frac{[\sigma_1(\bm{m})^3 - 4 \sigma_1(\bm{m})\sigma_2(\bm{m}) + 8 \sigma_3(\bm{m})]\sigma_1(\bm{m})}{8 (-2 + \sigma_1(\bm{m}) - 2 n) (\sigma_1(\bm{m}) - 2 n)}+\frac{(4E+1)z_2+1}{4(z_2-1)}\\
&\qquad+\frac{(z_2+1)(\sigma_1(\bm{m})-2n-2) (\sigma_1(\bm{m})-2n)}{8(z_2-1)}-\frac{z_2[\sigma_1(\bm{m})^2-2\sigma_2(\bm{m})]}{4(z_2-1)}\,,
\end{split} \label{eq:hypergeometric-expansion-recursion-coefficients}
\end{align}
and where $\sigma_{1}(\bm{m}):= m_1+m_2+m_3+m_4$, $\sigma_{2}(\bm{m}):=m_1 m_2+m_1 m_3+m_1m_4+m_2m_3+m_2m_4+m_3m_4$, $\sigma_{3}(\bm{m}):=m_1 m_2 m_3+m_1m_2m_4+m_2m_3m_4$, and $\sigma_{3}(\bm{m}):=m_1 m_2 m_3m_4$ are symmetric polynomials in $\bm{m}=(m_{i_1},m_{i_2},m_{i_3},m_{i_4})$.
\end{proposition}

\begin{proof} 
From the classical theory of ODEs \cite[Chapter 5]{Olver1973}, the first condition on  ${R}_{(m_{i_1},m_{i_2},m_{i_3},m_{i_4})}^{E}(z)$ is fulfilled only if ${R}_{(m_{i_1},m_{i_2},m_{i_3},m_{i_4})}^{E}(z)(z-1)^{\frac12(m_{i_1}+m_{i_2}-1)}$ is holomorphic in $|z-1|<\min\{1,z_2-1\}$ whereas the second one holds if and only if  ${R}_{(m_{i_1},m_{i_2},m_{i_3},m_{i_4})}^{E}(z)(z-z_2)^{\frac12(m_{i_3}+m_{i_4}-1)}$ is holomorphic for $|z-z_2|<z_2-1$. Hence, since these regions have nontrivial intersection, we conclude that ${R}_{(m_{i_1},m_{i_2},m_{i_3},m_{i_4})}^{E}(z)(z-1)^{\frac12(m_{i_1}+m_{i_2}-1)}(z-z_2)^{\frac12(m_{i_3}+m_{i_4}-1)}$ must be holomorphic in an ellipse 
\begin{align*}
\mathscr E:=\lp\{\begin{array}{lr}
\displaystyle\frac{4\lp(\Re z-\frac{z_2+1}{2}\rp)^2}{(z_2+1)^2}+\frac{(8z_2(z_2-1)-4)(\Im z)^2}{(1+z_2)^2(2z_2-3)}< 1\,,& \text{~if~} z_2\geq 2\\
\displaystyle \frac{4\lp(\Re z-\frac{z_2+1}{2}\rp)^2}{9(z_2-1)^2}+\frac{4(\Im z)^2}{3(z_2-1)^2}< 1\,, & \text{~if~} 1<z_2< 2
\end{array}\rp\}\,.
\end{align*}
Thus,  $g$ defined through
\begin{equation*}
{g}(z):= z^{\pm \frac12(m_{i_1}-m_{i_2}\mp 1)}(z-1)^{\frac{1}{2}(m_{i_1}+m_{i_2}-1)}(z-z_2)^{\frac12(m_{i_3}+m_{i_4}-1)}(z-z_\infty)^{\frac{3}{2}(s+1)}{R}_{(m_{i_1},m_{i_2},m_{i_3},m_{i_4})}^E(z)\,,
\end{equation*}
is the unique (up to rescaling) solution of the ODE
\begin{align}
\begin{gathered}
\mathscr L g=0\,,\\ \mathscr{L}:= z(z-1)(z-z_2)\frac{d^2}{dz^2}+\lp[\gamma(z-1)(z-z_2)+\delta z(z-z_2)+\epsilon z(z-1)\rp]\frac{d}{dz}+\alpha \beta z -q\,, 
\end{gathered} \label{eq:Heun-ODE}
\end{align}
which is holomorphic in the ellipse $\mathscr E$. Here, the  parameters in $\mc L$ satisfy
\begin{gather*}
\gamma = 1\pm (m_{i_1}-m_{i_2})\,, \quad \delta = 1-m_{i_1}-m_{i_2}\,, \quad \epsilon=1-m_{i_3}-m_{i_4}\,, \qquad \omega_H=1-{\textstyle\sum_{j=1}^4}m_j\,,\\
\alpha=  1-m_{i_1}\frac{1\mp 1}{2}-m_{i_2}\frac{1\pm 1}{2}-m_{i_3}\,, \quad \beta=1-m_{i_1}\frac{1\mp 1}{2}-m_{i_2}\frac{1\pm 1}{2}-m_{i_4}\,,\numberthis \label{eq:Heun-parameters-KdS}\\
\gamma(\delta-\epsilon)+\epsilon(1-\omega_H)+2\alpha\beta=2-{\textstyle \sum_{j=1}^4} m_j^2+2m_{i_2}(m_{i_2}-1)(1\pm 1)+2m_{i_1}(m_{i_1}-1)(1\mp 1)\,,\\
4q-4\alpha\beta+\omega^2_H = -1+{\textstyle \sum_{j=1}^4} m_j^2+\lp[2m_{i_2}(m_{i_2}-1)(1\pm 1)+2m_{i_1}(m_{i_1}-1)(1\mp 1)+1\rp](z_2-1)-4Ez_2\,,
\end{gather*}
where $\epsilon+\delta-1=\alpha+\beta-\gamma$ and we have used the notation $\omega_H:=\delta+\epsilon-1=\alpha+\beta-\gamma$.

\medskip
\noindent \textit{Step 1: a formal expansion.} As in Proposition~\ref{lemma:hidden-symmetry}, it is convenient  to identify a set of adequate special functions in which to expand solutions to \eqref{eq:Heun-ODE}. Our construction is loosely based on a small modification of \cite[Part A, Chapter 4]{Ronveaux1995} (see also \cite{Svartholm1939} and \cite{Erdelyi1944}). Let us introduce the operators
\begin{align*}
\Lambda_1 &:= (z-1)(z-z_2)\frac{d^2}{dz^2} +\lp[\delta(z-z_2)+\epsilon (z-1)\rp]\frac{d}{dz} = t(t-1)\frac{d^2}{dt^2}+\lp[\delta(t-1)+\epsilon\, t\rp]\frac{d}{dt}\,,\\
\tilde \Lambda_2 &:=(z-1)(z-z_2)\frac{d}{dz} = (z_2-1)\Lambda_2\,, \qquad \Lambda_2:= t(t-1) \frac{d}{dt}\,,
\end{align*}
where $t:=\frac{z-1}{z_2-1}$. Then, $\mathscr{L}$ may be written as 
\begin{align}
\mathscr{L} &= z\Lambda_1+\gamma \tilde{\Lambda}_2 +\alpha \beta z -q= (z_2-1)\lp[t\Lambda_1 + \gamma\Lambda_2 +\alpha\beta t-\frac{q-\alpha\beta-\Lambda_1}{z_2-1}\rp]\,. \label{eq:Heun-in-terms-of-Lambdas}
\end{align}
Let us introduce the hypergeometric functions
\begin{align*}
F_\nu (z) := {_{2}F_{1}}\lp(-\nu+\frac{\omega_H}{2}-\frac{1}{2}, \nu+\frac{\omega_H}{2}+\frac{1}{2}; \delta; \frac{z-1}{z_2-1}\rp)={_{2}F_{1}}\lp(-\nu+\frac{\omega_H}{2}-\frac{1}{2}, \nu+\frac{\omega_H}{2}+\frac{1}{2}; \delta; t\rp)\,,
\end{align*}
which satisfy
\begin{align*}
\Lambda_1 F_\nu &= \lp(\nu+\frac12 -\frac{\omega_H}{2}\rp) \lp(\nu+\frac12+\frac{\omega_H}{2}\rp) F_\nu\,,\numberthis\label{eq:Lambda-1-formally}\\
\frac{\mathscr L F_\nu}{z_2-1}&= \lp(\nu+\frac12 -\frac{\omega_H}{2}\rp) \lp(\nu+\frac12+\frac{\omega_H}{2}\rp) tF_\nu + \gamma \Lambda_2 F_\nu +\alpha\beta t-\frac{q-\alpha\beta-\lp(\nu+\frac12 -\frac{\omega_H}{2}\rp) \lp(\nu+\frac12+\frac{\omega_H}{2}\rp)}{z_2-1}F_\nu \\
&= \tilde A_\nu^{(+1)} F_{\nu+1}+\tilde A_\nu^{(0)}F_\nu +\tilde A_\nu^{(-1)} F_{\nu -1}
\,,
\end{align*}
where the coefficients $\tilde A_\nu^{(\pm 1)}$ and $\tilde A_\nu^{(0)}$ are given by
\begin{align*}
\tilde A_\nu^{(+1)} &= -\frac{\lp(\nu+\frac12 +\frac{\omega_H}{2}\rp)\lp(\nu+\frac12 -\frac{\omega_H}{2}+\alpha\rp)\lp(\nu+\frac12 -\frac{\omega_H}{2}+\beta\rp)\lp(\nu+\frac12 -\frac{\omega_H}{2}+\delta\rp)}{2(2\nu+1)(\nu+1)}\,,\\
\tilde A_\nu^{(0)} 
&= \frac{(1-\omega_H)(\delta-\epsilon)[(\alpha-\beta)^2-(\gamma-1)^2]}{32\nu(\nu+1)}+\frac{1}{2}\lp(1 +\frac{2}{z_2-1}\rp)\nu(\nu+1)\\
&\qquad+\frac{1}{4}[\gamma(\delta-\epsilon)+\epsilon(1-\omega_H)+2\alpha\beta]-\frac{q-\alpha\beta}{z_2-1}-\frac{1}{4(z_2-1)}(\omega_H^2-1)\,,\numberthis \label{eq:general-recursion-coefficients-KdS}\\
\tilde A_\nu^{(-1)} &= -\frac{\lp(\nu+\frac12 -\frac{\omega_H}{2}\rp)\lp(\nu+\frac12 +\frac{\omega_H}{2}-\alpha\rp)\lp(\nu+\frac12 +\frac{\omega_H}{2}-\beta\rp)\lp(\nu+\frac12 +\frac{\omega_H}{2}-\delta\rp)}{2\nu(2\nu+1)}\,,\\
\tilde A_{\nu-1}^{(+1)}\tilde A_{\nu}^{(-1)}&=\frac{\lp[\nu^2-\frac{(\omega_H-1)^2}{4}\rp]\lp[\nu^2-\lp(\frac{\omega_H+1}{2}-\alpha\rp)^2\rp]\lp[\nu^2-\lp(\frac{\omega_H+1}{2}-\beta\rp)^2\rp]\lp[\nu^2-\lp(\frac{\omega_H+1}{2}-\delta\rp)^2\rp]}{4(2\nu+1)\nu^2(2\nu-1)}\,.
\end{align*}
Now, suppose that the function
\begin{gather*}   
y(z) = \sum_{n=-\infty}^\infty b_n F_{N}(z)\,, \qquad N:= \nu+n\,, 
\end{gather*}
is well defined and its derivatives are obtained by differentiating term by term. If $y$ solves $\mathscr L y=0$, it follows that $b_n$ verify the recursive relation
\begin{gather}   
A_n^{(+1)} b_{n+1} + A_n^{(0)} b_n +A_n^{(-1)} b_{n-1}=0\,,\label{eq:general-recursion-KdS}
\end{gather}
where we have $A_n^{(+1)}:=\tilde A_{N+1}^{(-1)}$, $A_n^{(+1)}:=\tilde A_{N-1}^{(+1)}$  and $A_n^{0}=\tilde A_N^{(0)}$. These coefficients can be readily computed from \eqref{eq:Heun-parameters-KdS} and \eqref{eq:general-recursion-coefficients-KdS}. 

Up to this point, $\nu$ has remained a free parameter; let us now fix $\nu$ to be $\nu=(\omega_H-1)/2$. One can then check that in \eqref{eq:general-recursion-KdS}, we shall have $b_n=0$ for $n< 0$. Therefore, our final ansatz for a solution of the ODE \eqref{eq:Heun-ODE} is
\begin{align}
g(z) := \sum_{n=0}^\infty b_n f_{n}(z)\,, \label{eq:general-expansion-KdS}
\end{align} 
where $b_n$ satisfy \eqref{eq:general-recursion-KdS} and $f_{n}(z)= F_{(\omega_H-1)/2+n}(z)$ are polynomials of degree $n$ in $z-1$: denoting by $(\cdot)_k$ the rising factorial,
\begin{align}
f_n(z):={_{2}F_{1}}\lp(-n, n+\omega_H; \delta; \frac{z-1}{z_2-1}\rp)=\frac{(\delta)_n}{n!}P_n^{(\delta-1,\epsilon-1)}\lp(1-\frac{2(z-1)}{z_2-1}\rp)\,, \label{eq:basis-hypergeometric-polynomials}
\end{align}
where $P_n^{(\cdot,\cdot)}$ denote the usual Jacobi polynomials.

To compute the region of convergence of the series \eqref{eq:general-expansion-KdS}, we need to examine the asymptotics of both $f_{n}(z)$ and $b_n$ as $n\to \infty$. We begin with the latter: a large $n$ asymptotic analysis of the recursion relation following  \cite[Theorem 2.3(b)]{Gautschi1967} shows that, as long as $b_n\not\equiv 0$,  we have either 
\begin{align*}
\lim_{n\to \infty}\frac{b_{n+1}}{b_n}=-\frac{1+\sqrt{z_2}}{1-\sqrt{z_2}}=1+\frac{2}{z_2-1}\lp(\sqrt{z_2}-1\rp)>1\,,
\end{align*}
or
\begin{align}
\lim_{n\to \infty}\frac{b_{n+1}}{b_n}=-\frac{1-\sqrt{z_2}}{1+\sqrt{z_2}}=1-\frac{2}{z_2-1}\lp(\sqrt{z_2}+1\rp) <1\,. \label{eq:minimal-solution-KdS}
\end{align}
From the above, it is clear that \eqref{eq:minimal-solution-KdS} defines the so-called minimal solution to the recursive relation \eqref{eq:general-expansion-KdS}. 

Let us now turn to the asymptotics of the functions $f_n(z)$ for large $n$: from \cite[Equation (4.7)]{Erdelyi1944} and \cite[Equation (4.3.3)]{Ronveaux1995}, we obtain that
\begin{align*}
 \quad \lim_{n\to\infty} \lp|\frac{f_{n+1}}{f_n}\rp|=
 \begin{dcases}\lp|\frac{1+Z(z)}{1-Z(z)}\rp|\,, \quad Z(z):=\lp(\frac{z-z_2}{z-1}\rp)^{1/2}\,, \quad &z\notin (1,z_2)\,,\\
 1\,, &z\in (1,z_2)
 \end{dcases} 
 \,, 
\end{align*}
where the last line can be obtained by a limiting procedure. We deduce, in agreement with \cite[page 417]{Svartholm1939}, that \eqref{eq:general-expansion-KdS} converges in the interval $[1,z_2]$ if and only if  \eqref{eq:minimal-solution-KdS} holds, and so do the series obtained by differentiating \eqref{eq:general-expansion-KdS} term by term up to two times (see, for instance, \cite[Part A, equation (4.2.16)]{Ronveaux1995}). In fact, in this case, \eqref{eq:general-expansion-KdS} converges and yields a holomorphic solution in the entire  ellipse
\begin{align*}
\mathscr E':=\lp\{\frac{\lp(\Re z-\frac{z_2+1}{2}\rp)^2}{(\frac{z_2+1}{2})^2}+\frac{(\Im z)^2}{z_2}< 1\rp\}\,,
\end{align*}
which contains $\mathscr E$, see Figure~\ref{fig:ellipses}. By construction, the first and second derivatives of the holomorphic solution are obtained by term-by-term differentiation.

\begin{figure}[htbp]
\centering
\includegraphics[scale=.6]{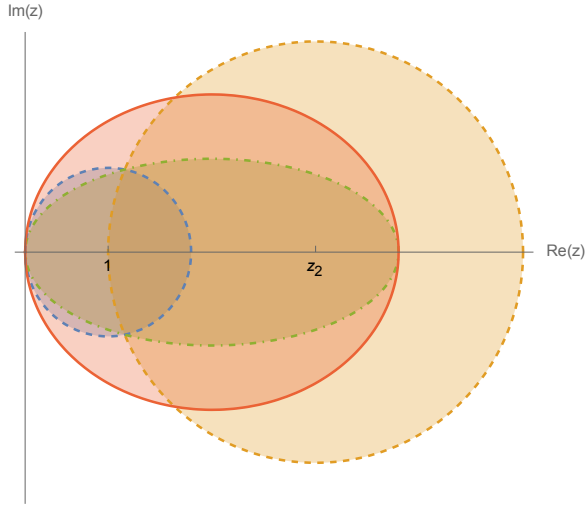}
\caption{Ellipses $\mathscr E$ (green region in the interior of dash-dotted line) and $\mathscr E'$ (red  region in the interior of full line)  in proof of Proposition~\ref{prop:hypergeometric-expansion}. The circles with dashed boundary in blue and orange represent the regions of analyticity of local power series solutions of $\mathscr L g=0$ around $z=1$ and $z=z_2$, respectively.} 
\label{fig:ellipses}
\end{figure}

%

\medskip
\noindent \textit{Step 2: from \eqref{eq:hypergeometric-expansion-continued-fraction} to $g\not\equiv 0$.} If the continued fraction equation  \eqref{eq:hypergeometric-expansion-continued-fraction} is satisfied, by classical results in the theory of three-term recurrence relations (see \cite[Theorem 1.1]{Gautschi1967}), then there exists a nonzero  minimal solution to the recursion \eqref{eq:general-recursion-KdS}. Thus, there are coefficients $b_n\not\equiv 0$ such that \eqref{eq:minimal-solution-KdS} holds, implying that, through \eqref{eq:general-expansion-KdS}, we may construct a solution, $g\not\equiv 0$, to the ODE \eqref{eq:Heun-ODE} which is holomorphic in $\mathscr E$.

\medskip

\noindent \textit{Step 3: from  $g\not\equiv 0$ to  \eqref{eq:hypergeometric-expansion-continued-fraction}.} It is a classical result, see \cite[Theorem 9.1.1]{Szego1939} and the more recent \cite{Carlson1974}, that a function which is analytic in an interval admits an expansion in Jacobi polynomials (in addition to, of course, power series), with large freedom in the choice of Jacobi polynomial parameters, which converges in an ellipse around the interval of analyticity. It is convenient to introduce a coordinate adapted to the Jacobi polynomials,
\begin{align*}
x:=1-\frac{2(z-1)}{z_2-1}\,;
\end{align*}
we will always assume $x=x(z)$ in what follows. Under the conditions on $m_i$ that we impose, the theory guarantees that 
\begin{align*}
g(z)&=\sum_{n=0}^\infty \frac{2^n}{(n+\omega_H)_n}G^{(n)}(n+\delta,n+\epsilon;-1,+1)P_n^{(\delta-1,\epsilon-1)}(x) \,, 
\end{align*}
converges and defines a holomorphic function in the ellipse $\mathscr E$, yielding a representation of the form \eqref{eq:general-expansion-KdS} for $g$ in $\mathscr E$: by \eqref{eq:basis-hypergeometric-polynomials}
\begin{equation}
g(z)=\sum_{n=0}^\infty b_n f_n(z)\,, \qquad b_n\equiv \frac{2^nn!}{(\delta)_n(n+\omega_H)_n}G^{(n)}(n+\delta,n+\epsilon;-1,+1)\not\equiv 0\,. \label{eq:bn-cn}
\end{equation}
In the previous formulas, the coefficients $G^{(n)}$ are given by the integral
\begin{align*}
G^{(n)}(n+\delta,n+\epsilon;-1,+1):=\frac{n!}{2\pi i}\int_\Gamma g(z) R_{-n-1}(n+\delta,n+\epsilon;x(z)+1,x(z)-1)dz\,,
\end{align*}
where $\Gamma$ is a closed contour in $\mathscr E$ which  contains the interval $z\in[1,z_2]\Leftrightarrow x\in[-1,1]$,  
\begin{align}
R_N(B_1,B_2;y_1,y_2)=\int_0^1[u y_1+(1-u)y_2]^Nu^{B_1-1}\frac{(1-u)^{B_2-1}}{\mathbb{B}(B_1,B_2)} du\,, \label{eq:R-function}
\end{align}
and $\mathbb{B}$ denotes the beta function. 

It is worth noting that the functions in \eqref{eq:R-function} are, unsurprisingly, closely related to Jacobi polynomials and verify many of the same identities. Such identities can be used to show that the term-by-term manipulations in step 1 of the series representing $g$ hold. As an example, in view of the decomposition \eqref{eq:Heun-in-terms-of-Lambdas}, let us try to find a representation for $\Lambda_1g$. By assumption, this is a holomorphic function in the ellipse $\mathscr E$ and thus, by the aforementioned classical theory, it can be written as
\begin{align*}
\Lambda_1 g(z)&=\sum_{n=0}^\infty \frac{2^n}{(n+\omega_H)_n}\tilde{G}^{(n)}(n+\delta,n+\epsilon;-1,+1)P_n^{(\delta-1,\epsilon-1)}(x) \,,
\end{align*}
with  coefficients given by  
\begin{align*}
\tilde{G}^{(n)}(n+\delta,n+\epsilon;-1,+1)&:=\frac{n!}{2\pi i}\int_\Gamma \Lambda_1g (z) R_{-n-1}(n+\delta,n+\epsilon;x(z)+1,x(z)-1)dz\\
&=-\frac{n!}{2\pi i}\int_\Gamma g(z) \Lambda_3 R_{-n-1}(n+\delta,n+\epsilon;x(z)+1,x(z)-1)dz\,.
\end{align*}
For the last equality, we have integrated by parts,  keeping in mind that $\Gamma$ lies in the region of holomorphicity of $\Lambda_1g$ and is a closed contour, and we have used the notation
\begin{align*}
\Lambda_3 := \frac{d}{dx}\lp[(1-x^2)\frac{d}{dx}\rp]+\lp(-(1-\delta)(1+x)+(1-\epsilon)(1-x)\rp)\frac{d}{dx}-(1-\omega_H)\,.
\end{align*}
From the identity
\begin{align*}
\Lambda_3 R_{-n-1}(n+\delta,n+\epsilon;x+1,x-1)= -n(n+\omega_H)R_{-n-1}(n+\delta,n+\epsilon;x+1,x-1)\,, 
\end{align*}
which follows from \eqref{eq:R-function}, we now deduce that 
\begin{align*}
g(z)=\sum_{n=0}^\infty b_n f_n(z)\implies \Lambda_1 g(z)=\sum_{n=0}^\infty n(n+\omega_H)b_n f_n(z)\,,
\end{align*}
consistently with \eqref{eq:Lambda-1-formally} after taking $\nu\to n+(\omega_H-1)/2$. 

By arguing similarly for the remaining terms of $\mathscr L$ in \eqref{eq:Heun-in-terms-of-Lambdas}, we deduce that the coefficients $b_n$ must satisfy the recursion relation \eqref{eq:general-recursion-KdS}. Then, by the convergence of the series in \eqref{eq:bn-cn} and the convergence analysis at the end of step 1, $b_n$ also need to satisfy the asymptotic relation \eqref{eq:minimal-solution-KdS}, i.e.\ they must be a minimal solution  to the recursion \eqref{eq:general-recursion-KdS}. Once again, the classical result \cite[Theorem 1.1]{Gautschi1967} then implies that the associated continued fraction equation  \eqref{eq:hypergeometric-expansion-continued-fraction} (see Proposition~\ref{lemma:hidden-symmetry} for details on this relationship) must be satisfied.
\end{proof}

From the point of view of the Teukolsky equation, Proposition~\ref{prop:hypergeometric-expansion} is the statement that its point spectrum, in the space of solutions with appropriate boundary conditions, is invariant under exchanges $m_i\leftrightarrow m_j$. The precise statement we will use is:

\begin{corollary}[Hidden spectral symmetries] \label{lemma:hidden-symmetry-KdS}
Fix $M>0$, $L>0$ and $|a|<L$ satisfying \eqref{eq:subextremal-condition-KdS}. Let $s\in\frac12\mathbb{Z}$, $m-s\in\mathbb{Z}$, $\lambdabar\in\mathbb{C}$ and $\omega\in\mathbb{C}$ satisfy $\Im\omega\geq 0$. If $\Re\omega\neq m\upomega_1,m\upomega_2$, the following are equivalent:
\begin{enumerate}[label=(\roman*)]
\item  (original case and $m_1\leftrightarrow m_2$ symmetry) there is a nontrivial solution to the radial ODE~\eqref{eq:radial-ODE-alpha-KdS} which is ingoing at $\mc{H}^+$  and outgoing at $\mc{H}^+_c$, i.e.\ which satisfies the boundary conditions
\begin{align*}
&\smlambdaXi{R}{s}(z)(z-1)^{-\eta_1+\frac{s-1}{2}} \text{~smooth at $z=1$}\,, \\ 
&\smlambdaXi{R}{s}(z)(z-z_2)^{\eta_2-\frac{s+1}{2}} \text{~smooth at $z=z_2$}\,;
\end{align*}
\item \label{it:case-m1-to-m3-KdS}($m_1\leftrightarrow m_3$ symmetry)  there is a nontrivial solution to the radial ODE equal to~\eqref{eq:radial-ODE-alpha-KdS} but where $s$ is replaced by $-s$, which satisfies the boundary conditions
\begin{align*}
&\smlambdaXi{ R}{s}(z)(z-1)^{-\eta_1-\frac{s+1}{2}}\text{~smooth at $z=1$}\,, \\ 
&\smlambdaXi{ R}{s}(z)(z-z_2)^{\eta_2+\frac{s-1}{2}} \text{~smooth at $z=z_2$}\,;
\end{align*}
\item  \label{it:case-m2-to-m3-KdS}($m_2\leftrightarrow m_3$ symmetry) there is a nontrivial solution to the radial ODE 
\begin{equation}\label{eq:tilded-ODE-R-KdS}
\begin{split}
&z(z-1)(z-z_2)\frac{d^2}{dz^2} \smlambdaXi{\tilde R}{s}(z)
+\frac{L^2(\lambdabar-2a\Xi^2m\omega+a^2\Xi^2\omega^2)}{(r_1-r_0)(r_2-r_3)} \smlambdaXi{\tilde R}{s}(z)\\
&\quad+\frac{1}{(r_1-r_0)(r_2-r_3)z}\lp[s^2(r_1+r_2)^2\lp(z-\frac{(r_2-r_0)(r_1-r_3)}{r_1^2+2r_2 r_1+r_2^2}\rp)+\frac12 z(r_0+r_1)^2\rp]\smlambda{\tilde R}{s}(z)\\
&\quad+\frac{z}{(z_2-z)}\lp[(z-1)\eta_2^2+2(z_2-1)\eta_0\eta_2 +\frac{(z_2-1)^2}{(z-1)}\eta_0^2-\frac{1}{4}(z-1)\rp] \smlambdaXi{\tilde R}{s}(z)\\
&\quad+\frac{z}{(z-1)}\lp[2(z_2-1)\eta_0\eta_1+(z_2-z)\eta_1^2-\frac{(z_2-z)}{4z^2}-2(z-1)\eta_1\eta_2\rp] \smlambdaXi{\tilde R}{s}(z)=0\,.
\end{split}
\end{equation}
satisfying the boundary conditions
\begin{equation}\label{eq:case-m2-to-m3-bdry-conditions-KdS}
\begin{split}
&\smlambdaXi{\tilde R}{s}(z)(z-1)^{-\eta_0-\eta_1-\frac{1}{2}} \text{~smooth at $z=1$}\,, \\ 
&\smlambdaXi{\tilde R}{s}(z)(z-z_2)^{\eta_0+\eta_2-\frac{1}{2}} \text{~smooth at $z=z_2$}\,;
\end{split}
\end{equation}
\end{enumerate}
If $\Re\omega=m\upomega_1$, then (i) and (iii) are equivalent if $s\leq 0$ and, if $s\geq 0$, (ii) and (iii) are equivalent. In turn, if $\Re\omega=m\upomega_2$, then (i) and (iii) are equivalent if $s\geq 0$ and, if $s\leq 0$, (ii) and (iii) are equivalent.
\end{corollary}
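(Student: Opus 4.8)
The plan is to reduce everything to the permutation symmetry of the spectral condition \eqref{eq:hypergeometric-expansion-continued-fraction} in Proposition~\ref{prop:hypergeometric-expansion}, exactly as in the Kerr case. First I would invoke Lemma~\ref{lemma:SQCD-rescaling-KdS}: under the Möbius change of variables \eqref{eq:mobius-transformation}, which sends $r=r_1\mapsto z=1$ and $r=r_2\mapsto z=z_2$, the radial ODE \eqref{eq:radial-ODE-alpha-KdS} becomes \eqref{eq:SQCD-ODE-KdS} with masses \eqref{eq:SQCD-masses-KdS}. Since the map is Möbius and regular at $r_1,r_2$, one has $(z-1)\propto(r-r_1)$ and $(z-z_2)\propto(r-r_2)$ to leading order, so holomorphicity of a weighted solution transfers verbatim between the two variables. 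Using the four exponent identities recorded in Lemma~\ref{lemma:SQCD-rescaling-KdS}, I would then check that the ingoing/outgoing conditions of case (i) (Definition~\ref{def:bdry-conditions-KdS}) are exactly those defining $R^{E}_{(m_1,m_2,m_3,m_4)}$ in Notation~\ref{notation:KdS}, with the pair $\{m_1,m_2\}$ seated at $z=1$ and $\{m_3,m_4\}$ at $z=z_2$.

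Next I would identify cases (ii) and (iii) with the other two inequivalent seatings of the four masses. Because only $m_1,m_3$ depend on $s$ and the map $s\mapsto -s$ interchanges them while fixing $m_2,m_4$ (the $\eta_j$ of \eqref{eq:def-ai} being $s$-independent), the $s$-reversed ODE of case (ii) is \eqref{eq:SQCD-ODE-KdS} with the swap $m_1\leftrightarrow m_3$, i.e.\ $R^{E}_{(m_3,m_2,m_1,m_4)}$, and one checks via the same exponent identities that its natural boundary conditions are those stated in (ii). For case (iii), substituting the swap $m_2\leftrightarrow m_3$ into \eqref{eq:SQCD-ODE-KdS} and re-expressing the masses through \eqref{eq:SQCD-masses-KdS} and \eqref{eq:def-ai} yields precisely \eqref{eq:tilded-ODE-R-KdS}, with boundary conditions \eqref{eq:case-m2-to-m3-bdry-conditions-KdS} corresponding to the seating $\{m_1,m_3\}$ at $z=1$ and $\{m_2,m_4\}$ at $z=z_2$; this last identification is a direct but lengthy computation. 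Since the three seatings $\{m_1m_2\mid m_3m_4\}$, $\{m_2m_3\mid m_1m_4\}$, $\{m_1m_3\mid m_2m_4\}$ exhaust the $\tfrac12\binom{4}{2}=3$ partitions of the masses into a pair at $z=1$ and a pair at $z=z_2$ (modulo the trivial $m_1\leftrightarrow m_2$, $m_3\leftrightarrow m_4$ symmetries of \eqref{eq:SQCD-ODE-KdS}), and since \eqref{eq:hypergeometric-expansion-continued-fraction}--\eqref{eq:hypergeometric-expansion-recursion-coefficients} are symmetric in all four masses, Proposition~\ref{prop:hypergeometric-expansion} makes each of (i), (ii), (iii) equivalent to one and the same continued-fraction condition. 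This delivers the three-way equivalence.

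The step that needs care, and which I expect to be the main obstacle, is verifying that the hypotheses of Notation~\ref{notation:KdS} genuinely hold for each seating and tracking how they fail at the superradiant thresholds. Writing $\omega=\Re\omega+i\Im\omega$, one computes the clean formulas $\Im(m_1+m_2)=\Im(m_2+m_3)=(\Re\omega-m\upomega_1)/\kappa_1$ and $\Im(m_3+m_4)=\Im(m_1+m_4)=(\Re\omega-m\upomega_2)/\kappa_2$, together with $\Re(m_1+m_2)=s-\Im\omega/\kappa_1$, $\Re(m_2+m_3)=-s-\Im\omega/\kappa_1$, $\Re(m_3+m_4)=-s-\Im\omega/\kappa_2$, $\Re(m_1+m_4)=s-\Im\omega/\kappa_2$, and $\Re(m_2+m_4)=-\Im\omega(\kappa_0^{-1}+\kappa_2^{-1})\le 0$. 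When $\Re\omega\neq m\upomega_1,m\upomega_2$ the two imaginary parts at each horizon are nonzero, so all three seatings are admissible and the equivalence is unconditional, the isolated frequencies at which the remaining combination $m_1+m_3$ has vanishing imaginary part being handled by working directly with the symmetric condition \eqref{eq:hypergeometric-expansion-continued-fraction} or by continuity in $\omega$. At $\Re\omega=m\upomega_1$ the $z=1$ imaginary parts vanish and admissibility there is decided by the real-part test: $\Re(m_1+m_2)=s-\Im\omega/\kappa_1<1$ holds for $s\le 0$, keeping (i) alongside (iii), whereas $\Re(m_2+m_3)=-s-\Im\omega/\kappa_1<1$ holds for $s\ge 0$, keeping (ii) alongside (iii), while the seating of (iii) survives throughout because $\Im(m_1+m_3)$ stays nonzero and $\Re(m_2+m_4)\le 0$. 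The case $\Re\omega=m\upomega_2$ is identical after exchanging the roles of $z=1$ and $z=z_2$ (equivalently $s\leftrightarrow -s$), reproducing the stated $s$-dependent dichotomy. The bulk of the work is thus the sign bookkeeping in these real and imaginary parts, plus the one-off verification that \eqref{eq:tilded-ODE-R-KdS} is indeed the $m_2\leftrightarrow m_3$ transform of \eqref{eq:SQCD-ODE-KdS}.
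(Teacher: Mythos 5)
Your proposal is correct and takes essentially the same route as the paper, whose proof is precisely the one-line reduction you describe: apply Lemma~\ref{lemma:SQCD-rescaling-KdS}, identify (i), (ii), (iii) with the three seatings of the masses at $z=1$ and $z=z_2$, and invoke the permutation invariance of the continued fraction condition of Proposition~\ref{prop:hypergeometric-expansion}; indeed you supply more of the admissibility bookkeeping for Notation~\ref{notation:KdS} than the paper makes explicit. The one soft spot---your claim that $\Im(m_1+m_3)$ stays nonzero at $\Re\omega=m\upomega_1$, which fails when $m=0$ or $a=0$---is harmless, since in that case $\Re(m_1+m_3)=\Im\omega\,(\kappa_0^{-1}-\kappa_1^{-1})\leq 0<1$ (one checks $\kappa_0\geq\kappa_1$ throughout the subextremal range), so the seating of case (iii) remains admissible without any appeal to continuity in $\omega$.
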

\begin{proof} Let us note that, since
$$\sum_{i}m_i=2(\eta_2-\eta_1)=i\lp(\frac{\Re\omega-m\upomega_1}{\kappa_1}+\frac{\Re\omega-m\upomega_2}{\kappa_2}\rp)-\Im\omega\lp(\frac{1}{\kappa_1}+\frac{1}{\kappa_2}\rp)\,,$$
the conditions of Proposition~\ref{prop:hypergeometric-expansion} clearly hold under our assumptions.
Thus, the conclusion follows from Proposition~\ref{prop:hypergeometric-expansion} after applying Lemma~\ref{lemma:SQCD-rescaling-KdS} and rewriting the ODE \eqref{eq:tilded-ODE-R-KdS} and the boundary conditions in terms of $(m_1,m_2,m_3,m_4,E)$. 
\end{proof}

We emphasize that Proposition~\ref{prop:hypergeometric-expansion} has not, to our knowledge, appeared elsewhere in the literature even as a conjecture. Remark \ref{rmk:spin-reversal-symmetry} for the Kerr case also applies in the Kerr-de Sitter case considered above, \textit{mutatis mutandis}.


\subsection{A partial mode stability result}
\label{sec:proof-KdS}

In this section, we give a proof of Theorem~\ref{thm:partial-mode-stability-KdS-intro}. To be precise, we prove
\begin{theorem}\label{thm:partial-mode-stability-KdS} Fix $M>0$, $|a|< 3/\Lambda$  and $\Lambda>0$ satisfying \eqref{eq:subextremal-condition-KdS}, $\frac12s\in\mathbb{Z}$, $m-s\in\mathbb{Z}$ and $(\omega,\lambdabar)$ such that one of the following holds:
\begin{itemize}
\item $\Im\omega>0$, $\Im(\lambdabar\, \overline{\omega})\leq 0$  and  $|\omega|\not\in|m| \lp(0,\frac{\upomega_2/\kappa_2+\upomega_0/\kappa_0}{1/\kappa_2+1/{\kappa_0}}\rp)\subsetneq |m|(0,\upomega_1)$;
\item $\omega\in\mathbb{R}$, $\lambdabar\in\mathbb{R}$ and, if $|s|\neq \frac12,\frac32$, additionally $m=0$ or $\frac{\omega}{m}\not\in \lp(\frac{\upomega_1/\kappa_1-\upomega_0/\kappa_0}{1/{\kappa_1}-1/{\kappa_0}},\frac{\upomega_2/\kappa_2+\upomega_0/\kappa_0}{1/\kappa_2+1/{\kappa_0}}\rp)\subsetneq (\upomega_2,\upomega_1)$.
\end{itemize}
If $\smlambda{\upalpha}{s}(r)$ is a solution to \eqref{eq:radial-ODE-alpha-KdS} with respect to these parameters which is ingoing at $\mc H^+$ and outgoing at $\mc H^+_c$, then $\smlambda{\upalpha}{s}\equiv 0$.
\end{theorem}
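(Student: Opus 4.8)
The plan is to follow the proof of Theorem~\ref{thm:mode-stability-Kerr}, the only difference being that the resulting energy estimate will no longer be unconditionally favorable, and the frequency hypotheses in the statement are exactly what is needed to restore its sign. First I would use Corollary~\ref{lemma:hidden-symmetry-KdS} to replace a solution $\smlambdaXi{R}{s}$ of \eqref{eq:radial-ODE-alpha-KdS} that is ingoing at $\mc{H}^+$ and outgoing at $\mc{H}^+_c$ by a solution $\smlambdaXi{\tilde R}{s}$ of the $(m_2\leftrightarrow m_3)$-transformed ODE \eqref{eq:tilded-ODE-R-KdS} subject to \eqref{eq:case-m2-to-m3-bdry-conditions-KdS}, invoking case (i) when $s\leq 0$ and case (ii) when $s\geq 0$, together with the stated endpoint variants when $\Re\omega = m\upomega_1$ or $m\upomega_2$; it then suffices to prove $\smlambdaXi{\tilde R}{s}\equiv 0$. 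Transporting \eqref{eq:tilded-ODE-R-KdS} back to the physical coordinate $r$ and introducing a rescaling $\tilde u$ analogous to \eqref{eq:Whiting-tilde-u}, I obtain a Schr\"odinger-type equation $\tilde u'' + \tilde V\tilde u = 0$, where $'$ is the derivative in the tortoise coordinate $r^*$ of Section~\ref{sec:geometry-KdS}, which now sweeps the whole real line as $r$ ranges over $(r_1,r_2)$. Here $\tilde V$ is the Kerr-de Sitter analogue of the Whiting potential \eqref{eq:Whiting-ODE-potential}, of the schematic form $\frac{\Delta}{\Xi^2(r^2+a^2)^2}[-\lambdabar + P(\omega,m;r)]$ with $P$ quadratic in the frequencies and with explicit real coefficients in $r$.

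Next I would run the microlocal $T$-energy current as in \eqref{eq:QT-mode-stability}: with $Q^T[\tilde u] := \Im(\overline{\omega\tilde u}\,\tilde u')$ one has $-(Q^T[\tilde u])' = \Im\omega\,|\tilde u'|^2 + \Im(\overline\omega\,\tilde V)|\tilde u|^2$, to be integrated over $r^*\in(-\infty,\infty)$. The analysis of $\tilde V$ splits into three contributions. The coefficient of $\lambdabar$ is non-positive, so paired with the hypothesis $\Im(\lambdabar\,\overline\omega)\leq 0$ it is favorable, exactly as for Kerr; the frequency-independent part of $P$ (carrying the $s^2$ and geometric terms) I would show is non-positive throughout $(r_1,r_2)$ using subextremality \eqref{eq:subextremal-condition-KdS}. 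The genuinely new feature is the frequency-quadratic part of $P$: besides a pure-$\omega^2$ piece that is non-negative as in Kerr, it contains the bilinears $\eta_i\eta_j$, i.e.\ products $(\omega - m\upomega_i)(\omega - m\upomega_j)$. Since $\Im[\overline\omega(\omega - m\upomega_i)(\omega - m\upomega_j)] = \Im\omega\,(|\omega|^2 - m^2\upomega_i\upomega_j)$ — the terms linear in $m\omega$ cancelling — for $\Im\omega>0$ the bulk integrand is non-negative precisely when $|\omega|^2 \geq m^2\upomega_i\upomega_j$ for the relevant pairs. The hypothesis $|\omega|\notin|m|(0,\Omega_+)$, where $\Omega_+ := (\upomega_2/\kappa_2 + \upomega_0/\kappa_0)/(1/\kappa_2 + 1/\kappa_0)$, is exactly what makes this hold, forcing $\tilde u\equiv 0$ in the upper half-plane.

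For $\omega\in\mathbb{R}$ the potential $\tilde V$ is real, the bulk term vanishes, and $Q^T[\tilde u]$ is conserved, so only the two horizon boundary terms survive. Reading the exponents in \eqref{eq:case-m2-to-m3-bdry-conditions-KdS} gives purely oscillatory asymptotics $\tilde u\sim e^{ik_1 r^*}$ as $r^*\to-\infty$ and $\tilde u\sim e^{ik_2 r^*}$ as $r^*\to+\infty$, with $k_1\propto(\omega - m\Omega_-)$ and $k_2\propto -(\omega - m\Omega_+)$, where $\Omega_- := (\upomega_1/\kappa_1 - \upomega_0/\kappa_0)/(1/\kappa_1 - 1/\kappa_0)$. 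Conservation forces the two boundary terms, each a fixed sign times $\omega(\omega - m\Omega_\pm)|\tilde u(\pm\infty)|^2$, to be equal; the condition $\omega/m\notin(\Omega_-,\Omega_+)$ is precisely what makes them carry opposite signs, so both must vanish and unique continuation (cf.\ \cite[Lemma 5.1]{Shlapentokh-Rothman2015}, \cite[Lemma 4.1]{TeixeiradaCosta2019}) gives $\tilde u\equiv 0$. For the distinguished spins $|s| = \frac12,\frac32$ I would instead argue on the untransformed equation: half-integer spin fields admit no superradiance on the real axis and, for these two values, are also free of the Teukolsky--Starobinsky non-coercivity analyzed in \cite{CasalsTdC2021}, so the boundary terms are sign-definite for every real $\omega$ and no frequency restriction is needed. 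In every case $\smlambdaXi{\tilde R}{s}\equiv 0$ pulls back through Corollary~\ref{lemma:hidden-symmetry-KdS} to $\smlambdaXi{R}{s}\equiv 0$, i.e.\ $\smlambda{\upalpha}{s}\equiv 0$.

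The hard part is not the energy identity but pinning down the two sign facts with the exact thresholds claimed. Explicitly computing $\tilde V$ from \eqref{eq:tilded-ODE-R-KdS}, proving non-positivity of the $\lambdabar$-coefficient and of the frequency-independent part on all of $(r_1,r_2)$ under \eqref{eq:subextremal-condition-KdS}, and — most delicately — checking that the particular weighted combination of bilinears $\eta_i\eta_j$ occurring in $P$ collapses to the single clean bound $|\omega|\geq|m|\Omega_+$ rather than a messier family of pairwise constraints, is where the real work lies; the identification of the boundary frequencies with $\Omega_-$ and $\Omega_+$ must moreover be carried out consistently with the signs of the surface gravities $\kappa_0,\kappa_1,\kappa_2$. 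Establishing why $|s| = \frac12,\frac32$ are exactly the spins needing no frequency restriction on the real axis, through the sign of the relevant Teukolsky--Starobinsky constant, is the other subtle ingredient.
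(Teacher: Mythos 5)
Your proposal is correct and follows essentially the same route as the paper's proof: pass to the $m_2\leftrightarrow m_3$-transformed ODE via Corollary~\ref{lemma:hidden-symmetry-KdS}, run the $Q^T$ current so that the upper-half-plane threshold comes from the sign of the $\hat\eta_i\hat\eta_j$ bilinears together with non-positivity of the $\lambdabar$-coefficient and the frequency-independent part, obtain the real-axis threshold from the conserved boundary fluxes plus unique continuation, and handle $|s|=\frac12,\frac32$ separately on the untransformed equation via the energy identity and Teukolsky--Starobinsky considerations. The only cosmetic difference is that you transport the transformed ODE back to the physical tortoise coordinate $r^*$, whereas the paper works directly in the coordinate $z^*$ adapted to the M\"obius variable $z$, which keeps the potential \eqref{eq:KdS-tilded-potential} polynomial in $z$ and makes the sign verifications you flag as the hard part somewhat cleaner.
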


\begin{remark} Note that, by Lemma~\ref{lemma:angular-eigenvalues-KdS}, it is clear that the conditions on $\lambdabar$ which we impose here are verified by the angular eigenvalues associated to mode solutions. Hence, Theorem~\ref{thm:partial-mode-stability-KdS} implies, and is stronger than, Theorem~\ref{thm:partial-mode-stability-KdS-intro}.
\end{remark}

\begin{proof}[Proof of Theorem~\ref{thm:partial-mode-stability-KdS}]  The proof follows two steps. First, we analyze the original radial ODE~\eqref{eq:radial-ODE-alpha-KdS} and identify superradiant frequencies, for which the solutions considered in the statement may exist. Then, we attempt to rule out some of these solutions by making use of the hidden symmetries uncovered by Proposition~\ref{prop:hypergeometric-expansion} and Corollary~\ref{lemma:hidden-symmetry-KdS}.

\medskip

\noindent \textit{Step 1: superradiance.} Let $\smlambda{\upalpha}{s}$ be a solution to \eqref{eq:radial-ODE-alpha-KdS-actual-alpha} with outgoing boundary conditions at $\mc{H}^+_c$ and ingoing boundary conditions at $\mc{H}^+$, see Definition~\ref{def:bdry-conditions-KdS}. Let $u(r)=(r^2+a^2)^{1/2}\Delta^{s/2}\smlambda{\upalpha}{s}$, then $u$ solves $u''+Vu=0$, where
\begin{align*}
V&=
\frac{(r^2+a^2)^2\omega^2}{(r^2+a^2)^2}+\frac{2am\omega[\Delta-(r^2+a^2)]}{(r^2+a^2)^2} +\frac{a^2m^2}{(r^2+a^2)^2}-\frac{\Delta (\lambdabar+a^2\Xi^2\omega^2) }{(r^2+a^2)^2\Xi^2}\\
&\qquad -\frac{\Delta}{(r^2+a^2)^4\Xi^2}\lp[a^2\Delta+2Mr(r^2-a^2)\rp]
-\frac{\Delta}{(r^2+a^2)^2\Xi^2}\lp[\lp(\frac{4r^2}{L^2}+\frac14\lp(\frac{d\Delta}{dr}\rp)^2\rp)s^2+\frac{2(\mu-1) r^2}{L^2}\rp]\\
&\qquad-\frac{is}{\Xi(r^2+a^2)} \lp[\lp(\omega-\frac{am}{r^2+a^2}\rp) \frac{d\Delta}{dr} + \frac{4\omega r\Delta}{r^2+a^2}\rp]\,.\numberthis \label{eq:potential-KdS}
\end{align*}

Now consider the following currents associated with the stationary Killing field in Kerr-de Sitter:
\begin{align}
\begin{split}
Q^T[u]&:= \Im \lp(\overline{\omega u }u'\rp)\Rightarrow -(Q^T[u])'=\Im\omega |u'|^2+\Im\lp(\overline{\omega} V\rp)|u|^2\,.
\end{split}\label{eq:QT-superrad-KdS}
\end{align}
Suppose $s=0$ and $\mu=1$. In the potential \eqref{eq:potential-KdS}, both the coefficient of $\lambdabar$  and the $(\omega,m,\lambdabar)$ independent part are non-positive, and the potential is real if $\omega$ is real. Thus, taking into account the boundary conditions on $u$, we deduce the identities:
\begin{gather}\label{eq:superrad-intermediate-KdS}
\begin{split}
\text{if~}\Im\omega>0\,,\quad 0&\geq \Im\omega \int_{-\infty}^\infty \lp( |{u}'|^2+\frac{\lp((r^2+a^2)^2-a^2\Delta\rp)|\omega|^2 -a^2m^2}{(r^2+a^2)^2}|{u}|^2\rp)dr^*\,;\\
\text{if~}\Im\omega=0\,,\quad 0&=\omega(\omega-m\upomega_1)|{u}(-\infty)|^2+ \omega(\omega-m\upomega_2)|{u}(+\infty)|^2\,.
\end{split}
\end{gather}

If $\Im\omega>0$, unless the superradiant condition
\begin{align*}
|\omega|^2<m^2\upomega_1^2 \Leftrightarrow
0<
(\Im\omega)^2< m^2\upomega_1^2-(\Re\omega)^2\,,
\end{align*}
holds, we may infer directly from \eqref{eq:superrad-intermediate-KdS} that $u\equiv 0$, which establishes the result we seek for $s=0$.

If $\omega\in\mathbb{R}$ we cannot argue just from \eqref{eq:superrad-intermediate-KdS}. As in the Kerr case, we need to  appeal to  unique continuation for ODEs such as \eqref{eq:radial-ODE-alpha-KdS-actual-alpha} to deduce that, unless the superradiant condition
\begin{align}
m\neq 0\,, \quad \upomega_2<\frac{\omega}{m}<\upomega_1\,,\label{eq:superradiant-KdS}
\end{align}
holds, we may infer that $u\equiv 0$, thus concluding the proof for $s=0$. In fact, the conclusion can be shown to hold more generally for $s\in\mathbb{Z}_{\leq 2}$ by appealing to the Teukolsky--Starobinsky identities, see \cite{Tachizawa1993}. By the same method, one may deduce that if $|s|\leq 2$ is half-integer, the energy identity implies that in fact $u\equiv 0$ holds independently of $\omega$. However, for if $|s|>2$, for integer and half-integer spins alike, the energy identity may fail to be coercive even for frequencies not in \eqref{eq:superradiant-KdS}, as our previous work \cite{CasalsTdC2021} suggests. 

\medskip
\noindent \textit{Step 2: the $m_2\leftrightarrow m_3$ symmetry.} Let $\smlambda{\tilde R}{s}(z)$  be a solution to \eqref{eq:tilded-ODE-R-KdS} with boundary conditions given by \eqref{eq:case-m2-to-m3-bdry-conditions-KdS}, and define $\tilde{u}:=[z(z-1)(z_2-z)]^{-1/2}\smlambda{\tilde R}{s}(z)$. Exceptionally in this step, we take prime to denote a derivative with respect to $z^*$, a coordinate defined in terms of $z$ through
\begin{align*}
dz^*=\frac{1}{z(z-1)(z_2-z)}dz\implies z^*(z=1)=-\infty\,, \quad z^*(z=z_2)=+\infty\,,
\end{align*}
together with a choice of initial condition. With this notation, $\tilde{u}$ solves $\tilde u''+\tilde V \tilde u=0$, where
\begin{align} \label{eq:KdS-tilded-potential}
\begin{split}
\tilde V&=-\frac{z(z-1)(z_2-z)L^2(\lambdabar+\Xi^2a^2\omega^2-2a\Xi m\omega)}{(r_1-r_0)(r_2-r_3)}- z(z-1)(z_2-z)\lp(z-\frac12\rp) \\
&\qquad-(z-1)(z_2-z)\lp[\frac{s^2(r_1+r_2)^2}{(r_1-r_0)(r_2-r_3)}\lp(z-\frac{(r_2-r_0)(r_1-r_3)}{r_1^2+2r_2 r_1+r_2^2}\rp)+\frac{z(r_0+r_1)^2}{2(r_1-r_0)(r_2-r_3)}\rp]\\
&\qquad-z^2\lp[\eta_2^2(z-1)^2+2(z_2-1)\eta_0\eta_2(z-1) +(z_2-1)^2\eta_0^2\rp]\\
&\qquad-z^2\lp[2(z_2-1)\eta_0\eta_1(z_2-z)+\eta_1^2(z_2-z)^2-2\eta_1\eta_2(z-1)(z_2-z)\rp]\,.
\end{split}
\end{align}
%
%
The microlocal energy current associated with the stationary Killing field in Kerr-de Sitter gives the following identity for $\tilde{u}$:
\begin{align}
\begin{split}
Q^T[\tilde u]&:= \Im \lp(\overline{\omega \tilde u }\tilde u'\rp)\Rightarrow -(Q^T[\tilde u])'=\Im\omega |\tilde u'|^2+\Im\lp(\overline{\omega} \tilde V\rp)|\tilde u|^2\,.
\end{split}\label{eq:QT-mode-stability-KdS}
\end{align}

When $\Im\omega>0$, by the boundary conditions on $\tilde u$, we have 
\begin{align*}
Q^T[\tilde u](-\infty)=Q^T[\tilde u](+\infty)=0\,.
\end{align*}
Furthermore, note that both the $\lambdabar$ coefficient  and the $(\omega,m,\lambdabar)$-independent part of \eqref{eq:KdS-tilded-potential} are non-positive: for the former, this is trivial and for the latter the only non-obvious component can be simplified by Vietà's formula $r_0+r_1+r_2+r_3=0$ through
\begin{align*}
z-\frac{(r_2-r_0)(r_1-r_3)}{r_1^2+2r_2 r_1+r_2^2}=z-1+\frac{(r_1+r_0)^2}{(r_1+r_2)^2}\geq 0\quad \forall\,z\in[1,z_2].
\end{align*}
Thus, we obtain
\begin{align}
\label{eq:ImV-KdS}
\begin{split}
\frac{\Im(\overline{\omega}\tilde V)}{z}& \geq z\Im\lp\{\lp[(\eta_0+\eta_1)(z_2-z)+(\eta_0+\eta_2)(z-1)\rp]^2\overline{\omega}\rp\} \\
&\qquad + (z-1)(z_2-z)\Im\omega \lp[\frac{|\omega|^2-m^2\upomega_1\upomega_2}{\kappa_1\kappa_2}-\frac{L^2\Xi^2a^2}{(r_1-r_0)(r_2-r_3)}|\omega|^2\rp].
\end{split}
\end{align}
It is easy to see that the first term is smallest at $z=z_2$ and that it is non-negative unless we have 
\begin{align} \label{eq:KdS-partial-upper-half}
\frac{|\omega|^2}{m^2}<\lp(\frac{\upomega_2/\kappa_2+\upomega_0/\kappa_0}{1/\kappa_2+1/{\kappa_0}}\rp)^2\,.
\end{align}
This condition in fact ensures that the second term is also non-negative.
Hence, $\Im(\tilde V \overline{\omega})\geq 0$ for $z\in[1,z_2]$ in the full subextremal range of parameters if \eqref{eq:KdS-partial-upper-half} does not hold. In this case, we deduce directly from \eqref{eq:QT-mode-stability-KdS} that $\tilde{u}=0$ and then, using Corollary~\ref{lemma:hidden-symmetry-KdS}(i) if $s\leq 0$ or Corollary~\ref{lemma:hidden-symmetry-KdS}(ii) if $s\geq 0$, we conclude that $\smlambda{R}{s}\equiv 0$.

Now consider $\Im\omega=0$. The potential \eqref{eq:KdS-tilded-potential} is real and so, from \eqref{eq:QT-mode-stability-KdS} and the boundary conditions on $\tilde u$, we obtain
\begin{align*}
&\omega\lp[(\omega-m\upomega_1)-(\omega-m\upomega_0)\frac{\kappa_1}{\kappa_0}\rp]|\tilde{u}(-\infty)|^2+\omega\lp[(\omega-m\upomega_2)+(\omega-m\upomega_0)\frac{\kappa_2}{\kappa_0}\rp]|\tilde{u}(+\infty)|^2=0\,.
\end{align*}
(Note that the factors of $\omega$ outside the square brackets can be replaced by any linear combination of $\omega$ and $m$, and so we need not impose $\omega\neq 0$ in what follows.) Thus, unless we have
\begin{align}\label{eq:KdS-partial-real}
m\neq 0\,, \quad \frac{\upomega_1/\kappa_1-\upomega_0/\kappa_0}{1/{\kappa_1}-1/{\kappa_0}}<\frac{\omega}{m}<\frac{\upomega_2/\kappa_2+\upomega_0/\kappa_0}{1/\kappa_2+1/{\kappa_0}}\,, 
\end{align}
we may infer, as before, by a unique continuation argument, that $\tilde{u}\equiv 0$.  As in the case $\Im\omega>0$, it then follows from  Corollary~\ref{lemma:hidden-symmetry-KdS} that $\smlambda{R}{s}\equiv 0$.

Finally, we use the Vietà formula $\sum_{i\neq j}r_ir_j=a^2-L^2$ to rewrite the factors appearing in conditions \eqref{eq:KdS-partial-upper-half} and \eqref{eq:KdS-partial-real}:
\begin{align*}
\frac{\upomega_1/\kappa_1-\upomega_0/\kappa_0}{1/{\kappa_1}-1/{\kappa_0}} &= \frac{2a}{L^2\Xi-(r_0+r_1)^2}=\upomega_2+\frac{2a\kappa_2}{(r_2-r_3)}\lp(1-\frac{(r_0+r_1)^2}{L^2\Xi}\rp)^{-1}> \upomega_2\,,\\
\frac{\upomega_2/\kappa_2+\upomega_0/\kappa_0}{1/\kappa_2+1/{\kappa_0}} &=
\frac{2a}{L^2\Xi-(r_0+r_2)^2}=\upomega_1-\frac{2a\kappa_1}{(r_1-r_3)}\lp(1-\frac{(r_0+r_2)^2}{L^2\Xi}\rp)^{-1}< \upomega_1\,,
\end{align*}
as long as $a\neq 0$, since $(r_0+r_1)^2,(r_0+r_2)^2<L^2\Xi$.
\end{proof}


\subsection{Epilogue: mass symmetries within the MST method}

In this section, we provide an alternative proof of Corollary~\ref{lemma:hidden-symmetry-KdS} based not on the global hypergeometric expansions of Proposition~\ref{prop:hypergeometric-expansion} but on the matching of local hypergeometric expansions. In the study of quasinormal modes on Kerr-de Sitter, this method was first introduced by Suzuki, Tagasuki and Umetsu \cite{Suzuki1999,Suzuki2000},  but it is known as the MST method after the work \cite{Mano1996} in Kerr. 

Let us fix $M>0$, $|a|<M$, $s\in\frac12\mathbb{Z}$, $m-s\in\mathbb{Z}$, $\omega\in\mathbb{C}\backslash\{0\}$ with $\Re\omega\geq 0$ and $\Im\omega\geq 0$, and $\lambdabar\in\mathbb{C}$, assuming additionally that $s\leq 0$ if $\Re\omega=m\upomega_+$. To aid the reader, we write the MST quantities $\gamma$, $\delta$, $\epsilon$, $\omega_H$ and $v$ in \cite{Suzuki1999,Suzuki2000} in terms of the quantities identified in \eqref{eq:SQCD-masses-KdS}:
\begin{gather*}
\delta := 1\pm (m_1-m_2)\,, \quad \gamma := 1-m_1-m_2\,, \quad \epsilon:=1-m_3-m_4\,, \qquad \omega_H:=1 - m_1 (1 \mp 1) - m_2 (1 \pm 1)\,,\\
\sigma_-:=  1-m_1\frac{1\mp 1}{2}-m_2\frac{1\pm 1}{2}-m_3\,, \quad \sigma_+:=1-m_1\frac{1\mp 1}{2}-m_2\frac{1\pm 1}{2}-m_4\,,\numberthis \label{eq:Heun-parameters-MST}\\
v:=-1 + z_2 - 4 E z_2-\lp[(\gamma - \delta) \epsilon + 
   2 \sigma_+ \sigma_- + \delta (1 - \omega_H) + (\omega_H^2-1)(1-z_2)\rp]\,, \quad z_r:=1-x_r=z_2\,.
\end{gather*}
Note that, in the notation of \cite{Suzuki1999}, we have taken the minus sign for the exponent at $x=0$ and the plus sign for the exponent at $x=x_r$, i.e.\ $A_3=A_{3+}$; the latter choice is different from the choice made in \cite{Suzuki2000} but it is consistent with the boundary conditions we consider. We also allow the sign of the exponent at $x=-1$ to be either plus or minus, which corresponds to the two choices in \eqref{eq:Heun-parameters-MST}. Within the MST formalism, existence of a non-trivial solution to \eqref{eq:radial-ODE-alpha-KdS} with outgoing boundary conditions at $\mc{I}^+$ and ingoing boundary conditions at $\mc{H}^+$ is, by \cite[Equation 3.8a]{Suzuki2000} and \cite[Equation 3.29]{Suzuki1999}, equivalent to the condition
\begin{align}
\frac{B}{A}\lp(\frac{C_\nu}{D_\nu}+\frac{C_{-\nu-1}}{D_{-\nu-1}}\rp)=0\,,\label{eq:MST-qnm-condition-KdS}
\end{align}
where we have used the shorthand notation
\begin{gather*}
A:=\prod_{j=1}^4\Gamma(1+m_j)\,, \qquad
B:=z_2^{-m_1(1\mp 1)/4-m_2(1\pm 1)/4}\Gamma(1-m_1-m_2)\Gamma\lp(1 + m_3 - m_1 \frac{1 \mp 1}{2} - m_2 \frac{1 \pm 1}{2}\rp),
\\
C_\nu:=\sum_{n=0}^{\infty}\frac{(-1)^n\Gamma(N+\nu+1)}{n!}\frac{b_n^{\nu}}{\prod_{j=1}^4\Gamma(1+N-m_j)}\,, \qquad
D_\nu:=\sum_{n=-\infty}^0\frac{b_n^{\nu}}{\prod_{j=1}^4\Gamma(1+N+m_j)}\,,
\end{gather*}
writing $N:=n+\nu$ and where the coefficients $b_n^\nu$, obtained from the $a_n^\nu$ coefficients of \cite[Equation 3.1]{Suzuki1999} through $b_n^{\nu}=\Gamma\left(1+m_1+N\right)\Gamma\left(1+m_2+N\right)\Gamma\left(1-m_3+N\right)\Gamma\left(1-m_4+N\right)a_n^{\nu}$, satisfy the following recursive relation, c.f.\ \cite[Equation 3.2]{Suzuki1999}:
\begin{gather*}
\alpha_n^\nu b_{n+1}^{\nu}+\beta_n^{\nu}b_n^{\nu}+\gamma_n^\mu b_{n-1}^{\nu}
=0\,,\\
\alpha_n^\nu :=\frac{1}{2(1+N)(3+2N)}\,, \qquad
\gamma_n^\nu :=\frac{\left(\prod_{j=1}^4(N^2-m_j^2)\right)}{2N(-1+2N)}\,, \qquad \beta_n^\nu :=  \lp(E-\frac14\rp) z_2+\frac14\,.
\end{gather*}
Furthermore, the parameter $\nu$ is the so-called renormalized angular momentum parameter and its value is chosen to ensure that the continued fraction equation
\begin{align}
\frac{\alpha_{n-1}^\nu\gamma_{n}^\nu}{\lp(\beta_n^\nu -\frac{\alpha_n^\nu\gamma_{n+1}^\nu}{\beta_{n+1}^\nu-\cdots}\rp)\lp(\beta_{n-1}^\nu -\frac{\alpha_{n-2}^\nu\gamma_{n-1}^\nu}{\beta_{n-2}^\nu-\cdots}\rp)}=1\label{eq:MST-nu-equation-KdS}
\end{align}
holds for an arbitrary choice of $n\in\mathbb{Z}$, c.f.\ \cite[Equation 3.6]{Suzuki1999}.  Noting that $B\neq 0$, this choice ensures that \eqref{eq:MST-qnm-condition-KdS} is equivalent to 
\begin{align}
\frac{1}{A}\lp(\frac{C_\nu}{D_\nu}+\frac{C_{-\nu-1}}{D_{-\nu-1}}\rp)=0\,. \label{eq:MST-qnm-condition-simplified-KdS}
\end{align}

Clearly, $\alpha_{n}^\nu$, $\beta_{n}^\nu$ and $\gamma_n^\nu$ are all separately invariant under the map $(m_1,m_2,m_3,m_4)\mapsto(m_i,m_j,m_k,m_l)$ with $i\neq j\neq k\neq l$, and so $b_n^\nu$ and $\nu$ must also be preserved. Consequently, the same is true for $C_\nu$ and $D_\nu$; furthermore, $A$ is clearly also preserved. Hence, we conclude  that the condition \eqref{eq:MST-qnm-condition-simplified} is invariant under the map $(m_1,m_2,m_3,m_4)\mapsto(m_i,m_j,m_k,m_l)$ with $i\neq j\neq k\neq l$. By using the fact that \eqref{eq:radial-ODE-alpha-KdS} and the boundary conditions are invariant under taking at once $\Re\omega\mapsto-\Re\omega, m\mapsto -m$ and complex conjugation, we arrive at the same conclusion for $\Re\omega\leq 0$. The statement of Corollary~\ref{lemma:hidden-symmetry-KdS} then follows easily from exploiting these symmetries.


\bibliographystyle{halpha-abbrv-rita}
{\small \bibliography{unpub.bib,../../library.bib}}

\end{document}